\documentclass[12pt]{article}
\usepackage{longtable,supertabular}
\usepackage{amsmath}
\usepackage{amssymb}
\usepackage{latexsym}
\usepackage{a4wide,color}
\usepackage{multirow}
\usepackage{ntheorem}
\usepackage[square,numbers,sort]{natbib}
\usepackage[title]{appendix}
\usepackage{array}
\usepackage{makecell}
\usepackage{bm}
\usepackage{graphicx}
\usepackage{booktabs}
\usepackage{multirow}

\newtheorem{theorem}{Theorem}[section]
\newtheorem{corollary}{Corollary}[section]
\newtheorem{proposition}{Proposition}[section]
\newtheorem{lemma}{Lemma}[section]
\newtheorem{definition}{Definition}[section]
\newtheorem*{proof}{Proof.}

\newtheorem{example}{Example}[section]

\newtheorem{remark}{Remark}[section]

\newcommand{\F}{\mathbb{F}}

\newcommand{\C}{{\mathcal{C}}}

\newcommand{\supp}{\mathrm{supp}}
\title{\bf Generalized Hamming Weights and Subcode Support Weight Distributions of Twisted  Reed-Solomon Codes}
\author{Chao Liu, Hao Chen, Dabin Zheng and Yaqi Chen
	\thanks{Chao Liu and Hao Chen  are with the College of Information Science and
		Technology, Jinan University, Guangzhou, Guangdong Province, 510632, China (e-mail: haochen@jnu.edu.cn, e-mail: chliuu@163.com).   Dabin Zheng is  with the Hubei Laboratory of Applied Mathematics, Faculty of Mathematics, Hubei University, Wuhan 430062, Hubei, China (dzheng@hubu.edu.cn). Yaqi Chen is with the College of Cyber Security, Jinan University, Guangzhou, Guangdong Province, 510632, China. (e-mail: chenyq@stu.jnu.edu.cn).
		Corresponding author: Dabin Zheng (dzheng@hubu.edu.cn). This research was supported by NSFC Grant 62032009, NSFC Grant 62272148, NSFC Grant 62171479.
}}

\date{}

\begin{document}
	
	\maketitle
	\begin{abstract}
In this paper, we determine the generalized Hamming weight hierarchies of two classes of twisted Reed--Solomon (TRS) codes. For TRS codes with one twist, we first determine all possible values of
the generalized Hamming weights. We then give a complete characterization of the weight hierarchy in terms of the minimum distance of the dual code. Explicit conditions are derived for the minimum distances of the code and its dual, which make the whole weight hierarchy explicitly computable. As special cases, the obtained results recover several known characterizations for TRS codes to be MDS, NMDS codes, and
$s$-gap AMDS codes. For TRS codes with two twists, we similarly determine the complete generalized Hamming weight hierarchy. In this case, the higher generalized Hamming weights are determined by the first two generalized Hamming weights of the dual code, for which explicit rank conditions are also obtained. In addition, we further determine the subcode support weight distributions of this class of single-twist TRS codes. More precisely, for each
$1 \le r \le k$, we derive an explicit formula for the number of $r$-dimensional subcodes with a prescribed support size, thereby giving a complete description of the support distribution of subcodes.

{\bf Key words.} Twisted Reed-Solomon codes, generalized Hamming weights, weight hierarchy, subcode support weight distributions.
\end{abstract}
	
\newpage

\section{Introduction}

\subsection{Generalized Hamming Weights of Linear Codes}

A linear code $\mathcal{C}$ over $\mathbb{F}_q$ with parameters $[n,k,d] $ is a $k$-dimensional subspace of $\mathbb{F}_q^n$
with length $n$ and minimum Hamming distance $d$. For each $r$ with $1\leq r\leq k$, let $[\mathcal{C}, r]$ be the set of all $r$-dimensional $\mathbb{F}_q$-subspaces of $\mathcal{C}$.
For each $H_r\in [\mathcal{C}, r]$, the {\em support} of $H_r$, denoted by supp($H_r$), is the set of nonzero coordinates of nonzero vectors in $H_r$, i.e.,
\[
{\rm supp}(H_r)=\left\{ i  : c_i\neq 0\,\, {\rm for \,\, some} \,\,(c_1, c_2,\cdots,c_n)\in H_r \right\}.
\]
The $r$-th {\em generalized Hamming weight} (GHW) of $\C$ is defined by
	$$ d_r(\C)= {\rm min} \left\{ |{\rm supp}(H_r)|\,:\, H_r \in [ \C, r] \right\},$$
where $|{\rm supp}(H_r)|$ denotes the cardinality of the set ${\rm supp}(H_r)$. The set $\left\{ d_1(\C), d_2(\C),\cdots, d_k(\C)\right\}$ is called the {\em weight hierarchy}
of $\C$. Note that $d_1(\C)$ is just the minimum Hamming distance of~$\C$. By definition, 
\begin{equation}\label{eq:dr(C)}
	d_r(\C)=n-{\rm max}\{N(H_r)\,:\,  H_r \in [ \C, r]\}
\end{equation}
where 
$N(H_r)= |\{i\,:\, 1\leq i\leq n,\ c_i=0\  {\rm for}\ \forall \ {\bm c}=(c_1,c_2,\dots,c_n)\in H_r \}|.$

We recall several basic properties of generalized Hamming weights that will be used throughout this paper. 
\begin{lemma}\cite[Corollary 1]{Wei1991}\label{rMDS}
	For an $[n,k]$ linear code $\mathcal{C}$, $d_r(\mathcal{C})$ satisfies
	\[d_r(\mathcal{C})\leq n-k+r.\]
\end{lemma}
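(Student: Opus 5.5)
The plan is to prove the generalized Singleton bound $d_r(\mathcal{C})\le n-k+r$ directly from the reformulation \eqref{eq:dr(C)}. That reformulation says $d_r(\C)=n-\max\{N(H_r)\}$, so it suffices to exhibit one $r$-dimensional subcode $H_r\in[\C,r]$ that vanishes identically on at least $k-r$ coordinates. Equivalently, we want $|{\rm supp}(H_r)|\le n-(k-r)$.

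The construction goes as follows. Fix any set $I\subseteq\{1,\dots,n\}$ of $k-r$ coordinate positions; this is possible since $k-r\le k\le n$. Consider the subspace
\[
\C_I=\{\bm c\in\C : c_i=0 \text{ for all } i\in I\}.
\]
It is the kernel of the linear map $\C\to\F_q^{k-r}$, $\bm c\mapsto (c_i)_{i\in I}$. By rank--nullity, $\dim \C_I\ge k-(k-r)=r$. Choose any $r$-dimensional subspace $H_r\subseteq \C_I$. Every vector of $H_r$ vanishes on $I$, so ${\rm supp}(H_r)\subseteq\{1,\dots,n\}\setminus I$. This gives $|{\rm supp}(H_r)|\le n-k+r$, and taking the minimum over $[\C,r]$ yields $d_r(\C)\le n-k+r$.

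There is no real obstacle here. The only point needing care is the dimension count: imposing $k-r$ linear vanishing conditions on a $k$-dimensional space cuts the dimension by at most $k-r$, regardless of whether the conditions are independent on $\C$. That is exactly what rank--nullity provides.

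An alternative route would be to show $d_r<d_{r+1}$ (strict monotonicity) and iterate down from $d_k=|{\rm supp}(\C)|\le n$. However, strict monotonicity itself needs a shortening argument, so the direct kernel construction above is simpler and is the one I would write.
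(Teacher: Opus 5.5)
Your proof is correct. Shortening $\C$ on any $k-r$ coordinates leaves a subspace of dimension at least $r$ by rank--nullity, and every $r$-dimensional subspace of it has support of size at most $n-k+r$.

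The paper gives no argument of its own; it simply cites \cite[Corollary 1]{Wei1991}. There the bound is derived from strict monotonicity (Lemma~\ref{d1<d2<dots<dr} here) via $d_r(\C)\le d_k(\C)-(k-r)\le n-k+r$. Monotonicity is in turn proved by taking an $(r+1)$-dimensional subcode of minimal support and shortening it at a single coordinate of that support.

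Your construction is the one-shot version of that shortening step. It is self-contained and uses nothing beyond rank--nullity. Wei's route has to establish monotonicity first, but in exchange it yields the stronger structural fact $d_r<d_{r+1}$. The paper needs that fact anyway, for the propagation of the $r$-MDS property and in the proofs of Theorems~\ref{border of dr} and~\ref{thm:higher}.

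The monotone route also gives the slightly sharper bound $d_r(\C)\le|\supp(\C)|-k+r$ for degenerate codes. You recover this as well by choosing $I\subseteq\supp(\C)$, which is possible since $|\supp(\C)|\ge k$.
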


 When $r=1$, this is the Singleton bound. In \cite{Wei1991}, the author gave  the definition on $r$-th rank MDS, or simply called an $r$-MDS code,  i.e.,  an $[n,k]$ linear code is called $r$-th rank MDS if $d_r(\C)=n-k+r$. 

\begin{lemma}\cite[Theorem 1]{Wei1991}\label{d1<d2<dots<dr} 
	For an $[n,k]$ linear code $\mathcal{C}$ with $k>0$, we have
	\[
	1\leq d_1(\mathcal{C})<d_2(\mathcal{C})<\cdots<d_k(\mathcal{C})\leq n.
	\]
\end{lemma}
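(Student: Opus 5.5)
The plan is to prove the strict monotonicity $d_r(\C)<d_{r+1}(\C)$ for $1\le r\le k-1$ directly. The outer bounds follow from the definition. For the lower bound: any $1$-dimensional subspace contains a nonzero vector, so its support is nonempty and $d_1(\C)\ge 1$. For the upper bound: every support is a subset of $\{1,\dots,n\}$, so $d_k(\C)\le n$.

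For the middle inequalities, fix $r$ with $1\le r\le k-1$ and choose $H_{r+1}\in[\C,r+1]$ with $|\supp(H_{r+1})|=d_{r+1}(\C)$. The key step is to construct an $r$-dimensional subspace of $H_{r+1}$ with strictly smaller support. Pick any coordinate $i\in\supp(H_{r+1})$; it exists since $H_{r+1}\neq 0$. Consider the coordinate projection $\pi_i:H_{r+1}\to\F_q$, ${\bm c}\mapsto c_i$. This linear functional is nonzero by the choice of $i$, so it is surjective, and its kernel $H_r:=\ker\pi_i$ has dimension exactly $r$.

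Every vector of $H_r$ vanishes at position $i$, and $H_r\subseteq H_{r+1}$. Hence
\[
\supp(H_r)\subseteq \supp(H_{r+1})\setminus\{i\}.
\]
This gives $d_r(\C)\le |\supp(H_r)|\le d_{r+1}(\C)-1$, which is the required strict inequality. Chaining these inequalities yields the full statement.

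There is no real obstacle in this argument. The only point needing care is that $\ker\pi_i$ has dimension exactly $r$ rather than $r+1$, and this is guaranteed by choosing $i$ inside the support, which makes $\pi_i$ nonzero. As a consistency check, the chain of $k$ strictly increasing integers between $1$ and $n$ is compatible with Lemma~\ref{rMDS}, since $d_r(\C)\le n-k+r$ for each $r$.
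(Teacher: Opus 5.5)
Your proof is correct and is essentially Wei's original argument for \cite[Theorem 1]{Wei1991}: take the kernel of the coordinate functional at some $i$ in the support of an optimal $(r+1)$-dimensional subcode. The paper cites this result rather than reproving it, and your treatment of the endpoint bounds $d_1(\C)\ge 1$ and $d_k(\C)\le n$ is also fine.
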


If an $[n,k]$ linear code $\mathcal{C}$ is $r$-MDS, then it is also  a $(r+i)$-MDS code for $1\leq i\leq k-r$ by Lemma \ref{rMDS} and Lemma \ref{d1<d2<dots<dr}.  We have
\[
d_{r+i}(\mathcal{C})\ge d_r(\mathcal{C})+i= n-k+r+i.
\]
On the other hand, the generalized Singleton bound gives $d_{r+i}(\mathcal{C})\le n-k+r+i$. Hence $d_{r+i}(\mathcal{C})=n-k+r+i$, and therefore $\mathcal{C}$ is $(r+i)$-MDS.

\begin{lemma}\cite[Theorem 3]{Wei1991}\label{dual GHW of C}
	Let $\mathcal{C}$ be an $[n,k]$ linear code. Then
	\[\{d_r(\mathcal{C})\,:\,1\leq r\leq k\}=\{1,2,\dots,n\}\setminus \{n+1-d_r(\mathcal{C}^{\perp})\,:\,1\leq r\leq n-k\}.
	\]
\end{lemma}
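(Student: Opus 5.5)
The plan is to follow Wei's duality argument. The idea is to compare the two sets directly. Since $d_1(\mathcal{C})<\cdots<d_k(\mathcal{C})$ by Lemma \ref{d1<d2<dots<dr}, the first set has exactly $k$ elements. Applying the same lemma to $\mathcal{C}^{\perp}$ (dimension $n-k$), the set $\{n+1-d_r(\mathcal{C}^{\perp})\}$ has exactly $n-k$ distinct elements, all in $\{1,\dots,n\}$. So it suffices to show that the two sets are disjoint. They then partition $\{1,\dots,n\}$, which is the claimed identity.

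The main step is to prove that $d_r(\mathcal{C})\neq n+1-d_s(\mathcal{C}^\perp)$ for all $r,s$. The key tool I will use is the dimension formula for shortened codes. For $I\subseteq\{1,\dots,n\}$, let $\mathcal{C}(I)=\{c\in\mathcal{C}: c_i=0 \text{ for } i\notin I\}$. Then
\[
\dim \mathcal{C}(I) = |I|-(n-k)+\dim \mathcal{C}^{\perp}(\overline{I}),
\]
where $\overline I$ is the complement. This follows by projecting $\mathcal{C}^\perp$ onto the coordinates in $I$: the kernel is $\mathcal{C}^\perp(\overline I)$, and the image is the dual, within $\mathbb{F}_q^{I}$, of the punctured $\mathcal{C}(I)$. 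I will also use that $d_r(\mathcal{C})=\min\{|I|:\dim\mathcal{C}(I)\ge r\}$, which is immediate from the definition.

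Fix $r$ and set $t=d_r(\mathcal{C})$. I will show that $n+1-t\ne d_s(\mathcal{C}^\perp)$ for every $s$, splitting the values $s$ into two ranges.

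\emph{First range.} Take $I$ with $|I|=t$ and $\dim\mathcal{C}(I)\ge r$. The formula gives $\dim\mathcal{C}^\perp(\overline I)\ge r-t+n-k$, and $|\overline I|=n-t$. Hence $d_{s}(\mathcal{C}^\perp)\le n-t$ for $s=n-k+r-t$. By monotonicity, $d_s(\mathcal{C}^\perp)\le n-t<n+1-t$ for all $s\le n-k+r-t$.

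\emph{Second range.} Here I want $d_{n-k+r-t+j}(\mathcal{C}^\perp)\ne n+1-t$ for $j\ge1$. Suppose, for contradiction, that some $J$ with $|J|=n+1-t$ has $\dim\mathcal{C}^\perp(J)\ge n-k+r-t+1$. Apply the formula with $I=\overline J$, so $|I|=t-1$:
\[
\dim\mathcal{C}(I)=(t-1)-(n-k)+\dim\mathcal{C}^\perp(J)\ge r.
\]
Then $d_r(\mathcal{C})\le t-1$, contradicting $t=d_r(\mathcal{C})$. So every $J$ of size $n+1-t$ has $\dim\mathcal{C}^\perp(J)\le n-k+r-t$, and therefore $d_s(\mathcal{C}^\perp)>n+1-t$ for all $s\ge n-k+r-t+1$.

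Combining the two ranges, no $d_s(\mathcal{C}^\perp)$ equals $n+1-t$, which is the required disjointness. The main obstacle is getting the shortened-code dimension identity right; the rest is bookkeeping. The edge cases where $n-k+r-t\le 0$ or $\ge n-k$ are handled automatically, since one of the two ranges is then empty.
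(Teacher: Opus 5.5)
Your proof is correct. The counting reduction to disjointness, the shortened-code identity $\dim\mathcal{C}(I)=|I|-(n-k)+\dim\mathcal{C}^\perp(\overline{I})$, and the split of the indices $s$ at $n-k+r-d_r(\mathcal{C})$ are all sound; Lemma \ref{rMDS} together with $d_r(\mathcal{C})\ge r$ keeps this split index in $[0,n-k]$. The paper gives no proof of its own here and only cites \cite[Theorem 3]{Wei1991}; your argument is essentially Wei's original proof, so there is no different route to compare.
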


\begin{lemma}\cite[Theorem 2]{Wei1991}\label{rank-characterization}
Let $\C$ be an $[n,k]$ linear code and let $H$ be a parity-check matrix of $\C$.
Then $d_r(\C)=\delta$ if and only if the following conditions hold:
\begin{enumerate}
    \item[(1)] any $\delta-1$ columns of $H$ have rank at least $\delta-r$;
    \item[(2)] there exist $\delta$ columns of $H$ with rank $\delta-r$.
\end{enumerate}
\end{lemma}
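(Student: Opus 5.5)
The statement to prove is Lemma \ref{rank-characterization}. The plan is to translate statements about subcodes of $\C$ into statements about columns of $H$, using the identity \eqref{eq:dr(C)} for $d_r(\C)$.

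For a set $S\subseteq\{1,\dots,n\}$, write $H_S$ for the submatrix of $H$ formed by the columns indexed by $S$. Let $\C(S)=\{\bm c\in\C:\supp(\bm c)\subseteq S\}$. A vector supported on $S$ lies in $\C$ exactly when its restriction to $S$ is in the kernel of $H_S$. Hence
\[
\dim \C(S)=|S|-{\rm rank}(H_S).
\]
Moreover, any $r$-dimensional subcode $H_r$ has support contained in a set $S$ of size $|\supp(H_r)|$, and then $H_r\subseteq \C(S)$. Conversely, if $\dim\C(S)\ge r$, then some $r$-dimensional subspace of $\C(S)$ has support inside $S$. Therefore
\[
d_r(\C)=\min\{|S| : |S|-{\rm rank}(H_S)\ge r\}.
\]

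Both directions now follow from this formula together with monotonicity. Adding one column to $S$ raises $|S|-{\rm rank}(H_S)$ by $0$ or $1$, since the rank grows by at most one. Also, $|S|-{\rm rank}(H_S)\ge r$ is equivalent to ${\rm rank}(H_S)\le|S|-r$.

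For the forward direction, assume $d_r(\C)=\delta$.
\begin{enumerate}
\item[(1)] Condition (1) is exactly the statement that no $S$ with $|S|=\delta-1$ satisfies $\dim\C(S)\ge r$. More precisely, sets of size less than $\delta$ have $\dim\C(S)<r$, so ${\rm rank}(H_S)\ge |S|-r+1$. For $|S|=\delta-1$ this gives rank at least $\delta-r$.
\item[(2)] There is an $S$ of size $\delta$ with ${\rm rank}(H_S)\le\delta-r$. Removing any column $j$ from $S$ yields a $(\delta-1)$-set of rank at least $\delta-r$ by (1), so ${\rm rank}(H_S)\ge\delta-r$ and hence equals $\delta-r$.
\end{enumerate}

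For the converse, assume (1) and (2).
\begin{enumerate}
\item[(a)] Condition (2) gives $\dim\C(S)=r$ for some $|S|=\delta$, so $d_r(\C)\le\delta$.
\item[(b)] For $|T|<\delta$, extend $T$ to a $(\delta-1)$-set $T'$. Then $|T'|-{\rm rank}(H_{T'})\le r-1$ by (1). By monotonicity, $|T|-{\rm rank}(H_T)\le r-1$, so no smaller set works and $d_r(\C)\ge\delta$.
\end{enumerate}

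The only delicate point is the reduction $d_r(\C)=\min\{|S|:\dim\C(S)\ge r\}$. Once that is in place, both directions are rank bookkeeping. Care is needed at the edge cases $\delta-1<r$ or $\delta>n$, where condition (1) reads trivially, and these should be checked separately.
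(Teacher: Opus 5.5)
Your proof is correct. The paper itself gives no proof of this lemma and simply quotes it from Wei's paper (Theorem 2). Your argument is essentially Wei's standard one: you identify the codewords supported in $S$ with $\ker H_S$ to get $d_r(\mathcal{C})=\min\{|S|:|S|-\mathrm{rank}\,H_S\ge r\}$, and then use that $|S|-\mathrm{rank}\,H_S$ grows by $0$ or $1$ per added column. The edge cases you flag need no separate check. Each side of the equivalence already forces $r\le\delta\le n$: on the forward side via $r\le d_r(\mathcal{C})\le n$, and on the converse side because the $\delta$ columns in (2) must exist and their rank $\delta-r$ is nonnegative. Your steps then apply verbatim.
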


\subsection{Subcode Support Weight Distributions of Linear Codes}

Let $\C$ be an $[n,k]_q$ linear code. For $1\le r\le k$ and $0\le i\le n$, define
\[
A_i^{(r)}(\C)=\bigl|\{\mathcal{D}\in[\C,r]: |\supp(\mathcal{D})|=i\}\bigr|.
\]
The sequence $\bigl(A_0^{(r)}(\C),A_1^{(r)}(\C),\ldots,A_n^{(r)}(\C)\bigr)$ is called the $r$-subcode support weight distribution
($r$-SSWD) of $\C$. The subcode support weight distribution can be viewed as a
refinement of the generalized Hamming weight. Indeed,
\[
d_r(\C)=\min\bigl\{i:A_i^{(r)}(\C)\ne 0\bigr\}.
\]
Moreover, when $r=1$, every one-dimensional subcode contains $q-1$ nonzero codewords having the same support. Hence $A_i(\C)=(q-1)A_i^{(1)}(\C)$, where $A_i(\C)$ denotes the number of codewords of Hamming weight $i$ in $\C$. Thus, the $1$-SSWD is equivalent to the ordinary weight distribution.

For integers $0\le r\le k$, the $q$-ary Gaussian binomial coefficient is defined by
\[
\begin{bmatrix}
k\\ r
\end{bmatrix}_q=\frac{(q^k-1)(q^k-q)\cdots(q^k-q^{r-1})}{(q^r-1)(q^r-q)\cdots(q^r-q^{r-1})}.
\]
We set $\begin{bmatrix}
k\\ 0
\end{bmatrix}_q=1$ and $\begin{bmatrix}
k\\ r
\end{bmatrix}_q=0$ if $r<0$ or $r>k$. Since an $[n,k]_q$ linear code has exactly $\begin{bmatrix}k\\r\end{bmatrix}_q$ $r$-dimensional
subcodes, we have
\[
\sum_{i=0}^{n}A_i^{(r)}(C)=
\begin{bmatrix}
k\\r
\end{bmatrix}_q.
\]

\subsection{TGRS Codes}
Let \(\mathbb{F}_q[x]\) denote the polynomial ring over the finite field \(\mathbb{F}_q\). Take \(\boldsymbol{\alpha}=(\alpha_1,\dots,\alpha_n)\in\mathbb{F}_q^n\), where \(\alpha_1,\dots,\alpha_n\) are pairwise distinct elements of \(\mathbb{F}_q\), called evaluation points. Let \(\boldsymbol{v}=(v_1,\dots,v_n)\in(\mathbb{F}_q^*)^n\). We  define the evaluation map associated with \(\boldsymbol{\alpha}\) and \(\boldsymbol{v}\) as
\[
ev_{\boldsymbol{\alpha},\boldsymbol{v}}:\mathbb{F}_q[x]\to\mathbb{F}_q^n,\quad
f(x)\mapsto\bigl(v_1f(\alpha_1),\dots,v_nf(\alpha_n)\bigr).
\]
In this context, an \([n,k]\) generalized Reed-Solomon (RS) code, denoted \({\rm GRS}_k(\boldsymbol{\alpha},\boldsymbol{v})\), is the image of the restriction of \(ev_{\boldsymbol{\alpha},\boldsymbol{v}}\) to the set \(\mathbb{F}_q[x]_{<k}\), where \(\mathbb{F}_q[x]_{<k}\) consists of all polynomials of degree at most \(k-1\). When \(\boldsymbol{v}=\boldsymbol{1}\) (the all-one vector), \({\rm GRS}_k(\boldsymbol{\alpha},\boldsymbol{1})\) reduces to the standard RS code. Next, we introduce the definitions of twisted polynomials and TGRS codes.

\begin{definition}{\label{twistpolynomial}}\cite{Beelen2018}
	Let $k$ and $l$ be the positive integers with $1\leq l\leq k$. Let $\boldsymbol{h}=(h_1,h_2,\dots,h_{l})$ and $\boldsymbol{t}=(t_1,t_2,\dots,t_{l})$ be two vectors of length $l$ over ${\mathbb Z}$ such that  $h_i$ and $t_j$ are pairwise distinct with $0\leq h_i\leq k-1$ and $1\leq t_j\leq n-k$. Let $\boldsymbol{\eta}=(\eta_1,\eta_2,\dots,\eta_{l})\in ({\mathbb F_q^*})^{l}$. The set of $(\boldsymbol{t,h,\eta})$-twisted polynomials is defined as
	$$\mathcal{P}=\left\{\sum_{i=0}^{k-1}f_ix^i+\sum_{j=1}^{l}\eta_j f_{h_j}x^{k-1+t_j},\ f_i\in {\mathbb F_q}\right\}.$$
	 We call $\boldsymbol{h}=(h_1,h_2,\dots,h_{l})$ the hook and $\boldsymbol{t}=(t_1,t_2,\dots,t_{l})$ the twist.
	\end{definition}

\begin{definition}\cite{Beelen2018}
Let the symbols be defined as above. Twisted generalized Reed-Solomon (TGRS) code $\C(\boldsymbol{ \alpha, v})$ with length $n$ and dimension $k$ is defined as
\begin{equation}{\label{definition of code C}}
	\C(\boldsymbol{ \alpha, v})=\left\{ev_{\boldsymbol{\alpha,v} }(f(x)), \ f(x)\in \mathcal{P}\right\}.
\end{equation}
\end{definition}

\begin{remark}
	It is obvious that the set $\mathcal{P}$ of twisted polynomials  is a $k$-dimensional ${\mathbb F_q}$-linear subspace of ${\mathbb F_q}[x]$. For any $f=\sum_{i=0}^{k-1}f_ix^i+\sum_{j=1}^{l}\eta_j f_{h_j}x^{k-1+t_j} \in \mathcal{P}$, it is clear that $f$ can be expressed as $f=\sum_{i=0}^{k-1}f_ig_i$, where $f_i \in {\mathbb F}_q$,
	$$g_i=\begin{cases}
		x^i+ \eta_jx^{k-1+t_j}, & i=h_j,\\
	    x^i, & otherwise,
	\end{cases}
	$$ and $g_0,g_1,\dots,g_{k-1}$ are linearly independent over ${\mathbb F}_q$.
\end{remark}
Without loss of generality, we assume that $h_1<h_2< \cdots< h_{\ell}, \ t_1<t_2< \cdots <t_{\ell}.$   Obviously, the generator matrix of  code $\C(\bm \alpha,\bm v)$ is 
\begin{equation}
 G=\left( \begin{array}{cccc}
v_1 & v_2 & \cdots &  v_n \\
v_1\alpha_1 &  v_2\alpha_2 & \cdots & v_n\alpha_n \\
 \vdots & \vdots &  & \vdots\\
 v_1\alpha_1^{h_1-1} & v_2\alpha_2^{h_1-1} & \cdots & v_n\alpha_n^{h_1-1}\\
 v_1(\alpha_1^{h_1}+\eta_1 \alpha_1^{k-1+t_1}) & v_2(\alpha_2^{h_1}+\eta_1 \alpha_2^{k-1+t_1}) & \cdots & v_n(\alpha_n^{h_1}+\eta_1 \alpha_n^{k-1+t_1})\\
v_1\alpha_1^{h_1+1} & v_2\alpha_2^{h_1+1} & \cdots & v_n\alpha_n^{h_1+1}\\
\vdots & \vdots & & \vdots \\
v_1\alpha_1^{h_{\ell}-1} & v_2\alpha_2^{h_{\ell}-1} & \cdots & v_n\alpha_n^{h_{\ell}-1} \\
v_1(\alpha_1^{h_{\ell}}+\eta_{\ell}\alpha_1^{k-1+t_{\ell}}) & v_2(\alpha_2^{h_{\ell}}+\eta_{\ell}\alpha_2^{k-1+t_{\ell}}) & \cdots & v_n(\alpha_n^{h_{\ell}}+\eta_{\ell}\alpha_n^{k-1+t_{\ell}}) \\
v_1\alpha_1^{h_{\ell}+1} & v_2\alpha_2^{h_{\ell}+1}
 & \cdots & v_n\alpha_n^{h_{\ell}+1} \\
\vdots & \vdots & & \vdots \\
v_1\alpha_1^{k-1} & v_2\alpha_2^{k-1} & \cdots & v_n\alpha_n^{k-1}
  \end{array}
\right)
\end{equation}
\begin{remark}
Since all $v_i$ are nonzero for $i\in [n]$, $C(\boldsymbol{\alpha},\boldsymbol{v})$ is monomially equivalent to $C(\boldsymbol{\alpha},\boldsymbol{1})$. Such an equivalence preserves
the generalized Hamming weights and subcode support sizes. Therefore,
we may assume that $\boldsymbol{v}=\boldsymbol{1}$ throughout the
remainder of the paper.
\end{remark}

\subsection{Related Works and Motivation}

Generalized Hamming weights  are a natural generalization of the minimum Hamming distance. Since the systematic development of GHWs by Wei~\cite{Wei1991}, they have found applications
in wiretap channels of type II, resilient functions, trellis and branch complexity, code shortening, and erasure list decoding \cite{Wei1991,ChorEtAl1985,Forney1994,Guruswami2003}. The generalized Hamming weights or complete weight hierarchies have been
determined or investigated for many important classes of linear codes, including Hamming codes~\cite{Wei1991}, Reed-Muller
codes~\cite{HeijnenPellikaan1998}, Kasami
codes~\cite{HellesethKumar1995}, Melas and dual Melas
codes~\cite{GeerVlugt1994Melas}, BCH codes and their
duals~\cite{GeerVlugt1994BCH,DuursmaEtAl1996}, trace
codes~\cite{StichtenothVoss1994}, cyclic
codes~\cite{FengTzengWei1992,JanwaLal1997}, algebraic geometry codes~\cite{TsfasmanVladut1995}, and codes constructed from quadratic
forms~\cite{LiuZhengWang2023}.  Moreover, the relations between generalized Hamming weights and other
Hamming-type metrics have also attracted considerable attention. In particular, Shi et al.~\cite{ShiZhuHelleseth2023} investigated the connections among the  Hamming metric, the $b$-symbol metric, and the $r$-th generalized Hamming metric. Nevertheless, determining the complete weight hierarchy remains difficult for many families of linear codes.

The subcode support weight distribution (SSWD) provides a fine description of the support structure of a linear code by counting $r$-dimensional subcodes with each prescribed support size. In
particular, the $1$-SSWD is equivalent to the ordinary weight distribution , while the minimum support size appearing in the $r$-SSWD is the $r$-th generalized Hamming weight. Helleseth, Kl{\o}ve, and Mykkeltveit~\cite{HellesethKloveMykkeltveit1977} established a connection between subcode support weight distributions and the weight distributions of lifted codes, and Kl{\o}ve~\cite{Klove1992}
and \"Ozen and Tekin~\cite{OzenTekin2013} further studied structural relations and moments of support weight distributions. The SSWDs of several classes of linear codes have subsequently been determined,
including optimal codes constructed from down-sets
\cite{LuoLiu2023} and projective Solomon--Stiffler codes \cite{ShiLiHelleseth2024}. More recently, Pan et al.~\cite{PanChenLiuLiu2025} constructed several
families of distance-optimal few-SSW $r$-Griesmer codes and determined their subcode support weight distributions. Xie et al.~\cite{XieChenDingLi2026} established formulas relating the SSWDs of
projective linear codes to those of their simplex complementary codes and determined the SSWDs of several families of optimal codes. Pan et
al.~\cite{PanChenLiuZhangChen2026} further generalized the Pless power moments to subcode support weight distributions and applied them to
determine the SSWDs of linear codes with few subcode support weights.

Reed-Solomon  codes form one of the most important families of MDS codes. As a generalization of RS codes, twisted Reed-Solomon codes were introduced by Beelen et al.~\cite{Beelen2017}, and their structural properties were further studied in~\cite{Beelen2018}. Unlike RS codes, TRS codes are not necessarily MDS, and their MDS and NMDS properties have therefore attracted considerable attention. Sui et al.~\cite{Sui2022Zhu} established several characterizations and constructions of MDS and NMDS codes from TRS codes. Subsequently, TRS and TGRS codes with more
general twist structures were investigated in
\cite{Zhu2024,Gu2024,Hu2025}. These developments provide a natural motivation for studying finer parameters of TRS codes beyond the minimum distance, such as their generalized Hamming weights.

The generalized Hamming weights of codes related to TRS codes have also been considered recently. In particular, Yang et al. ~\cite{Yang2026} investigated the generalized Hamming weights of AMDS codes arising from TRS codes and obtained weight-hierarchy results for several $s$-gap AMDS cases.  However, the analysis is
restricted to particular almost-MDS and gap configurations and does not provide a uniform characterization of the complete weight hierarchy for general TRS codes with one twist.

This naturally raises the question of whether the entire weight hierarchy of a TRS code can be determined. More specifically, it is of interest to identify which parameters control the higher generalized Hamming weights and whether only a small number of
generalized Hamming weights of the dual code are sufficient to determine the whole hierarchy. For TRS codes with one twist, we show that all higher generalized Hamming weights are determined by
the minimum distance of the dual code. This viewpoint also leads naturally to the two-twist case, where the first two generalized Hamming weights of the dual code determine the higher weight hierarchy.

Once the generalized Hamming weight hierarchy is determined, a natural further question is to describe the support sizes of all $r$-dimensional subcodes. The $r$-th generalized Hamming weight captures only the minimum possible support size, whereas the $r$-subcode support weight distribution records the number of $r$-dimensional subcodes for every possible support size.
Thus, it is natural to further determine the SSWDs of TRS codes to obtain a finer description of their subcode support structure.

\subsection{Our Main Contributions and Organization of This Paper}
The main contributions of this paper are summarized as follows.
\begin{enumerate}
    \item [(1)] For TRS codes with one twist, we determine the complete generalized Hamming weight hierarchy given in Theorems \ref{thm:complete-single} and Proposition \ref{d1(C)}. Its higher generalized Hamming weights are determined by the minimum distance of the dual code, and the minimum distance of  its dual are characterized in Proposition \ref{dC perp}.

    \item [(2)] For TRS codes with two twists, we determine the generalized Hamming weight hierarchy in Theorem \ref{thm:complete} and Proposition \ref{prop:firsttwo d1 d2}, which is characterized by the first two generalized Hamming weights of the dual code, and given in Proposition \ref{prop:dualweights d1d2}.

    \item [(3)] For TRS codes with one twist, we derive an explicit formula for the subcode support weight distribution in Theorem \ref{thm:sswd}.
\end{enumerate}

The rest of this paper is organized as follows. Section~2 studies the generalized Hamming weights of TRS codes with one twist and gives their complete weight hierarchy. Section~3 extends these results to TRS codes with two twists and determines the corresponding weight hierarchy. Section~4 investigates the subcode support weight distributions of TRS codes with one twist and derives an explicit formula for them. Section 5 concludes this paper.

\section{GHW of the TRS Codes with a Single Twist}
In this section, we determine the weight hierarchy of  a class of TRS codes $\mathcal{C}_1({\bm \alpha},{\bm 1})$ with single twist $t$ and hook $h$. Let 
\begin{equation}\label{S}
	\mathcal{S}_1=\left\{\sum_{i=0}^{k-1}f_ix^i+\eta f_h x^{k-1+t},\ f_i\in {\mathbb F_q}\right\},
\end{equation}
where $0\leq h\leq k-1$  and $1\leq t\leq n-k$.  The TRS code $\mathcal{C}_1({\bm \alpha},{\bm 1})$ is defined as
\begin{equation}\label{def:C(a,1)}
	\mathcal{C}_1({\bm \alpha},{\bm 1})=\{(f(\alpha_1),f(\alpha_2),\dots, f(\alpha_n)),f\in \mathcal{S}_1\}.
\end{equation}

We next investigate the generalized Hamming weight hierarchy of $\mathcal{C}_1(\bm{\alpha},\bm{1})$. For convenience, we denote $\mathcal{C}_1(\bm{\alpha},\bm{1})$ by $\mathcal{C}_1$.  We first establish the possible
values of $d_r(\mathcal{C}_1)$ for each
$1\le r\le k$. This provides the basic structure of the weight hierarchy and will be refined in the subsequent results.

\begin{theorem}\label{rangeofd(C)}
Let $\mathcal{C}_1$ be the TRS code defined in~\eqref{def:C(a,1)} with $k\ge 2$. Then
\[
n-k-t+1\le d_1(\mathcal{C}_1)\le n-k+1,
\]
and
\[
n-k+r-1\le d_r(\mathcal{C}_1)\le n-k+r,
\ 2\le r\le k-1.
\]
Moreover, $d_k(\mathcal{C}_1)=n$.
\end{theorem}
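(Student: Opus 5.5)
The idea is to embed $\mathcal{C}_1$ in a Reed--Solomon code and to compare it with one. Every $f\in\mathcal{S}_1$ has degree at most $k-1+t$, so $\mathcal{C}_1$ is a $k$-dimensional subcode of $\mathrm{RS}_{k+t}(\bm\alpha,\bm 1)$. I first check that $\mathcal{C}_1$ really has dimension $k$. Since $k-1+t\le n-1$, the evaluation map is injective on polynomials of degree less than $n$, so no nonzero $f\in\mathcal{S}_1$ vanishes on all $n$ points. Hence $\dim\mathcal{C}_1=k$.

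\emph{Bound for $d_1$.} The upper bound $d_1(\mathcal{C}_1)\le n-k+1$ is Lemma~\ref{rMDS} with $r=1$. For the lower bound, a nonzero $f\in\mathcal{S}_1$ has degree at most $k-1+t$, so it has at most $k-1+t$ roots among the $\alpha_i$. Its weight is therefore at least $n-k-t+1$.

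\emph{Bound for $d_r$, $2\le r\le k-1$.} The upper bound is again Lemma~\ref{rMDS}. For the lower bound I use the subcode
\[
\mathcal{D}=\{ev(f): f\in\mathcal{S}_1,\ f_h=0\}.
\]
This is exactly the set of evaluations of the polynomials of degree at most $k-1$ with zero coefficient of $x^h$. So $\mathcal{D}$ is a $(k-1)$-dimensional subcode of $\mathrm{RS}_k(\bm\alpha,\bm 1)$.

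Take any $r$-dimensional $H_r\subseteq\mathcal{C}_1$. Then $\dim(H_r\cap\mathcal{D})\ge r-1$, because $\mathcal{D}$ has codimension $1$ in $\mathcal{C}_1$. Since $r\ge 2$, the intersection contains an $(r-1)$-dimensional subspace $H'$ of the MDS code $\mathrm{RS}_k$. A subspace of a code has support at least the corresponding generalized Hamming weight of that code, and the RS code is $(r-1)$-MDS. This gives
\[
|\supp(H_r)|\ge|\supp(H')|\ge d_{r-1}(\mathrm{RS}_k)=n-k+r-1.
\]
Here I use the standard fact that the GHWs of an $[n,k]$ MDS code are $n-k+j$, which follows from the discussion after Lemma~\ref{d1<d2<dots<dr}.

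\emph{The case $r=k$.} Here $H_k=\mathcal{C}_1$, so I must show that $\mathcal{C}_1$ has full support. This is the main obstacle. Suppose instead that coordinate $i$ is zero on all of $\mathcal{C}_1$. Then the polynomial $g_j$ vanishes at $\alpha_i$ for every $j\ne h$, where $g_j=x^j$. Since $k\ge 2$, some $j\ne h$ exists.

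If $0\ne h$, I can take $j=0$. Then $g_0=1$ would have to vanish at $\alpha_i$, which is impossible.

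If $h=0$, every $x^j$ with $1\le j\le k-1$ vanishes at $\alpha_i$, which forces $\alpha_i=0$. The twisted generator is $g_0=1+\eta x^{k-1+t}$, and its value at $0$ is $1\neq0$. So coordinate $i$ is not identically zero, a contradiction.

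Therefore $\mathcal{C}_1$ has full support and $d_k(\mathcal{C}_1)=n$.
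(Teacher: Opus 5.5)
Your proof is correct and follows the paper's argument: the same degree bound for $d_1$, the same intersection of an $r$-dimensional subspace with the codimension-one untwisted subspace $U=\mathrm{span}\{x^i: 0\le i\le k-1,\ i\neq h\}$ for $2\le r\le k-1$, and the same case split on $h$ (and on whether some $\alpha_i=0$) to get $d_k(\mathcal{C}_1)=n$. The only cosmetic difference is that you close the middle case by citing $d_{r-1}(\mathrm{RS}_k)=n-k+r-1$, whereas the paper bounds the common zero set directly via $\dim V_I=k-|I|$. That computation is the same MDS fact proved by hand.
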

\begin{proof}
	For the case $r=1$, since the code $\mathcal{C}_1$ is a subcode of ${\rm GRS}_{k+t}({\bm \alpha},{\bm 1})$, the minimum distance of $\mathcal{C}_1$ satisfies 
	$$d_1(\mathcal{C}_1)\geq d({\rm GRS}_{k+t}({\bm \alpha},{\bm 1}))=n-k-t+1,$$
	by the Singleton bound, we have 
	$$n-k-t+1\leq d_1(\mathcal{C}_1)\leq  n-k+1.$$

	For the case $2\leq r\leq k-1$, we define the evaluation map associated with ${\bm \alpha}$ as
	\[
ev_{\boldsymbol{\alpha}}:\mathcal{S}_1\to \mathcal{C},\quad
f(x)\mapsto\bigl(f(\alpha_1),f(\alpha_2),\dots,f(\alpha_n)\bigr),
	\]
where $\mathcal{S}_1$ is given in (\ref{S}), and it is clear that map  $ev_{\boldsymbol{\alpha}}$ is a linear isomorphism. Let 
\[
U={\rm span}\{x^i\,:\, 0\leq i\leq k-1,\ i\neq h\}, \ g(x)=x^h+\eta x^{k-1+t},
\]
we have $\mathcal{S}_1=U\bigoplus \left\langle g(x) \right\rangle $, and ${\rm dim}\, \mathcal{S}_1=k$.  For any $r$-dimensional subcode $\mathcal{D}$ of the code $\mathcal{C}$, there exists a $r$-dimensional subspace $W$ of $\mathcal{S}_1$ such that $W=ev_{\boldsymbol{\alpha}}^{-1}(\mathcal{D})$. From equation (\ref{eq:dr(C)}), we have
\begin{equation}\label{eq:dr(Calpha)}
	d_r(\mathcal{C}_1)=n- {\rm max}\{| N(\mathcal{D})|\,:\,\mathcal{D}\in [\mathcal{C}_1, r]\},
\end{equation}
where 
\[N(\mathcal{D})= \{i\,:\, 1\leq i\leq n,\ c_i=0\ {\rm for}\ \forall \ {\bm d}=(f(\alpha_1),f(\alpha_2),\dots,f(\alpha_n))\in \mathcal{D},\, f\in W \}.\]
Denote $N(\mathcal{D})=I$, for any $i\in I$, $f\in W$, we have $f(\alpha_i)=0$. Let $P_I(x)=\prod\limits_{i\in I}(x-\alpha_i)$, then $P_I(x)=\prod\limits_{i\in I}(x-\alpha_i)\mid f(x)$ for any $f(x)\in  W$, i.e., for any $u(x)\in W\cap U$, we have $P_I(x)=\prod\limits_{i\in I}(x-\alpha_i)\mid u(x)$, and 
\begin{equation}\label{eq:WcapU}
	W\cap U \subseteq \{u(x)\,:\, P_I(x)=\prod\limits_{i\in I}(x-\alpha_i)\mid u(x)\}.
\end{equation}
Let $V_I$ be the vector space consisting of all polynomials of degree less than $k$  which are divisible by $P_I(x)=\prod\limits_{i\in I}(x-\alpha_i)$, and
\[
V_I=\{v(x)\in {\mathbb{F}_q}[x]_{<k}\,:\, P_I(x)\mid  v(x)\}.
\]
For any $v(x)\in V_I$, there exists a polynomial $q(x)$ such that $v(x)=P_I(x)q(x)$. Since ${\rm deg}(v(x))\leq k-1$, we have ${\rm deg}(q(x))\leq k-|I|-1$. We consider the following map:
\[
\varphi :{\mathbb{F}_q}[x]_{<k-|I|} \to V_I,\quad
q(x) \mapsto P_I(x)q(x),
\]
it is easy to verify that $\varphi$ is a linear isomorphism. Hence, ${\rm dim}\,V_I={\rm dim}\,{\mathbb{F}_q}[x]_{<k-|I|}=k-|I|$, i.e., $V_I={\rm span}\{P_I(x),xP_I(x),\dots,x^{k-|I|-1}P_I(x)\}$. It is known that for any $u(x)\in U$, ${\rm deg}(u(x))<k$. So by equation (\ref{eq:WcapU}), we have 
\begin{equation}\label{eq:WcapUleq}
	W\cap U \subseteq V_I,\ {\rm dim}\, (W\cap U)\leq {\rm dim}\, V_I=k-|I|.
\end{equation} 
On the other side, 
\[{\rm dim}\,(W\cap U)={\rm dim}\,W+{\rm dim}\,U-{\rm dim}\,(W+U),\ {\rm dim}\,(W+U)\leq {\rm dim}\,\mathcal{S}_1=k,\]
then ${\rm dim}\,(W\cap U)\geq r+k-1-k=r-1$. From (\ref{eq:WcapUleq}), we have $|I|\leq k-r+1$. by the formula (\ref{eq:dr(Calpha)}), have $d_r(\mathcal{C}_1)\geq n-k+r-1$, and by Lemma \ref{rMDS}, we have 
\[n-k+r-1\leq d_r(\mathcal{C})\leq n-k+r.\]

For the case $r=k$, it can be shown that for each coordinate, there exists some codeword whose value at that coordinate is nonzero. We have the following subcases:
\begin{itemize}
	\item [(1)] $h\neq 0$. It is obviously ${\bm 1}\in \mathcal{C}_1$.
	\item [(2)] $h=0$, $\alpha_i\neq 0$ for $1\leq i\leq n$. Then there exists a codeword $(\alpha_1,\alpha_2,\dots,\alpha_n)\in \mathcal{C}_1$ such that every coordinate is nonzero.
	\item [(3)] $h=0$ and there exists an evaluation point $\alpha_i=0$, where $1\leq i\leq n$. Then $g(x)=1+\eta x^{k-1+t}\in \mathcal{S}_1$, i.e., $g(\alpha_i)=g(0)=1\neq 0$.
\end{itemize}
This completes the proof.
\end{proof}

To determine its exact value, we now give the following result on determining its minimum distance more explicitly.
\begin{proposition}\label{d1(C)}
	Let $\mathcal{C}_1$ be a TRS code defined in (\ref{def:C(a,1)}). Let $I\subseteq [n]$ and $P_I(x)=\prod\limits_{i\in I}(x-\alpha_i)=\sum\limits_{j=0}^{|I|}c_jx^{|I|-j}$. For $0\leq  \mu \leq t$, let $A_{\mu}(I)$ be a matrix given as follows
	\[
	A_{\mu}(I)=\left(\begin{array}{ccccc}
	c_1 & c_0  & 0 & \cdots & 0\\
	c_2 & c_1 & c_0 & \cdots & 0\\
	\vdots & \vdots & \vdots & & \vdots  \\
	c_{t-1} & c_{t-2} & c_{t-3} & \cdots & c_{\mu-1} \\
	1-\eta c_{k-1+t-h} & -\eta c_{k-1+t-h-1} & -\eta c_{k-1+t-h-2} & \cdots & -\eta c_{k-1+t-h-(t-\mu)}
\end{array}\right)_{t\times (t-\mu+1)},
	\]
	where $c_i=0$ if $i\notin \{0,1,2,\dots,|I|\}$. 
	Then $d_1(\mathcal{C}_1)=n-k+1-\mu$ if and only if there exist a $(k-1+\mu)$-subset $I_1$ such that ${\rm rank}\, A_{\mu}(I_1)\leq t-\mu$ and for any $(k+\mu)$-subset $I_2$, ${\rm rank}\, A_{\mu+1}(I_2)=t-\mu$.
\end{proposition}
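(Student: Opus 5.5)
We translate "some nonzero codeword vanishes on $I$" into a linear-algebra condition on the matrix $A_\mu(I)$, and then use the Singleton bound together with Theorem \ref{rangeofd(C)}.

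\textbf{Step 1 (vanishing criterion).} A codeword $ev_{\bm\alpha}(f)$ with $f\in\mathcal{S}_1$ vanishes on $I\subseteq[n]$ if and only if $P_I(x)\mid f(x)$. We have $|I|\le n-1$, and we may take $\deg f\le k-1+t<n$. Fix $|I|=k-1+\mu$ and write $f=P_I(x)q(x)$. Since $\deg f\le k-1+t$, we get $\deg q\le t-\mu$. Write $q(x)=\sum_{j=0}^{t-\mu}b_jx^j$.

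\textbf{Step 2 (coefficient conditions).} The product $P_Iq$ lies in $\mathcal{S}_1$ exactly when two things hold. First, its coefficients of $x^{k},\dots,x^{k-2+t}$ vanish. Second, its coefficient of $x^{k-1+t}$ equals $\eta$ times its coefficient of $x^h$. Since $P_I=\sum_j c_jx^{|I|-j}$, the coefficient of $x^m$ in $P_Iq$ is $\sum_j b_j c_{|I|+j-m}$.

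The $t-1$ vanishing conditions produce the Toeplitz rows $(c_s,c_{s-1},\dots)$ for $s=1,\dots,t-1$, after reindexing $q$ from its top coefficient. The twist condition produces the last row of $A_\mu(I)$: the entry $1$ comes from the coefficient of $x^{k-1+t}$, which is $b_{t-\mu}c_0=b_{t-\mu}$, and the entries $-\eta c_{k-1+t-h-\cdot}$ come from the coefficient of $x^h$. I expect the main bookkeeping obstacle to be verifying that the index shifts match the displayed matrix exactly, in particular the column order and the entry $c_{\mu-1}$ in the corner. This is a direct but careful computation.

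The conclusion of Step 2 is that the nonzero codewords vanishing on $I$ correspond bijectively to nonzero $b\in\ker A_\mu(I)$, where $A_\mu(I)$ is $t\times(t-\mu+1)$. The map $q\mapsto P_Iq$ is injective, so distinct $b$ give distinct codewords.

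\textbf{Step 3 (weight translation).} There is a nonzero codeword of weight at most $n-|I|$ vanishing on $I$ if and only if $\ker A_\mu(I)\neq 0$, that is, $\operatorname{rank}A_\mu(I)\le t-\mu$. Hence:
\begin{itemize}
\item[(a)] A $(k-1+\mu)$-set $I_1$ with $\operatorname{rank}A_\mu(I_1)\le t-\mu$ exists if and only if $d_1(\mathcal{C}_1)\le n-k+1-\mu$.
\item[(b)] For every $(k+\mu)$-set $I_2$ we have $\operatorname{rank}A_{\mu+1}(I_2)=t-\mu$, i.e.\ full column rank, if and only if no nonzero codeword vanishes on $k+\mu$ coordinates. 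This is equivalent to $d_1(\mathcal{C}_1)\ge n-k+1-\mu$.
\end{itemize}
Combining (a) and (b) gives the equivalence, since $d_1(\mathcal{C}_1)=n-k+1-\mu$ exactly when both inequalities hold.

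In (b), a codeword vanishing on a superset of $I_2$ still vanishes on $I_2$, so it suffices to look at sets of size exactly $k+\mu$. Also $A_{\mu+1}$ has $t-\mu$ columns, so "rank $=t-\mu$" means trivial kernel.

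\textbf{Edge cases.} When $\mu=t$, the matrix $A_{t+1}$ has no columns. This case should be read as the vacuous condition, consistent with the lower bound $d_1\ge n-k-t+1$ from Theorem \ref{rangeofd(C)}, and I would remark on it explicitly. For $\mu=0$, the degree bound $\deg q\le t$ must still respect $k-1+t<n$; this holds because $t\le n-k$.
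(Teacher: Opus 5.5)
Your proposal is correct and follows essentially the same route as the paper. Both write each nonzero $f\in\mathcal{S}_1$ vanishing on a $(k-1+\mu)$-set as $P_I(x)q(x)$ with $\deg q\le t-\mu$, translate membership in $\mathcal{S}_1$ into the homogeneous system $A_\mu(I)\,b=0$, and read the two rank conditions as $d_1(\mathcal{C}_1)\le n-k+1-\mu$ and $d_1(\mathcal{C}_1)\ge n-k+1-\mu$. You also make explicit two points the paper leaves implicit: injectivity of evaluation, which holds because $k-1+t\le n-1$, and the convention that for $\mu=t$ the condition on the empty matrix $A_{t+1}$ is vacuous.
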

\begin{proof}
	It is known that 
	\begin{align*}
			d_1(\mathcal{C}_1) & =\min\limits_{0\neq f\in \mathcal{S}_1} | \{i\,:\, 1\leq i\leq n,\ f(\alpha_i)\neq 0\}|\\
			& =n-\max\limits_{0\neq f\in \mathcal{S}_1} |\{i\,:\, 1\leq i\leq n,\ f(\alpha_i)=0\}|.
	\end{align*}
Then $d_1(\mathcal{C}_1)=n-k+1-\mu$ if and only if 
\begin{equation}\label{zeropointset of f}
	N=\max\limits_{0\neq f\in \mathcal{S}_1} | \{i\,:\, 1\leq i\leq n,\ f(\alpha_i)=0\} | =k-1+\mu.
\end{equation}
The equation (\ref{zeropointset of f}) holds if and only if there exists an $0\neq f\in \mathcal{S}_1$ with at least $k-1+\mu $ zeros, and there does not exist an $0\neq f\in \mathcal{S}_1$ with at least $k+\mu$ zeros. Assume that $f(x)\in \mathcal{S}_1$ has at least $k-1+\mu$ zeros, take a subset $I\subseteq [n]$ of the index set of these zeros with size exactly $k-1+\mu$. Then we have
\[
f(x)=P_I(x)q(x)=\prod\limits_{i\in I}(x-\alpha_i)q(x)=\sum_{j=0}^{k-1+\mu}c_jx^{k-1+\mu-j}q(x).
\]
Since ${\rm deg}(f(x))\leq k-1+t$, we have ${\rm deg}(q(x))\leq k-1+t-(k-1+\mu)=t-\mu$ and denote $q(x)=\sum\limits_{i=0}^{t-\mu}b_ix^{t-\mu-i}$, where $b_i$ are not all zero for $0\leq i\leq t-\mu$. Then $f(x)\in \mathcal{S}_1$ if and only if 
\begin{equation}\label{eq:fk}
	f_k=f_{k+1}=\cdots=f_{k-2+t}=0\ {\rm and}\ f_{k-1+t}=\eta f_h,
\end{equation}
where $f_j$, $j\in \{h,k,k+1,\dots,k-1+t\}$ denotes coefficient of the term $x^j$ in the polynomial $f(x)$. From equation (\ref{eq:fk}), we have
\[
\sum_{i=0}^{t-\mu}b_ic_{m-i}=0,\ 1\leq m\leq t-1,\ {\rm and }\ b_0=\eta \sum_{i=0}^{t-\mu}b_ic_{k-1+t-h-i}.
\]
In particular, 
\begin{equation}\label{eq:system equation}
	\left(\begin{array}{ccccc}
	c_1 & c_0  & 0 & \cdots & 0\\
	c_2 & c_1 & c_0 & \cdots & 0\\
	\vdots & \vdots & \vdots & & \vdots  \\
	c_{t-1} & c_{t-2} & c_{t-3} & \cdots & c_{\mu-1} \\
	1-\eta c_{k-1+t-h} & -\eta c_{k-1+t-h-1} & -\eta c_{k-1+t-h-2} & \cdots & -\eta c_{k-1+t-(t-\mu)}
\end{array}\right)
\left(\begin{array}{c}
	b_0\\
	b_1\\
	\vdots \\
	b_{t-\mu}
\end{array}\right)
={\bm 0},
\end{equation}
where  the index $i$ of $c_i$ ranges from $0$ to $k-1+\mu$, $0\leq \mu \leq t$, otherwise $c_i=0$. Let $A_{\mu}(I)$ denote the $t\times (t-\mu+1)$ coefficient matrix of the system of equations in (\ref{eq:system equation}). Then there exists an $0\neq f\in \mathcal{S}_1$ with at least $k-1+\mu$ zeros if and only if there exists a $(k-1+\mu)$-subset $I$ such that the system of equations (\ref{eq:system equation}) has nonzero solutions, i.e.,  ${\rm rank}\,A_{\mu}(I)\leq t-\mu$. Naturally, there does not exist an $0\neq f\in \mathcal{S}_1$ with at least $k+\mu$ zeros if and only if there does not exist $(k+\mu)$-subset $J$ such that ${\rm rank}\, A_{\mu+1}(J)\leq t-\mu-1$. Since $A_{\mu+1}(J)$ is a $t\times (t-\mu)$ matrix, it is also equivalent to  that for any $(k+\mu)$-subset $J$, ${\rm rank}\, A_{\mu+1}(J)=t-\mu$. This completes the proof.
\end{proof}

\begin{remark}
Proposition~\ref{d1(C)} is consistent with the known MDS criterion for the TRS codes $\mathcal{C}$ in
\cite[Lemma~2.6]{Huang2022}. More precisely, when $t=1$, the MDS condition in Proposition~\ref{d1(C)} is exactly equivalent to the condition
\[
\eta^{-1}\in\mathbb F_q^*\setminus S_{k,h},\ S_{k,h}=\left\{(-1)^{k-h}\sum_{\{j_1,\dots,j_{k-h}\}\subseteq I}\prod_{\ell=1}^{k-h} \alpha_{j_{\ell}}\,:\,I\subseteq[n],\ |I|=k\right\},
\]
given in \cite[Lemma~2.6]{Huang2022}. Indeed, an MDS code satisfies $d_1(\C)=n-k+1$, and hence $\mu=0$ in Proposition~\ref{d1(C)}. For $t=1$, the
condition involving a $(k-1)$-subset is automatic, whereas for
every $k$-subset $I\subseteq[n]$ the second condition becomes ${\rm rank}\, A_1(I)=1$. Since
\[
A_1(I)=
\begin{pmatrix}
1-\eta c_{k-h}
\end{pmatrix}.
\]
Therefore, ${\rm rank}\, A_1(I)=1$ if and only if $1-\eta c_{k-h}\neq0$, i.e.,
\[
\eta^{-1}\notin\left\{(-1)^{k-h}\sum_{\{j_1,\dots,j_{k-h}\}\subseteq I}\prod_{\ell=1}^{k-h} \alpha_{j_{\ell}}\,:\,I\subseteq[n],\ |I|=k
\right\}=S_{k,h},
\]
which is precisely the MDS criterion in \cite[Lemma~2.6]{Huang2022}.
Hence the known MDS characterization for the case $t=1$ is recovered as a special case of Proposition~\ref{d1(C)}.
\end{remark}

\begin{corollary}\label{cor:t2-hk1}
Let $\C_1$ be the TRS code defined in~\eqref{def:C(a,1)} with $t=2$ and $h=k-1$. For each subset $I\subseteq[n]$, define $\sigma_1(I)=\sum\limits_{i\in I}\alpha_i,\
\sigma_2(I)=\sum\limits_{{i,j\in I,\,i<j}}\alpha_i\alpha_j$, and
\[
\Delta(I)=\eta\bigl(\sigma_2(I)-\sigma_1(I)^2\bigr)-1.
\]
Then the minimum distance of $\C_1$ is characterized as follows.

\begin{enumerate}
\item[(1)]
$d_1(\C_1)=n-k+1$ if and only if $\Delta(I)\neq 0$ for every $k$-subset $I\subseteq[n]$.

\item[(2)]
$d_1(\C_1)=n-k$ if and only if there exists a $k$-subset $I\subseteq[n]$ such that $\Delta(I)=0$, and for every $(k+1)$-subset $J\subseteq[n]$, $\bigl(\sigma_1(J),\,1-\eta\sigma_2(J)\bigr)\neq(0,0)$.

\item[(3)]
$d_1(\C_1)=n-k-1$ if and only if there exists a $(k+1)$-subset $J\subseteq[n]$ such that $\sigma_1(J)=0\ \text{and}\  \eta\sigma_2(J)=1$.
\end{enumerate}
\end{corollary}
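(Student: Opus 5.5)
The plan is to specialize Proposition~\ref{d1(C)} to $t=2$ and $h=k-1$, and write the matrices $A_\mu(I)$ in terms of $\sigma_1,\sigma_2$.

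\textbf{Setup.} With $P_I(x)=\prod_{i\in I}(x-\alpha_i)=\sum_j c_jx^{|I|-j}$ we have
\[
c_0=1,\qquad c_1=-\sigma_1(I),\qquad c_2=\sigma_2(I),
\]
and the index $k-1+t-h$ equals $2$. By Theorem~\ref{rangeofd(C)}, $n-k-1\le d_1(\C_1)\le n-k+1$, so only $\mu\in\{0,1,2\}$ occur. Each $A_\mu(I)$ is a $2\times(3-\mu)$ matrix whose first row is $(c_1,c_0,0)$ truncated to $3-\mu$ entries. Its second row is $(1-\eta c_2,\,-\eta c_1,\,-\eta c_0)$, truncated in the same way.

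\textbf{Case $\mu=0$.} The first condition of Proposition~\ref{d1(C)} asks for a $(k-1)$-subset with ${\rm rank}\,A_0\le 2$. This is automatic, since $A_0$ is $2\times 3$. The second condition asks that
\[
A_1(I)=\begin{pmatrix} c_1 & 1\\ 1-\eta c_2 & -\eta c_1\end{pmatrix}
\]
be nonsingular for every $k$-subset $I$. Its determinant is $-\eta c_1^2-(1-\eta c_2)=\eta(\sigma_2-\sigma_1^2)-1=\Delta(I)$, which gives (1).

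\textbf{Case $\mu=1$.} The first condition is the existence of a $k$-subset with ${\rm rank}\,A_1(I)\le 1$, that is, $\Delta(I)=0$. The second condition asks that the column $A_2(J)=(c_1,\,1-\eta c_2)^{T}$ have rank $1$ for every $(k+1)$-subset $J$. This holds exactly when $(\sigma_1(J),\,1-\eta\sigma_2(J))\ne(0,0)$, which gives (2).

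\textbf{Case $\mu=2$.} The first condition asks for a $(k+1)$-subset $J$ with $A_2(J)=0$, that is, $\sigma_1(J)=0$ and $\eta\sigma_2(J)=1$. The second condition concerns $A_3$, which has no columns. It is vacuous, and it is in any case implied by the lower bound $d_1\ge n-k-1$ from Theorem~\ref{rangeofd(C)}. This gives (3) directly: such a $J$ exists iff $d_1\le n-k-1$, iff $d_1=n-k-1$.

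\textbf{Main obstacle.} The only delicate point is bookkeeping. One must check the truncation of the first row for each $\mu$; in particular the entry $c_{\mu-1}$ equals $c_{-1}=0$ when $\mu=0$. One must also check that the degenerate case $\mu=t$ is handled consistently with Proposition~\ref{d1(C)}. Otherwise the proof is a direct substitution.
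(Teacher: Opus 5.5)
Your proposal is correct and takes the same approach the paper intends. The paper states this corollary without a separate proof, as a direct specialization of Proposition~\ref{d1(C)} with $k-1+t-h=2$, and your bookkeeping checks out: $c_1=-\sigma_1$, $c_2=\sigma_2$, $\det A_1(I)=\Delta(I)$, $A_2(J)=(c_1,\,1-\eta c_2)^T$, and the second condition is vacuous when $\mu=t=2$.
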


\begin{example}
Let $\mathcal{C}_1$ be a TRS code defined in (\ref{def:C(a,1)}). Let $q=7$, $n=7$, $k=4$, $t=2$, $h=1$, and $\eta=1$, and take $\bm{\alpha}=(0,1,2,3,4,5,6)$. Take $\mu=2$. Let $I=\{1,3,4,5,6\}$, which corresponds to the evaluation points $\{0,2,3,4,5\}$. Thus
\[
A_2(I)=
\begin{pmatrix}
c_1\\
1-\eta c_{k-1+t-h}
\end{pmatrix}=
\begin{pmatrix}
0\\
0
\end{pmatrix},
\]
and hence ${\rm rank}\, A_2(I)=0=t-\mu$. Therefore the first condition in Proposition \ref{d1(C)} is satisfied. Moreover, every nonzero polynomial in $\mathcal{S}_1$ has degree at most $k-1+t=5$, and hence cannot have $k+\mu=6$ distinct zeros. Thus the second condition is automatic for $\mu=t$. Consequently, with the help of SageMath, 
\[
d_1(\mathcal{C}_1)=2=n-k+1-\mu,
\]
it is consistent  with Proposition \ref{d1(C)}.
\end{example}

We next consider the higher generalized Hamming weights of $\mathcal{C}_1$ and further refine the weight
hierarchy obtained in Theorem \ref{rangeofd(C)}.

\begin{theorem}\label{border of dr}
	Let $\mathcal{C}_1$ be a non-MDS TRS code defined in (\ref{def:C(a,1)}). Then $d_r(\mathcal{C}_1)$ satisfies 
	\begin{equation*}
		d_r(\mathcal{C}_1)=\begin{array}{cc}
			\begin{cases}
				n-k+r-1, & 2\leq r<k+2-d(\mathcal{C}_1^\perp)\\
				n-k+r, & k+2-d(\mathcal{C}_1^\perp)\leq r\leq k.
			\end{cases}
		\end{array}
	\end{equation*}
\end{theorem}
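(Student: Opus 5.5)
The plan is to combine Theorem \ref{rangeofd(C)} with Wei's duality, Lemma \ref{dual GHW of C}. Theorem \ref{rangeofd(C)} says that for $2\le r\le k-1$ each $d_r(\mathcal{C}_1)$ equals $n-k+r-1$ or $n-k+r$, and that $d_k(\mathcal{C}_1)=n$. By Lemma \ref{d1<d2<dots<dr} the sequence is strictly increasing. Hence once some $d_{r_0}$ attains the Singleton value $n-k+r_0$, every later $d_r$ does too; this is the $r$-MDS propagation argument recalled after Lemma \ref{d1<d2<dots<dr}. So there is a threshold $r_0\in\{2,\dots,k\}$ with $d_r=n-k+r-1$ for $2\le r<r_0$ and $d_r=n-k+r$ for $r_0\le r\le k$. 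The entire task is to show $r_0=k+2-d(\mathcal{C}_1^\perp)$.

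To identify $r_0$, look at which integers in $\{1,\dots,n\}$ are missing from the weight hierarchy of $\mathcal{C}_1$. The values $d_2,\dots,d_k$ fill $\{n-k+1,\dots,n\}$ except the single value $n-k+r_0-1$, which is skipped at the jump. If $r_0=2$ there is no jump inside the range, and the only point of $[n-k+1,n]$ not attained by $d_2,\dots,d_k$ is $n-k+1$. Since $\mathcal{C}_1$ is non-MDS, $d_1\le n-k$, so $n-k+1$ is genuinely absent from the hierarchy in that case as well. Hence in every case exactly one element of $\{n-k+1,\dots,n\}$ is missing, namely $n-k+r_0-1$.

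By Lemma \ref{dual GHW of C}, the missing values are exactly $\{n+1-d_s(\mathcal{C}_1^\perp): 1\le s\le n-k\}$. These are strictly decreasing in $s$, so the largest missing value is $n+1-d_1(\mathcal{C}_1^\perp)$. Every missing value below the largest lies below $d_2(\mathcal{C}_1)\ge n-k+1$, so the unique missing value in $[n-k+1,n]$ must be the largest one. Concretely, the values $n+1-d_s^\perp$ for $s\ge2$ are at most $n-d_1^\perp$, and these must be $\le n-k$. Therefore
\[
n-k+r_0-1=n+1-d(\mathcal{C}_1^\perp),
\]
which gives $r_0=k+2-d(\mathcal{C}_1^\perp)$, the claimed formula.

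The main obstacle is bookkeeping at the edges. First, we must check that $d(\mathcal{C}_1^\perp)\le k$, so that $r_0\ge2$ is consistent. This follows because a non-MDS code has a non-MDS dual, and the Singleton bound then gives $d^\perp<k+1$. Second, $d(\mathcal{C}_1^\perp)\ge2$ forces $r_0\le k$, which matches $d_k=n$. Third, we must confirm that the skipped value $n-k+r_0-1$ indeed lies in $[n-k+1,n]$ and is not attained by $d_1$. This holds because $d_1\le n-k$ in the non-MDS case.
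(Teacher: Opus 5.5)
Your proposal is correct and follows the paper's proof essentially step for step: you define the threshold $r_0$ using strict monotonicity and the Singleton bound, then use Wei duality to identify the unique gap $n-k+r_0-1$ in $[n-k+1,n]$ with the largest non-weight $n+1-d_1(\mathcal{C}_1^\perp)$. You also invoke the non-MDS hypothesis explicitly to rule out $d_1(\mathcal{C}_1)=n-k+1$ when $r_0=2$, a check the paper leaves implicit, so your version is slightly more careful there.
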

\begin{proof}
	Define
\[
r_0=\min\left\{r: 2\le r\le k,\;d_r(\mathcal{C}_1)=n-k+r\right\}.
\]
	By Lemma \ref{d1<d2<dots<dr}, we have 
	\[
	d_2(\mathcal{C}_1)<d_3(\mathcal{C}_1)<\cdots<d_k(\mathcal{C}_1)=n.
	\]
	It is known that $d_{r_0}(\mathcal{C}_1)=n-k+r_0$,  then  we have 
	\[
	d_{r_0+1}(\mathcal{C}_1)\geq d_{r_0}(\mathcal{C}_1)+1=n-k+r_0+1,
	\]
	furthermore, by Singleton bound,  $d_{r_0+1}(\mathcal{C}_1)\leq n-k+r_0+1$. Therefore, we have $d_{r_0+1}=n-k+r_0+1$. By mathematical induction, for any $r\geq r_0$, we have $d_r(\mathcal{C}_1)=n-k+r$. Similarly, we consider the case for $2\leq r<r_0$. It is known that \[
	d_r(\mathcal{C}_1)\in \{n-k+r-1,n-k+r\},\ 2\leq r\leq k-1,
	\]
	 and $d_{r_0-1}(\mathcal{C}_1)\neq n-k+r_0-1$, then we have $d_{r_0-1}(\mathcal{C}_1)=n-k+r_0-2$. Therefore, we have $d_{r_0-2}(\mathcal{C}_1)\leq n-k+r_0-3$, i.e., $d_{r_0-2}(\mathcal{C}_1)=n-k+r_0-3$.  By mathematical induction, for any $2\leq r< r_0$, we have $d_r(\mathcal{C}_1)=n-k+r-1$.

	Next, we prove that $r_0=k+2-d(\mathcal{C}_1^\perp)$. for $2\le r<r_0$, we have
\[
n-k+r_0-1\notin \{d_2(\mathcal{C}_1),d_3(\mathcal{C}_1),\dots,d_k(\mathcal{C}_1)\}
\]
Consequently, $n-k+r_0-1$ is the largest integer in $\{1,2,\ldots,n\}$ which is not a generalized Hamming weight of $\mathcal{C}_1$. By Lemma \ref{dual GHW of C}, we have
\[
\{d_i(\mathcal{C}_1):1\le i\le k\}
=\{1,2,\ldots,n\}\setminus \{n+1-d_j(\mathcal{C}_1^\perp):1\le j\le n-k\}.
\]
Since the generalized Hamming weights of $\mathcal{C}_1^\perp$ are strictly
increasing by Lemma \ref{d1<d2<dots<dr}, i.e., we have
\[
d_1(\mathcal{C}_1^\perp)<d_2(\mathcal{C}_1^\perp)<\cdots<d_{n-k}(\mathcal{C}_1^\perp),
\]
Hence, the largest integer which is not a
generalized Hamming weight of $\mathcal{C}_1$ is $n+1-d_1(\mathcal{C}_1^\perp)$.
Therefore, we have
\[
n-k+r_0-1=n+1-d_1(\mathcal{C}_1^\perp),
\]
and  $r_0=k+2-d_1(\mathcal{C}_1^\perp)$. This completes the proof.
\end{proof}

\begin{example}
Let $\mathcal{C}_1$ be a TRS code defined in (\ref{def:C(a,1)}). Let $q=7$, $n=7$, $k=5$, $t=1$, $h=2$, $\eta=1$, and take $\bm{\alpha}=(0,1,2,3,4,5,6)\in\mathbb F_7^7$. With the help of SageMath, we have
\[
d_1(\C_1)=2,\, d_2(\C_1)=3,\,d_3(\C_1)=4\,,d_4(\C_1)=6,\,d_5(\C_1)=7.
\]
Moreover, the minimum distance of  dual code is $d_1(\C^\perp)=3$. This is consistent with Theorem \ref{border of dr}.
\end{example}

To further determine the generalized Hamming weight hierarchy of $\mathcal{C}_1$, it remains to determine the minimum distance of its dual code $\mathcal{C}_1^\perp$. For this purpose, we first recall a useful lemma, which will be used in the subsequent characterization of
$d(\mathcal{C}_1^\perp)$.

\begin{lemma}\cite[Lemma 4]{Hu2025}\label{linear representation pf V_r}
	Let $\alpha_1,\alpha_2,\dots,\alpha_n$ be distinct elements of ${\mathbb F_q}$, and $m(x)=\prod\limits_{i=1}^n(x-\alpha_i)=\sum\limits_{j=0}^nc_jx^{n-j}$. For $t\geq 1$ and $0\leq i\leq n-1$,  $a_{n,i}^{(t)}\in {\mathbb F_q}$ satisfy the following system:
	$$(\alpha_{1}^{n-1+t},\alpha_{2}^{n-1+t},\dots,\alpha_{n}^{n-1+t})=(a_{0}^{(t)}, a_{1}^{(t)},\dots, a_{n-1}^{(t)})\left(\begin{array}{cccc}
		1 & 1 & \cdots & 1\\
		\alpha_{1} & \alpha_{2} & \cdots & \alpha_{n} \\
		\vdots & \vdots & \vdots & \vdots \\
		\alpha_{1}^{n-1} & \alpha_{2}^{n-1} & \cdots & \alpha_{n}^{n-1}
	\end{array}
	\right).$$
	Then we have
	\begin{equation}\label{sum of a_r,i^t}
		a_{s}^{(t)}=-\sum\limits_{j=0}^{min\{s,t-1\}}c_{n+j-s} e_{n,t-1-j} \ {\rm for}\ 0\leq s\leq n-1,
	\end{equation}
	where $e_{n,0}=1$ and $e_{n,i}=-\sum\limits_{j=0}^{i-1}e_{n,j}c_{i-j}$ for all $1\leq i\leq t-1$.
\end{lemma}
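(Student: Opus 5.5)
This lemma is quoted from \cite[Lemma 4]{Hu2025}, but it can also be checked directly; the plan is a recursive reduction of powers modulo $m(x)$. Since the Vandermonde matrix on the right is invertible, the coefficients $a_s^{(t)}$ are uniquely determined. They are exactly the coefficients of the remainder $R_t(x)=\sum_{s=0}^{n-1}a_s^{(t)}x^s$ of $x^{n-1+t}$ modulo $m(x)=\sum_{j=0}^n c_jx^{n-j}$ (with $c_0=1$). This is because $R_t(\alpha_i)=\alpha_i^{n-1+t}$ for all $i$, and $\deg R_t<n$. So it suffices to compute this remainder explicitly.

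\textbf{Step 1: the quotient.} Write $x^{n-1+t}=Q_t(x)m(x)+R_t(x)$ with $\deg Q_t=t-1$, and set $Q_t(x)=\sum_{i=0}^{t-1}e_{n,i}x^{t-1-i}$. I would show that the $e_{n,i}$ obey the stated recursion by comparing the coefficients of $x^{n-1+t-i}$ for $0\le i\le t-1$. In this range the remainder contributes nothing, since $n-1+t-i\ge n$. The coefficient of $x^{n-1+t-i}$ in $Q_tm$ is $\sum_{j=0}^{i}e_{n,j}c_{i-j}$, and it must equal $1$ for $i=0$ and $0$ for $1\le i\le t-1$. This gives $e_{n,0}=1$ and $e_{n,i}=-\sum_{j=0}^{i-1}e_{n,j}c_{i-j}$, matching the lemma. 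Equivalently, $Q_t$ is the truncation of the power series $1/(1+c_1y+c_2y^2+\cdots)$ written in reversed variable.

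\textbf{Step 2: the remainder.} For $0\le s\le n-1$, the coefficient of $x^s$ in $x^{n-1+t}$ is zero. Hence $a_s^{(t)}=-[x^s](Q_tm)$. Expanding the product,
\[
[x^s](Q_tm)=\sum_{j}e_{n,t-1-j}\,c_{n+j-s},
\]
where the power $x^j$ comes from $Q_t$ with coefficient $e_{n,t-1-j}$, and $x^{s-j}$ comes from $m$ with coefficient $c_{n-(s-j)}$. The index range is $0\le j\le t-1$ together with $0\le s-j\le n$, that is, $0\le j\le\min\{s,t-1\}$. This yields \eqref{sum of a_r,i^t}.

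\textbf{Main obstacle.} The only real difficulty is index bookkeeping: matching the convention $c_j\leftrightarrow x^{n-j}$ with $e_{n,i}\leftrightarrow x^{t-1-i}$, and getting the summation limits right. I would guard against errors by checking the case $t=1$. There $Q_1=1$, so the formula gives $a_s^{(1)}=-c_{n-s}$, which agrees with the direct reduction $x^n\equiv -\sum_{j\ge1}c_jx^{n-j}$.
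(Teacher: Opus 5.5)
Your proof is correct. The paper gives no proof of this lemma; it quotes it from \cite{Hu2025}, so there is no in-paper argument to compare against, and your division-with-remainder argument supplies a complete, self-contained justification.

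All three steps check out:
\begin{itemize}
\item Invertibility of the Vandermonde matrix identifies $(a_0^{(t)},\dots,a_{n-1}^{(t)})$ with the coefficients of $x^{n-1+t}\bmod m(x)$.
\item Comparing coefficients in degrees $n-1+t-i$ for $0\le i\le t-1$, where the remainder does not contribute, gives exactly the stated recursion for $e_{n,i}$.
\item Extracting $[x^s](Q_tm)$ with index range $0\le j\le\min\{s,t-1\}$ gives the stated formula for $a_s^{(t)}$. Since $s\le n-1$, the constraint $s-j\le n$ is automatic.
\end{itemize}

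One point you should state explicitly is the convention $c_\ell=0$ for $\ell>n$. When $t-1>n$, both the lemma's recursion for $e_{n,i}$ and your expression $\sum_{j=0}^{i}e_{n,j}c_{i-j}$ in Step 1 involve such indices. This case can genuinely occur where the paper applies the lemma in Proposition~\ref{dC perp}: there it is used with only $\nu$ points but exponents up to $k-1+t$, so $e_{k+t-\nu-1}$ may have index exceeding $\nu$.

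Your viewpoint also fits the rest of the paper. The quotient coefficients $q_m$ in the proof of Proposition~\ref{dimFI} satisfy the same recursion as your $e_{n,i}$; both are the coefficients of $1/(1+c_1y+c_2y^2+\cdots)$, as you note. So your argument places this lemma and the later computation of $\dim F_I$ on the same footing.
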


Using the above result, we now characterize the
minimum distance of the dual code $\mathcal{C}_1^\perp$.

\begin{proposition}\label{dC perp}
Let $\mathcal{C}_1$ be a non-MDS TRS code defined in
(\ref{def:C(a,1)}). For an $\nu$-subset $I\subseteq[n]$, where
$1\le \nu \le k$, let $P_I(x)=\prod\limits_{i\in I}(x-\alpha_i)=\sum\limits_{j=0}^{\nu}c_jx^{\nu-j}$. 
Then
\[
d_1(C_1^\perp)=\min\left\{\nu\,:\,
\begin{array}{l}
\exists\, I\subseteq[n],\ |I|=\nu,\ \text{such that}\\[1mm]
c_{\nu-h}=c_{\nu-h+1}=\cdots=c_{k-h-1}=0,\\[1mm]
\displaystyle
\eta\sum_{\ell=0}^{\min\{\nu-k+h,t-1\}}
c_{k-h+\ell}e_{t-1-\ell}=1
\end{array}
\right\}.
\]
where $e_{0}=1$ and $e_{i}=-\sum\limits_{j=0}^{i-1}e_{j}c_{i-j}$ for all $1\leq i\leq t-1$.
\end{proposition}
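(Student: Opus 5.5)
We use the standard fact that $d_1(\mathcal{C}_1^\perp)$ is the smallest number of linearly dependent columns of the generator matrix $G$ of $\mathcal{C}_1$ (with $\bm v=\bm 1$). Since $\mathcal{C}_1$ is non-MDS, $\mathcal{C}_1^\perp$ is non-MDS as well. Hence $d_1(\mathcal{C}_1^\perp)\le k$, and it suffices to consider $\nu$-subsets $I$ with $1\le\nu\le k$. Columns indexed by $I$ are dependent if and only if there is a nonzero $(u_i)_{i\in I}$ with
\[
\sum_{i\in I}u_i\alpha_i^m=0\ \ (0\le m\le k-1,\ m\ne h),\qquad \sum_{i\in I}u_i\bigl(\alpha_i^h+\eta\alpha_i^{k-1+t}\bigr)=0 .
\]
A dependency with smaller support only lowers $\nu$, so the minimum over $\nu$ of "some $\nu$-subset admits a nonzero solution" is exactly $d_1(\mathcal{C}_1^\perp)$.

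\textbf{Step 1: reduce to one functional.} I would encode $u$ as the linear functional $L(p)=\sum_{i\in I}u_ip(\alpha_i)$ on $\mathbb{F}_q[x]$. This functional vanishes on multiples of $P_I$. Because the $\nu\times\nu$ Vandermonde matrix on $I$ is invertible, $u$ is determined by the values $L(x^j)$ for $0\le j\le\nu-1$.
\begin{itemize}
\item If $h\ge\nu$, the conditions force $L(x^j)=0$ for all $j<\nu$, so $u=0$. Thus $I$ yields nothing, which matches the empty condition range in the statement.
\item If $h<\nu$, the conditions with $m<\nu$ force $L(x^j)=\lambda\delta_{jh}$, and $\lambda\neq0$ since $u\ne0$. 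Normalize $\lambda=1$. Then for every $m$, $L(x^m)$ equals the coefficient of $x^h$ in $x^m \bmod P_I$.
\end{itemize}
That coefficient is exactly $a_h^{(m-\nu+1)}$ in Lemma~\ref{linear representation pf V_r}, applied with $n$ replaced by $\nu$ and the points $\{\alpha_i\}_{i\in I}$. This gives the explicit formula $a_h^{(s)}=-\sum_{j}c_{\nu+j-h}e_{s-1-j}$. The remaining conditions become
\[
a_h^{(m-\nu+1)}=0\ (\nu\le m\le k-1),\qquad 1+\eta\,a_h^{(k-\nu+t)}=0 .
\]

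\textbf{Step 2: triangular elimination.} By induction on $m=\nu,\nu+1,\dots,k-1$, I would show that the first family of conditions is equivalent to $c_{\nu-h}=\dots=c_{k-h-1}=0$. For $m=\nu$ the condition reads $-c_{\nu-h}e_0=0$, and $e_0=1$. At each later step, the new term is $c_{m-h}e_0$, while all other terms involve $c$'s already shown to vanish. (Indices of $c$ beyond $\nu$ are interpreted as $0$, as in the statement.)

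\textbf{Step 3: the twisted condition.} I would substitute these vanishings into $a_h^{(k-\nu+t)}$. Only the terms $c_{k-h+\ell}$ with $\ell\ge 0$ survive, and the companion factor becomes $e_{t-1-\ell}$. The range $\ell\le\min\{\nu-k+h,\,t-1\}$ comes from $c_j=0$ for $j>\nu$ and from the upper limit in Lemma~\ref{linear representation pf V_r}. The last condition then turns into $\eta\sum_\ell c_{k-h+\ell}e_{t-1-\ell}=1$, after fixing the sign conventions. Conversely, if both families of conditions hold, defining $u$ through $L(x^j)=\delta_{jh}$ gives a nonzero dependency on $I$. Taking the minimum over $\nu$ proves the proposition.

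\textbf{Main obstacle.} I expect the index bookkeeping in Step 3 to be the hardest part. One must check that the recursively defined $e_i$ attached to $P_I$ interact correctly with the shift from $k-\nu+t$ down to $t$. The claim to verify is that, once $c_{\nu-h},\dots,c_{k-h-1}$ vanish, the terms with indices $j<k-\nu$ drop out, and re-indexing $j=k-\nu+\ell$ leaves exactly $e_{t-1-\ell}$. I would first verify this on the case $\nu=k$, where no $c$'s vanish and the formula is just Lemma~\ref{linear representation pf V_r} with $s=h$. I would then handle general $\nu$ by the same re-indexing.
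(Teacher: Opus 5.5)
Your proposal is correct and follows essentially the paper's proof. The paper's condition that the remaining rows of $G_I$ lie in the span of $\bm V_0,\dots,\bm V_{h-1},\bm V_{h+1},\dots,\bm V_{\nu-1}$ is just the row-space counterpart of your kernel functional $L$, and the paper then uses the same three ingredients: Lemma~\ref{linear representation pf V_r}, the triangular elimination giving $c_{\nu-h}=\cdots=c_{k-h-1}=0$, and the re-indexing $j=k-\nu+\ell$. That re-indexing is purely formal, so the obstacle you anticipated does not arise. One small correction: the stated conditions rule out $\nu\le h$ not because some range is empty, but because the vanishing list $c_{\nu-h},\dots,c_{k-h-1}$ then contains $c_0=1$.
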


\begin{proof}
Let $G$ is a generator matrix of $\mathcal{C}_1$, it is also a parity-check matrix of $\mathcal{C}_1^\perp$. Hence $d(\mathcal{C}_1^\perp)$ is equal to the minimum number of linearly dependent columns of $G$. Fix a $\nu$-subset $I=\{i_1,i_2,\dots,i_{\nu}\}\subseteq[n]$, and let $G_I$ denote the submatrix of $G$ consisting of the columns indexed by $I$. For $0\le r\le \nu-1$, denote ${\bm V}_r=(\alpha_{i_1}^{r},\alpha_{i_2}^{r},\ldots,\alpha_{i_{\nu}}^{r})$. We first characterize when the columns of $G_I$ are linearly
dependent.

If $\nu \le h$, then the rows of $G_I$ corresponding to ${\bm V}_0,{\bm V}_1,\dots,{\bm V}_{\nu-1}$ form an $\nu \times \nu$ Vandermonde matrix. Since the evaluation points
$\{\alpha_i:i\in I\}$ are pairwise distinct, its determinant is nonzero. Hence ${\rm rank}\,(G_I)=\nu$, and therefore the columns of $G_I$ are linearly independent. Consequently, a linearly dependent set of $\nu$ columns can occur only when $h<\nu \le k$.

Now suppose that $h<\nu \le k$. Let 
\[
P_I(x)=\prod_{i\in I}(x-\alpha_i)
      =\sum_{j=0}^{\nu}c_jx^{\nu-j},
\ c_0=1.
\]
Since the evaluation points $\alpha_{i_1},\ldots,\alpha_{i_{\nu}}$ are pairwise distinct, the vectors ${\bm V}_0,{\bm V}_1,\dots,{\bm V}_{\nu-1}$ are linearly independent. Therefore, ${\bm V}_0,\dots,{\bm V}_{h-1}, {\bm V}_{h+1},\ldots,{\bm V}_{\nu-1}$ are also linearly independent. Hence the $\nu$ columns of $G_I$ are linearly dependent if and only if every remaining row of $G_I$ belongs to the space 
\[
W={\rm span}\{{\bm V}_0,\ldots,{\bm V}_{h-1},{\bm V}_{h+1},\ldots,{\bm V}_{\nu-1}\}.
\]
The remaining rows correspond to
\[
{\bm V}_{\nu},{\bm V}_{\nu+1},\dots,{\bm V}_{k-1},\ {\rm and}\ {\bm V}_h+\eta {\bm V}_{k-1+t}.
\]
By Lemma \ref{linear representation pf V_r}, for every $u\geq 1$, we have
\begin{equation}\label{eq:Vs-1+u}
	{\bm V}_{\nu-1+u}=-\sum_{r=0}^{\nu-1}\left(\sum_{j=0}^{\min\{r,u-1\}}c_{\nu+j-r}e_{u-1-j}\right){\bm V}_r,
\end{equation}
where $e_0=1$, and $e_i=-\sum\limits_{j=0}^{i-1}e_jc_{i-j}$ for  $i\geq 1$. We first consider  rows ${\bm V}_{\nu},{\bm V}_{\nu+1},\dots,{\bm V}_{k-1}$. For $1\le u\le k-\nu$, the row ${\bm V}_{\nu-1+u}$, belongs to the space $W$ if and only if 
\[
{\rm rank}({\bm V}_0^T,\dots,{\bm V}_{h-1}^T,{\bm V}_{h+1}^T,\dots,{\bm V}_{\nu-1}^T,{\bm V}_{\nu-1+u}^T)=\nu-1.
\]
Hence, by equation (\ref{eq:Vs-1+u}), we have 
\begin{equation}\label{eq:sum of ce}
	\sum_{j=0}^{\min\{h,u-1\}}
c_{\nu+j-h}e_{u-1-j}=0.
\end{equation}
For $u=1$, equation (\ref{eq:sum of ce}) gives $c_{\nu-h}e_0=0$. Since $e_0=1$, we obtain $c_{\nu-h}=0$. For $u=2$, equation (\ref{eq:sum of ce}) gives $c_{\nu-h}e_1+c_{\nu-h+1}e_0=0$, then we have $c_{\nu-h+1}=0$. Proceeding inductively, we have $c_{\nu-h+u-1}=0$. Therefore, the row ${\bm V}_{\nu-1+u}$, $1\leq u\leq k-\nu$ belongs to the space $W$  if and only if
\[
c_{\nu-h}=c_{\nu-h+1}=\cdots=c_{k-h-1}=0.
\]
It remains to consider the twisted row ${\bm V}_h+\eta {\bm V}_{k-1+t}$. By Lemma \ref{linear representation pf V_r}, we have 
\[
{\bm V}_{k-1+t}=-\sum_{r=0}^{\nu-1}\left(\sum_{j=0}^{\min\{r,k+t-\nu-1\}}c_{\nu+j-r}e_{k+t-\nu-1-j}\right){\bm V}_r.
\]
Consequently,
\begin{align*}
{\bm V}_h+\eta{\bm V}_{k-1+t}
=&-\eta\sum_{\substack{0\le r\le \nu-1\\r\ne h}}\left(\sum_{j=0}^{\min\{r,k+t-\nu-1\}}c_{\nu+j-r}e_{k+t-\nu-1-j}\right){\bm V}_r\\
&+\left(1-\eta\sum_{j=0}^{\min\{h,k+t-\nu-1\}}c_{\nu+j-h}e_{k+t-\nu-1-j}\right){\bm V}_h.
\end{align*}
Similarly, the row ${\bm V}_h+\eta{\bm V}_{k-1+t}$, belongs to the space $W$ if and only if 
\begin{equation}\label{eq:Vh+Vk-1+t}
	\eta\sum_{j=0}^{\min\{h,k+t-\nu-1\}}c_{\nu+j-h}e_{k+t-\nu-1-j}=1.
\end{equation}
Since $c_{\nu-h}=c_{\nu-h+1}=\cdots=c_{k-h-1}=0$. Thus equation (\ref{eq:Vh+Vk-1+t}) is equivalent to
\[
\eta
\sum_{\ell=0}^{\min\{\nu-k+h,t-1\}}
c_{k-h+\ell}e_{t-1-\ell}
=1.
\]
This completes the proof.
\end{proof}

\begin{remark}
Let $\mathcal{C}_1$ be a TRS code defined in
(\ref{def:C(a,1)}). For $t=1$ and $h\in\{0,k-1\}$, Propositions~\ref{d1(C)} and~\ref{dC perp} also recover the known 
characterization for the TRS codes to be NMDS codes in \cite[Lemma~2.6]{Huang2022}. Recall that
\[
S_{k,h}=\left\{(-1)^{k-h}\sum_{\{j_1,\dots,j_{k-h}\}\subseteq I}\prod_{\ell=1}^{k-h} \alpha_{j_{\ell}}\,:\,I\subseteq[n],\ |I|=k\right\}.
\]
Proposition~\ref{d1(C)} shows that $d_1(\C)=n-k$
if and only if  $\eta^{-1}\in S_{k,h}$. We next consider its dual code. For $t=1$, we have $e_0=1$, and the second condition in Proposition~\ref{dC perp} reduces to $\eta c_{k-h}=1$. If $h=0$, this condition can occur only for $\nu=k$, and hence $d_1(\C^\perp)=k$ if and only if $\eta c_k=1$, this is equivalent to
\[
\eta^{-1}\in S_{k,0}.
\]
Similarly, if $h=k-1$, $d_1(\C^\perp)=k$ if and only if 
\[
\eta^{-1}\in S_{k,k-1}.
\]
Therefore, $\C_1$ is NMDS if and only if $\eta^{-1}\in S_{k,h}$, which is exactly the NMDS characterization in \cite[Lemma~2.6]{Huang2022}.
\end{remark}

\begin{corollary}\label{cor:t2-dual}
Let $\C_1$ be the TRS code defined in~\eqref{def:C(a,1)} with $t=2$ and $h=k-1$. If $\C_1$ is non-MDS, then
\[
d_1(\C_1^\perp)=k.
\]
\end{corollary}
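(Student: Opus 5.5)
The plan is to combine the lower bound on the size of a dependent column set, obtained inside the proof of Proposition~\ref{dC perp}, with the explicit criterion of that proposition specialized to $t=2$, $h=k-1$. I will then match the resulting condition with the non-MDS criterion of Corollary~\ref{cor:t2-hk1}.

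\emph{Step 1: the lower bound $d_1(\C_1^\perp)\ge k$.} The generator matrix $G$ of $\C_1$ is a parity-check matrix of $\C_1^\perp$, so $d_1(\C_1^\perp)$ is the smallest number of linearly dependent columns of $G$. The proof of Proposition~\ref{dC perp} shows that any $\nu\le h$ columns of $G$ are independent, since the rows ${\bm V}_0,\dots,{\bm V}_{\nu-1}$ restricted to those columns form a nonsingular Vandermonde matrix. With $h=k-1$ this gives independence of every set of at most $k-1$ columns. Hence only $\nu=k$ remains in the minimum of Proposition~\ref{dC perp}, and it suffices to exhibit a $k$-subset $I$ satisfying its two conditions.

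\emph{Step 2: specialize the conditions at $\nu=k$, $h=k-1$, $t=2$.}
\begin{itemize}
\item The vanishing conditions $c_{\nu-h}=\cdots=c_{k-h-1}=0$ run over indices from $1$ to $0$. This range is empty, so they hold automatically.
\item The remaining condition is $\eta\sum_{\ell=0}^{\min\{1,1\}} c_{1+\ell}e_{1-\ell}=1$, that is, $\eta(c_1e_1+c_2e_0)=1$.
\item Using $e_0=1$ and $e_1=-c_1$, this becomes $\eta(c_2-c_1^2)=1$.
\item Since $P_I(x)=\prod_{i\in I}(x-\alpha_i)$, we have $c_1=-\sigma_1(I)$ and $c_2=\sigma_2(I)$. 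The condition therefore reads $\eta(\sigma_2(I)-\sigma_1(I)^2)=1$, i.e.\ $\Delta(I)=0$ in the notation of Corollary~\ref{cor:t2-hk1}.
\end{itemize}

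\emph{Step 3: conclude from non-MDS.} Since $\C_1$ is non-MDS, $d_1(\C_1)\ne n-k+1$. By Corollary~\ref{cor:t2-hk1}(1) there is a $k$-subset $I$ with $\Delta(I)=0$. By Step~2 this $I$ satisfies the conditions of Proposition~\ref{dC perp} with $\nu=k$. Together with Step~1 this gives $d_1(\C_1^\perp)=k$.

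The only delicate point is Step~2. One must check that the index range of the vanishing conditions really is empty, and that the bound $\min\{\nu-k+h,t-1\}=1$ leaves exactly the two terms $c_1e_1$ and $c_2e_0$. If one prefers not to rely on the index conventions, a direct check also works. Reduce ${\bm V}_{k+1}$ modulo $P_I$ on $I$: it equals $-\sum_r a_r{\bm V}_r$, where the coefficient of ${\bm V}_{k-1}$ is $c_2-c_1^2$. The twisted row ${\bm V}_{k-1}+\eta{\bm V}_{k+1}$ then lies in the span of ${\bm V}_0,\dots,{\bm V}_{k-2}$ exactly when $1-\eta(c_2-c_1^2)=0$, which is the same condition $\Delta(I)=0$.
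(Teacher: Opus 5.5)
Your argument is correct, and its shape matches the paper's: a lower bound $d_1(\C_1^\perp)\ge k$, plus an upper bound $d_1(\C_1^\perp)\le k$ coming from non-MDS. The upper bound, however, is obtained differently.

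For the lower bound you use the Vandermonde argument: with $h=k-1$ the rows ${\bm V}_0,\dots,{\bm V}_{k-2}$ are untwisted. The paper instead notes that for $\nu<k$ the vanishing conditions of Proposition~\ref{dC perp} would include $c_0=0$. The two are equivalent, and yours is arguably cleaner, since those conditions were only derived for $\nu>h$.

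For the upper bound the paper needs no computation. An $[n,k]$ code is MDS if and only if every $k$ columns of its generator matrix are linearly independent (equivalently, if and only if its dual is MDS). So non-MDS immediately gives some $k$ dependent columns, i.e.\ $d_1(\C_1^\perp)\le k$. You instead exhibit the dependent $k$-set explicitly. You go through Corollary~\ref{cor:t2-hk1}, which the paper states without proof, and check that the $\nu=k$ conditions of Proposition~\ref{dC perp} reduce to the same equation $\Delta(I)=0$. That agreement is forced: $G_I$ is $k\times k$, and it is singular exactly when some nonzero codeword vanishes on $I$. So your Steps~2--3 are a consistency check rather than a logical necessity.

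What your route buys is an explicit description of the minimal dependent column sets, namely the $k$-subsets with $\Delta(I)=0$. What it costs is extra computation and reliance on an unproved corollary. The paper's route is shorter and uses only a standard fact about MDS codes.

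One small slip in Step~2: at $\nu=k$, $h=k-1$ the upper summation limit is $\min\{\nu-k+h,t-1\}=\min\{k-1,1\}$, not $\min\{1,1\}$. It equals $1$ for $k\ge 2$, so your conclusion is unaffected.
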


\begin{proof}
By Proposition \ref{dC perp}, $d_1(\C_1^\perp)=\nu$ only if
\[
c_{\nu-k+1}=c_{\nu-k+2}=\cdots=c_0=0.
\]
For $\nu<k$, this condition contains $c_0=0$, which is impossible
since $c_0=1$. Hence, we have $\nu\geq k$. Since $\C_1$ is non-MDS, we have $\nu\leq k$, therefore, we must have $d_1(\C_1^\perp)=k$.
\end{proof}

\begin{example}
Let $\mathcal{C}_1$ be a non-MDS TRS code defined in
(\ref{def:C(a,1)}). Let $q=11$, $n=9$, $k=4$, $t=2$, $h=1$, $\eta=6$, and take $\bm{\alpha}=(0,1,2,4,5,7,8,9,10)\in\mathbb F_{11}^{9}$.  We have its  generator matrix is 
\[
G=
\begin{pmatrix}
1&1&1&1&1&1&1&1&1\\
0&7&7&10&0&1&2&4&4\\
0&1&4&5&3&5&9&4&1\\
0&1&8&9&4&2&6&3&10
\end{pmatrix}.
\]
Then ${\rm rank}\, G_I=2$ for any $2$-subset $I$, and there exists the columns indexed by $J=\{2,4,7\}$ are linearly dependent. Therefore, $d_1(\C_1^\perp)=3$. We now verify this value by Proposition \ref{dC perp}. The subset $J=\{2,4,7\}$ corresponds to the polynomial
\[
P_I(x)=(x-1)(x-4)(x-8)=x^3+9x^2+1
\]
Hence, for $\nu=3$, $k=4$, and $h=1$, the first condition in
Proposition \ref{dC perp} is
\[
c_{\nu-h}=c_{k-h-1}=c_2=0,
\]
which is satisfied. For the second condition, 
\[
\eta\sum_{\ell=0}^{0}c_{k-h+\ell}e_{t-1-\ell}=6c_3e_1
=6\cdot1\cdot2=12=1.
\]
Thus the second condition is also satisfied.
\end{example}

With the minimum distances of $\mathcal{C}_1$ and its dual characterized above, Theorem \ref{border of dr} now yields a complete description of the generalized Hamming weight hierarchy of $\mathcal{C}_1$.

\begin{theorem}\label{thm:complete-single}
Let $\mathcal{C}_1$ be the  TRS code defined in~(\ref{def:C(a,1)}). If $\mathcal{C}_1$ is a MDS code, then
\[
d_r\bigl(\mathcal{C}_1\bigr)=
n-k+r,\ 1\le r\le k.
\]
Suppose that $\mathcal{C}_1$ is non-MDS. Then the generalized Hamming weight
hierarchy of $\mathcal{C}_1$ is given by
\[
d_r\bigl(\mathcal{C}_1\bigr)=
\begin{cases}
n-k+1-\mu,
& r=1,\\[1mm]
n-k+r-1,
& 2\le r<k+2-\nu,\\[1mm]
n-k+r,
& k+2-\nu
\le r\le k,
\end{cases}
\]
where $\mu$ and $\nu$ are determined by Proposition \ref{d1(C)} and Proposition \ref{dC perp}.
\end{theorem}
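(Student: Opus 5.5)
The theorem is an assembly of the results already proved, so the plan is to split into the MDS and non-MDS cases and cite the relevant statements in each.

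\emph{MDS case.} Here $d_1(\mathcal{C}_1)=n-k+1$, so $\mathcal{C}_1$ is $1$-MDS. By the remark following Lemma~\ref{d1<d2<dots<dr}, an $r$-MDS code is also $(r+i)$-MDS. The argument runs as follows. Lemma~\ref{d1<d2<dots<dr} gives $d_r(\mathcal{C}_1)\ge d_1(\mathcal{C}_1)+(r-1)=n-k+r$. Lemma~\ref{rMDS} gives the reverse inequality. Together these yield $d_r(\mathcal{C}_1)=n-k+r$ for all $1\le r\le k$.

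\emph{Non-MDS case, $r=1$.} Theorem~\ref{rangeofd(C)} places $d_1(\mathcal{C}_1)$ in $[n-k-t+1,\,n-k]$. Hence $d_1(\mathcal{C}_1)=n-k+1-\mu$ for a unique $\mu$ with $1\le\mu\le t$. Proposition~\ref{d1(C)} is exactly the criterion identifying this $\mu$ through the rank conditions on $A_\mu(I)$ and $A_{\mu+1}(I)$. So this line of the formula is just a restatement of Proposition~\ref{d1(C)}, and nothing further needs to be proved.

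\emph{Non-MDS case, $2\le r\le k$.} Theorem~\ref{border of dr} gives $d_r(\mathcal{C}_1)=n-k+r-1$ for $2\le r<k+2-d(\mathcal{C}_1^\perp)$ and $d_r(\mathcal{C}_1)=n-k+r$ for $k+2-d(\mathcal{C}_1^\perp)\le r\le k$. Proposition~\ref{dC perp} identifies $d(\mathcal{C}_1^\perp)$ with the minimal $\nu$ satisfying the stated coefficient conditions on $P_I(x)$. Substituting $\nu=d_1(\mathcal{C}_1^\perp)$ gives the claimed formula.

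The one point that needs care is the range of $\nu$. Proposition~\ref{dC perp} minimizes over $\nu\le k$, so we must check that this set is nonempty and that its minimum really equals $d(\mathcal{C}_1^\perp)$. This follows from Lemma~\ref{dual GHW of C} applied to the non-MDS code $\mathcal{C}_1$. Since $\mathcal{C}_1$ is non-MDS, $\mathcal{C}_1^\perp$ is also non-MDS. The dual Singleton bound gives $d(\mathcal{C}_1^\perp)\le k+1$, and the non-MDS property makes it strict, so $d(\mathcal{C}_1^\perp)\le k$. Consequently some $\nu$-subset with $\nu\le k$ gives dependent columns of $G$, and the minimum in Proposition~\ref{dC perp} is attained.

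Finally, we check that the three ranges are consistent. Since $\nu\le k$, we have $k+2-\nu\ge 2$, so the middle range may be empty but the ranges never overlap incorrectly. Also $d_k(\mathcal{C}_1)=n$, matching Theorem~\ref{rangeofd(C)}. This completes the plan.
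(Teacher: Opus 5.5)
Your proposal is correct and follows the paper's route: the paper gives no separate proof of this theorem and simply assembles Theorem~\ref{border of dr} with Propositions~\ref{d1(C)} and~\ref{dC perp}, using the remark after Lemma~\ref{d1<d2<dots<dr} that an $r$-MDS code is $(r+i)$-MDS for the MDS case. Your extra check that $d(\mathcal{C}_1^\perp)\le k$ when $\mathcal{C}_1$ is non-MDS, so that the minimum in Proposition~\ref{dC perp} is taken over a nonempty set, is a detail the paper leaves implicit.
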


\begin{corollary}
Let $\C_1$ be a non-MDS TRS code defined in~(\ref{def:C(a,1)}). Then $d_1(\C_1^\perp)\ge \max\{h+1,k-h\}$. Consequently, $d_r(\C_1)=n-k+r$ for every $r\ge \min\{h+1,k-h\}+1$.
\end{corollary}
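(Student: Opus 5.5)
The plan is to show that any $\nu$ columns of the generator matrix $G$ are linearly independent whenever $\nu\le\max\{h+1,k-h\}-1$. Since $G$ is a parity-check matrix of $\C_1^\perp$, this gives $d_1(\C_1^\perp)\ge\max\{h+1,k-h\}$. The two lower bounds $h+1$ and $k-h$ are handled separately.

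\textbf{The bound $d_1(\C_1^\perp)\ge h+1$.} This is already contained in the proof of Proposition~\ref{dC perp}. If $\nu\le h$, the rows ${\bm V}_0,\dots,{\bm V}_{\nu-1}$ of $G_I$ are untwisted monomial rows. They form a nonsingular $\nu\times\nu$ Vandermonde matrix, so ${\rm rank}\,G_I=\nu$. Alternatively, one can read this off directly from Proposition~\ref{dC perp}: for $\nu\le h$ the index range $c_{\nu-h},\dots,c_{k-h-1}$ would contain $c_0=1$, so the conditions there cannot hold.

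\textbf{The bound $d_1(\C_1^\perp)\ge k-h$.} I will use the conditions of Proposition~\ref{dC perp} for a $\nu$-subset $I$. When $\nu<k-h$, the required vanishing $c_{\nu-h}=\cdots=c_{k-h-1}=0$ includes the indices $j$ with $\nu-h\le j\le \nu$. One should be careful here: indices $j>\nu$ are zero by convention, and negative indices are zero too. So the real content of the condition is vanishing over the range $\max\{0,\nu-h\}\le j\le\min\{\nu,k-h-1\}$. If $\nu\le k-h-1$, this range includes $j=\nu$, and $c_\nu=\pm\prod_{i\in I}\alpha_i$ is allowed to vanish, so vanishing alone gives no contradiction.

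The contradiction must therefore come from the second condition. The sum there has upper limit $\min\{\nu-k+h,t-1\}<0$ when $\nu<k-h$. The sum is then empty and equals $0\neq1$. Hence no $\nu$-subset with $\nu<k-h$ qualifies, and $d_1(\C_1^\perp)\ge k-h$. This is the step I expect to be the main obstacle. One has to check that the reduction in Proposition~\ref{dC perp} from \eqref{eq:Vh+Vk-1+t} to the final form is still valid in this regime, since the vanishing condition and the Lemma~\ref{linear representation pf V_r} expansion then involve zero coefficients. If the reduction is doubtful, I would argue directly instead. For $\nu\le k-h-1$, every $c_{\nu+j-h}$ in \eqref{eq:Vh+Vk-1+t} has index at most $k-h-1$. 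These are exactly the indices forced to vanish, so the left side of \eqref{eq:Vh+Vk-1+t} is $0$ and cannot equal $1$.

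\textbf{The consequence for the hierarchy.} Combining the two bounds with Theorem~\ref{border of dr}, for the non-MDS code we get $d_r(\C_1)=n-k+r$ whenever $r\ge k+2-d_1(\C_1^\perp)$. Since $k+2-\max\{h+1,k-h\}=\min\{k-h+1,h+2\}=\min\{h+1,k-h\}+1$, every $r\ge\min\{h+1,k-h\}+1$ satisfies this, which gives the claim.
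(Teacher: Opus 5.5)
Your proposal is correct and follows essentially the same route as the paper. The bound $h+1$ comes from the nonsingular Vandermonde rows ${\bm V}_0,\dots,{\bm V}_{\nu-1}$; the bound $k-h$ comes from the second condition of Proposition~\ref{dC perp} becoming an empty sum, and your direct check that every $c_{\nu+j-h}$ in \eqref{eq:Vh+Vk-1+t} lies in the forced-zero range justifies a step the paper leaves implicit. The consequence then follows from the single-twist hierarchy theorem. One harmless aside: your remark that the vanishing condition alone gives no contradiction holds only for $h=0$. For $h\ge 1$ it forces $c_{\nu-h}=\cdots=c_\nu=0$, i.e.\ $x^{h+1}\mid P_I(x)$, which is impossible for distinct roots --- but your argument does not rely on it.
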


\begin{proof}
By Proposition \ref{dC perp}, a linearly dependent set of $\nu$ columns can occur only when $\nu>h$. Moreover, the second condition in Proposition \ref{dC perp} requires $\nu-k+h\ge 0$, and hence $\nu \ge k-h$. Therefore,
\[
d_1(\C_1^\perp)\ge\max\{h+1,k-h\}.
\]
The characterization  on the $d_r(\C_1)=n-k+r$ follows
immediately from Theorem \ref{thm:complete-single}.
\end{proof}

\begin{corollary}\label{cor:r-MDS-C1}
Let $\C_1$ be a non-MDS TRS code defined in~\eqref{def:C(a,1)}. Then, for $2\le r\le k$, $\C_1$ is $r$-MDS if and only if
\[
d_1(C_1^\perp)\ge k+2-r.
\]
\end{corollary}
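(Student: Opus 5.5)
The corollary follows directly from Theorem~\ref{border of dr}, which gives the complete picture of $d_r(\C_1)$ for $2\le r\le k$ when $\C_1$ is non-MDS. The plan is to read off the $r$-MDS condition $d_r(\C_1)=n-k+r$ from the two-case formula there and prove both directions.

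For sufficiency, suppose $d_1(\C_1^\perp)\ge k+2-r$. Rearranging gives $r\ge k+2-d_1(\C_1^\perp)$. Since also $r\le k$, the second case of Theorem~\ref{border of dr} applies, so $d_r(\C_1)=n-k+r$. Hence $\C_1$ is $r$-MDS.

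For necessity, suppose $\C_1$ is $r$-MDS, that is, $d_r(\C_1)=n-k+r$, and assume for contradiction that $d_1(\C_1^\perp)<k+2-r$. Then $2\le r<k+2-d_1(\C_1^\perp)$, so the first case of Theorem~\ref{border of dr} gives $d_r(\C_1)=n-k+r-1\ne n-k+r$. This is a contradiction.

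The only point needing care is a boundary check: the threshold $r_0=k+2-d_1(\C_1^\perp)$ must be consistent with the range $2\le r\le k$, so that the two cases of Theorem~\ref{border of dr} genuinely partition $\{2,\dots,k\}$.

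The upper end is fine. Since $d_k(\C_1)=n$ by Theorem~\ref{rangeofd(C)}, we have $r_0\le k$, and therefore $d_1(\C_1^\perp)\ge 2$.

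The lower end is also harmless. If $r_0\le 2$, the first case of Theorem~\ref{border of dr} is empty, and the equivalence still holds by the same argument.

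There is no real obstacle here. The argument is a direct translation of Theorem~\ref{border of dr}, together with the remark from the introduction that the $r$-MDS property propagates upward in $r$.
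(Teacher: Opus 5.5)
Your proof is correct and follows the paper's route. The paper gives no separate proof of this corollary; it is read off directly from the two-case formula of Theorem~\ref{border of dr} (equivalently Theorem~\ref{thm:complete-single}), exactly as you do in both directions.
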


\begin{remark}
The weight-hierarchy classifications of the  $s$-gap AMDS codes obtained in \cite{Yang2026} can be recovered uniformly from Theorem~\ref{thm:complete-single}. Let $G$ be a generator matrix of a  TRS code $\C$ given in Theorem 4.8 of \cite{Yang2026}, and assume that any $k-1$ columns of $G$ are linearly independent.  For each
$(k-1)$-subset $J\subseteq[n]$, let $N(J)$ denote the number of remaining columns contained in the span of the columns indexed by $J$, and put
\[
s=\max_{J\subseteq[n],\, |J|=k-1}N(J).
\]
Consequently, there exists a nonzero polynomial in the generator polynomial space $\mathcal{S}_1$ of $\C$ vanishing at $k-1+s$ evaluation points. Moreover, there exists no nonzero polynomial in the generator polynomial space $\mathcal{S}_1$ can vanish at more than $k-1+s$ evaluation points. Therefore,
\[
d_1(\C)=n-k+1-s.
\]
Thus, in the notation of Theorem~\ref{thm:complete-single}, we have $\mu=s$. Moreover, since any $k-1$ columns of $G$ are linearly independent and some $k$ columns are linearly dependent, we have $d_1(\C^\perp)=k$. It follows from Theorem~\ref{thm:complete-single} that
\[
\bigl(d_1(\C),\ldots,d_k(\C)\bigr)=\bigl(n-k+1-s,n-k+2,\ldots,n\bigr).
\]
By Lemma \ref{dual GHW of C}, we have
\[
\begin{aligned}
\{d_j(\C^\perp):1\le j\le n-k\}
&=
\{1,\ldots,n\}
 \setminus
 \{n+1-d_r(\C):1\le r\le k\}\\
&=
\{k,k+1,\ldots,n\}\setminus\{k+s\}.
\end{aligned}
\]
Thus $\C^\perp$ is an $s$-gap AMDS code.

In particular, for the case considered in
\cite[Theorem~4.8]{Yang2026}, the cases $s=1,\, s=2,\, s=3$ give, respectively, an NMDS code, a $2$-gap AMDS code, and a $3$-gap AMDS code. The case $s=0$ corresponds to an MDS code.
Hence the weight-hierarchy classification in
\cite[Theorem~4.8]{Yang2026} is recovered as a special case of Theorem~\ref{thm:complete-single} together with Lemma \ref{dual GHW of C}.
\end{remark}

\begin{table}[htbp]
\centering
\caption{Generalized Hamming weight hierarchies of single-twist TRS codes
}
\label{tab:single-twist-examples}
\resizebox{\textwidth}{!}{
\begin{tabular}{ccccccccccc}
\toprule
$q$ & $n$ & $k$ & $t$ & $h$ & $\eta$
& $\mu$ & $\nu$ &   $\boldsymbol{\alpha}$ & $(d_1,d_2,\dots,d_k)$\\
\midrule
$7$ & $5$ & $3$ & $1$ & $1$ & $3$
& $0$ & $4$  &  $(1,2,3,5,6)$ & $(3,4,5)$\\

$7$ & $5$ & $3$ & $1$ & $1$ & $3$
& $1$ & $2$  &  $(0,1,2,3,4)$ & $(2,3,5)$\\

$7$ & $5$ & $3$ & $1$ & $1$ & $3$
& $1$ & $3$  &  $(0,1,2,3,5)$ & $(2,4,5)$\\

7 & 6 & 4 & 2 & 3 & 1 & 1 & 4
& $(0,1,2,3,4,6)$
& $(2,4,5,6)$ \\

7 & 6 & 4 & 2 & 3 & 1 & 2 & 4
& $(0,1,2,3,4,5)$
& $(1,4,5,6)$ \\

$11$ & $9$ & $4$ & $2$ & $1$ & $6$
& $1$ & $4$  &  $(0,1,2,3,4,5,6,7,9)$ & $(5,7,8,9)$\\

$11$ & $9$ & $4$ & $2$ & $1$ & $6$
& $2$ & $3$  &  $(0,1,2,3,4,5,6,7,8)$ & $(4,6,8,9)$\\

$11$ & $8$ & $5$ & $3$ & $0$ & $1$ & $0$ & $6$
& $(0,1,2,3,4,6,9,10)$
& $(4,5,6,7,8)$ \\

$11$ & $8$ & $5$ & $3$ & $0$ & $1$ & $1$ & $5$
& $(0,1,2,3,4,5,6,7)$
& $(3,5,6,7,8)$ \\

$11$ & $8 $& $5$ & $3$ & $0$ &$ 1$ & $2$ & $5$
& $(0,1,2,3,4,5,6,8)$
& $(2,5,6,7,8)$ \\

$13$ & $10$ & $5$ & $1$ & $1$ & $7$ & $1$ & $4$ & $(0,1,2,3,4,5,6,7,8,9)$ & $(5,6,8,9,10)$\\

$13$ & $10$ & $5$ & $1$ & $1$ & $7$ & $1$ & $5$ & $(0,1,2,3,4,5,6,8,9,10)$ & $(5,7,8,9,10)$\\

13 & 11 & 7 & 2 & 1 & 1 & 1 & 6
& $(0,1,2,3,4,5,6,7,8,9,11)$
& $(4,5,7,8,9,10,11)$ \\

13 & 11 & 7 & 2 & 1 & 1 & 1 & 7
& $(0,1,2,3,4,5,6,7,8,9,10)$
& $(4,6,7,8,9,10,11)$ \\

13 & 11 & 7 & 2 & 1 & 1 & 2 & 6
& $(0,1,2,3,4,5,6,7,8,9,12)$
& $(3,5,7,8,9,10,11)$ \\

13 & 11 & 7 & 2 & 1 & 1 & 2 & 7
& $(0,1,2,3,4,5,6,7,8,10,12)$
& $(3,6,7,8,9,10,11)$ \\

$17$ & $14$ & $7$ & $1$ & $1$ & $3$ & $1$ & $6$ & $(0,1,2,3,4,5,6,7,8,9,10,11,12,13)$ & $(7,8,10,11,12,13,14)$\\

$19$ & $15$ & $7$ & $1$ & $0$ & $3$ & $1$ & $7$ & $(0,1,2,3,4,5,6,7,8,9,10,11,12,13,14)$ & $(8,10,11,12,13,14,15)$\\
\bottomrule
\end{tabular}
}
\end{table}

\section{GHW of the TRS Codes with two Twists}
In this section, we determine the weight hierarchy of  a class of TRS codes $\mathcal{C}_2({\bm \alpha},{\bm 1})$ with two twists ${\bm t}=(t_1,t_2)$ and hooks ${\bm h}=(h_1,h_2)$. Let 
\begin{equation}\label{eq:S}
	\mathcal{S}_2=\left\{\sum_{i=0}^{k-1}f_ix^i+\eta_1 f_{h_1} x^{k-1+t_1}+\eta_2 f_{h_2} x^{k-1+t_2},\ f_i\in {\mathbb F_q}\right\},
\end{equation}
where $0\leq h_1<h_2\leq  k-1$  and $1\leq t_1<t_2\leq n-k$.  The TRS code $\mathcal{C}_2({\bm \alpha},{\bm 1})$ is defined as
\begin{equation}\label{def:C(a,1)2}
	\mathcal{C}_2({\bm \alpha},{\bm 1})=\{(f(\alpha_1),f(\alpha_2),\dots, f(\alpha_n)),f\in \mathcal{S}_2\}.
\end{equation}

We next investigate the generalized Hamming weight hierarchy of $\mathcal{C}_2(\bm{\alpha},\bm{1})$. For convenience, we denote $\mathcal{C}_2(\bm{\alpha},\bm{1})$ by $\mathcal{C}_2$.  We first establish the possible
values of $d_r(\mathcal{C}_2)$ for each
$1\le r\le k$. This provides the basic structure of the weight hierarchy and will be refined in the subsequent results.

\begin{theorem}\label{thm:range}
Let $\C_2$ be a TRS code defined in (\ref{def:C(a,1)2}). Then we have
\[
n-k-t_2+1\le d_1(\C_2)\le n-k+1,\ n-k-t_1+1\le d_2(\C_2)\le n-k+2,
\]
and for $3\leq r\leq k$, we have
\[
n-k+r-2\le d_r(\C_2)\le n-k+r,
\]
Moreover, if $k\ge3$, then $d_k(\C_2)=n$.
\end{theorem}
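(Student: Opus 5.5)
The plan is to mirror the proof of Theorem~\ref{rangeofd(C)}, replacing the single twisted direction by a two-dimensional one. The upper bounds in every case are the generalized Singleton bound of Lemma~\ref{rMDS}, $d_r(\C_2)\le n-k+r$. So only the lower bounds and the statement $d_k(\C_2)=n$ need work.

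For $d_1$, I will use that $\mathcal{S}_2\subseteq \mathbb{F}_q[x]_{<k+t_2}$. Hence $\C_2$ is a subcode of ${\rm GRS}_{k+t_2}(\bm\alpha,\bm 1)$, and $d_1(\C_2)\ge n-k-t_2+1$.

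For $d_2$, I will use the following observation. Let $g_1=x^{h_1}+\eta_1x^{k-1+t_1}$ and $g_2=x^{h_2}+\eta_2x^{k-1+t_2}$, and let $U'={\rm span}\{x^i:0\le i\le k-1,\ i\ne h_2\}\oplus\langle g_1\rangle$. This $U'$ is a $(k-1)$-dimensional subspace of $\mathcal{S}_2$ consisting of polynomials of degree $<k+t_1$. Any $2$-dimensional $W\subseteq\mathcal{S}_2$ satisfies $\dim(W\cap U')\ge 2+(k-1)-k=1$. So $W$ contains a nonzero polynomial of degree at most $k-1+t_1$. If $I=N(\mathcal{D})$ is the common zero set of $\mathcal{D}=ev(W)$, this polynomial is divisible by $P_I$, which forces $|I|\le k-1+t_1$. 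Hence $d_2(\C_2)\ge n-k-t_1+1$ by \eqref{eq:dr(Calpha)}.

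For $3\le r\le k$, I will set $U={\rm span}\{x^i:0\le i\le k-1,\ i\ne h_1,h_2\}$, which has dimension $k-2$ and satisfies $\mathcal{S}_2=U\oplus\langle g_1,g_2\rangle$. For an $r$-dimensional $W$, the intersection has $\dim(W\cap U)\ge r+(k-2)-k=r-2$. Exactly as in the proof of Theorem~\ref{rangeofd(C)}, every element of $W\cap U$ has degree $<k$ and is divisible by $P_I$. Therefore $W\cap U\subseteq V_I$ with $\dim V_I=k-|I|$. This gives $|I|\le k-r+2$ and so $d_r(\C_2)\ge n-k+r-2$.

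For $d_k(\C_2)=n$ with $k\ge 3$, I need to show that every coordinate is nonzero on some codeword. Because $k\ge3$, there is an index $i_0\in\{0,1,2\}\setminus\{h_1,h_2\}$ with $i_0\le 1$, or else $\{h_1,h_2\}=\{0,1\}$. I will split into cases.
\begin{itemize}
\item If $0\notin\{h_1,h_2\}$, then $\bm 1\in\C_2$.
\item If $h_1=0$ and $h_2\ne1$, then $x\in\mathcal{S}_2$, which handles every $\alpha_i\ne0$. At $\alpha_i=0$, the polynomial $g_1$ takes the value $1$.
\item If $(h_1,h_2)=(0,1)$, then $x^2\in\mathcal{S}_2$ because $k\ge3$, which handles nonzero $\alpha_i$. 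At $0$, again $g_1(0)=1$.
\end{itemize}

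The main subtlety is making sure the auxiliary subspace $U'$ in the $d_2$ bound has the right dimension and degree bound, and the case analysis for $d_k$. The rest is a direct transcription of the single-twist argument.
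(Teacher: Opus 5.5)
Your proposal is correct and follows essentially the paper's proof. The upper bounds come from the Singleton bound and $d_1$ from a degree count. For $d_2$ you find a nonzero element of $W$ with vanishing $x^{h_2}$-coefficient, hence with no $x^{k-1+t_2}$ term; the paper phrases this as the kernel of $f\mapsto f_{h_2}$ on $W$, you as $W\cap U'$. For $r\ge 3$ you use $\dim(W\cap U)\ge r-2$ together with $W\cap U\subseteq V_I$. For $d_k=n$ you give an explicit case-by-case version of the paper's check that every coordinate lies in the support of some codeword.

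One slip needs fixing: as written, your $U'$ contains $x^{h_1}\notin\mathcal S_2$ and has dimension $k$. You should take $U'=\mathrm{span}\{x^i:0\le i\le k-1,\ i\notin\{h_1,h_2\}\}\oplus\langle g_1\rangle=\{f\in\mathcal S_2: f_{h_2}=0\}$, which is $(k-1)$-dimensional and has all degrees at most $k-1+t_1$, as your count requires.
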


\begin{proof}
The upper bounds follow immediately from the generalized Singleton bound. For an $r$-dimensional subcode $D$ of $\C_2$, let $W$ be the
corresponding $r$-dimensional subspace of $\mathcal S_2$. Define
\[
I(W)
=
\left\{
i\in[n]:
f(\alpha_i)=0
\text{ for every }f\in W
\right\}.
\]
Thus $I(W)$ is the set of common zero positions of all codewords in $D$. From equation (\ref{eq:dr(C)}), we have
\begin{equation}\label{eq:dr-common-zero}
d_r(\C_2)=n-\max \{ | I(W)|\,:\, W\leq \mathcal{S}_2,\ {\rm dim}\,W=r\}.
\end{equation}

For $r=1$, every nonzero polynomial $f\in\mathcal S_2$ satisfies $\deg f\le k-1+t_2$. Hence $f$ has at most $k-1+t_2$ zeros among the evaluation points, and
therefore
\[
d_1(\C_2)\ge n-k-t_2+1.
\]

We next consider $r=2$. Let $W$ be an arbitrary two-dimensional subspace of $\mathcal S_2$. Consider the linear map
\[
\varphi:W\longrightarrow\F_q,
\qquad
f\longmapsto f_{h_2},
\]
where $f_{h_2}$ denotes the coefficient of $x^{h_2}$ in $f$. Since $\dim W=2$ and $\dim\F_q=1$, then we have
\[
\dim\,\ker\varphi=\dim\, W-\dim\,\operatorname{Im}\varphi \ge 2-1=1.
\]
Therefore, there exists a nonzero polynomial $f\in W$
such that $\varphi(f)=f_{h_2}=0$. Hence this polynomial $f$ has the form
\[
f(x)=\sum_{0\leq i\leq k-1,i\neq h_2}f_ix^i+\eta_1f_{h_1}x^{k-1+t_1},
\]
and therefore $\deg\, f\le k-1+t_1$. By equation (\ref{eq:dr-common-zero}),
\[
d_2(\C_2)\ge n-(k-1+t_1)=n-k-t_1+1.
\]

Now let $3\leq r\leq k$, and define
\[
U={\rm span}_{\F_q}\{x^i:0\leq i\leq k-1,\ i\notin\{h_1,h_2\}\}.
\]
Then $\dim\, U=k-2$. For any $r$-dimensional subspace $W\leq\mathcal S_2$,
\[
\dim\,(W\cap U)\geq \dim\, W+\dim\, U-\dim\,(W+U)\geq  r+k-2-k=r-2.
\]
Let $I=I(W)$ be the common zero set of $W$, we have every polynomial in $W$ vanishes at all $\alpha_i$ with $i\in I$.
Hence every polynomial in $W\cap U$ is divisible by
\[
P_I(x)=\prod_{i\in I}(x-\alpha_i).
\]
Now define
\[
V_I=\left\{v(x)\in\F_q[x]\,:\,\deg v<k,\ P_I(x)\mid v(x)
\right\}.
\]
Similar to Theorem \ref{rangeofd(C)}, we have ${\rm dim}\,V_I=k-|I|$ and  $W\cap U\subseteq V_I$. Hence, 
Therefore,
\[
r-2\le\dim\,(W\cap U)\le k-|I|,
\]
and $|I|\le k-r+2$. By equation (\ref{eq:dr-common-zero}), it follows that
\[
d_r(\C)\ge n-k+r-2,\ 3\le r\le k.
\]

Finally, if $k\ge3$, then at least one monomial among
$1,x,\ldots,x^{k-1}$ is untwisted. As in Theorem \ref{rangeofd(C)}, one
checks that every coordinate occurs in the support of some codeword. Hence $\supp(\C)=[n]$. Since $\C$ itself is the unique $k$-dimensional subcode, we have $d_k(\C)=n$. This completes the proof.
\end{proof}

Next, we give a  classification of the generalized Hamming weights $d_r(C)$ for $r\ge 3$.
\begin{theorem}\label{thm:higher}
Let $\C_2$ be a TRS code defined in (\ref{def:C(a,1)2}), and
assume that $k\ge3$. Then, for $3\le r\le k$,
\[
d_r(\C_2)=
\begin{cases}
n-k+r-2, & 3\le r<k+3-d_2(\C_2^\perp),\\[1mm]
n-k+r-1, & k+3-d_2(\C^\perp)\le r<k+2-d_1(\C_2^\perp),\\[1mm]
n-k+r, & k+2-d_1(\C_2^\perp)\le r\le k.
\end{cases}
\]
\end{theorem}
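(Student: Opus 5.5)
The proof will run parallel to that of Theorem~\ref{border of dr}. It has two parts. First we show that the excesses $\epsilon_r:=n-k+r-d_r(\C_2)$ are non-increasing in $r$ on $3\le r\le k$. Then we locate the two thresholds through the duality of Lemma~\ref{dual GHW of C}.

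\emph{Step 1: monotonicity.} By Theorem~\ref{thm:range}, $\epsilon_r\in\{0,1,2\}$ for $3\le r\le k$, and $\epsilon_k=0$ since $d_k(\C_2)=n$. By Lemma~\ref{d1<d2<dots<dr}, $d_{r+1}(\C_2)\ge d_r(\C_2)+1$, which gives $\epsilon_{r+1}\le\epsilon_r$. The sequence $\epsilon_3,\dots,\epsilon_k$ is therefore non-increasing from some value at most $2$ down to $0$. Define
\[
r_1=\min\{r\ge 3:\epsilon_r\le 1\},\qquad r_0=\min\{r\ge 3:\epsilon_r=0\}.
\]
Then $d_r=n-k+r-2$ for $3\le r<r_1$, $d_r=n-k+r-1$ for $r_1\le r<r_0$, and $d_r=n-k+r$ for $r\ge r_0$. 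This is the shape of the claimed formula, and it remains to identify $r_0=k+2-d_1(\C_2^\perp)$ and $r_1=k+3-d_2(\C_2^\perp)$.

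\emph{Step 2: identifying the thresholds.} Consider the gaps, i.e.\ the integers in $\{1,\dots,n\}$ that are not generalized Hamming weights of $\C_2$. By Lemma~\ref{dual GHW of C} these are exactly $n+1-d_j(\C_2^\perp)$. Their largest element is $n+1-d_1(\C_2^\perp)$ and their second largest is $n+1-d_2(\C_2^\perp)$.

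On the other hand, the values $d_r(\C_2)$ for $r\ge 3$ run consecutively as follows:
\begin{itemize}
\item $n-k+1,\dots,n-k+r_1-2$ on the first block;
\item $n-k+r_1,\dots,n-k+r_0-2$ on the middle block;
\item $n-k+r_0,\dots,n$ on the last block.
\end{itemize}
All integers above $n-k$ are thus covered except $n-k+r_0-1$, and additionally $n-k+r_1-1$ when $r_1<r_0$. When $r_1=r_0$, the jump is by $2$ at that point, so the two missing values are $n-k+r_0-1$ and $n-k+r_0-2$. In every case the two largest gaps above $n-k$ are $n-k+r_0-1$ and $n-k+r_1-2+\mathbf{1}[\ldots]$. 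This needs care.

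Concretely, I would handle the two cases explicitly.
\begin{itemize}
\item If $r_1<r_0$, the gaps in the range $(d_2,n]$ are $n-k+r_0-1>n-k+r_1-1$. Here I must check whether the index convention places the second gap at $n-k+r_1-1$, giving $r_1=k+2-d_2(\C_2^\perp)$, or whether the statement's $k+3-d_2$ arises because the threshold is defined by a strict inequality. I would recompute on a small example to fix the off-by-one.
\item If $r_1=r_0$, the two gaps are consecutive.
\end{itemize}
In both cases I would equate the largest gap with $n+1-d_1(\C_2^\perp)$ and the second largest with $n+1-d_2(\C_2^\perp)$. This yields $r_0=k+2-d_1(\C_2^\perp)$ exactly as in Theorem~\ref{border of dr}, and $r_1$ in terms of $d_2(\C_2^\perp)$.

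\emph{Main obstacle.} The delicate point is making sure that $d_1(\C_2)$ and $d_2(\C_2)$, which are only bounded by Theorem~\ref{thm:range} and may be much smaller, do not create gaps lying above the ones produced by the higher weights. If they did, the second-largest gap would not be the one coming from $r_1$.

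We have $d_2(\C_2)<d_3(\C_2)=n-k+3-\epsilon_3$. Every integer strictly between $d_2$ and $d_3$ is therefore also a gap, and the gaps coming from the $r\ge 3$ structure all lie above $d_3$. I would argue that when the relevant gaps above $d_3$ are fewer than two (for example when $r_1=3$), the missing gaps are supplied by the interval $(d_2,d_3)$, or by $(d_1,d_2)$. The formula must then be interpreted with the first range empty, which requires $k+3-d_2(\C_2^\perp)\le 3$.

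I would verify this consistency via the inequalities $d_2(\C_2^\perp)\ge k+\epsilon_3$, obtained from counting gaps above $d_3$. This completes the identification. The off-by-one in the middle threshold is where I expect the real work to lie.
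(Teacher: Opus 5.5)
Your Step 1 is correct and coincides with the paper's argument (your $r_1,r_0$ are the paper's $r_1,r_2$). Locating the thresholds through the two largest non-weights via Lemma~\ref{dual GHW of C} is also what the paper does. But Step 2, as written, does not prove the statement: your block endpoints are off by one, and you leave the resulting discrepancy open (``recompute on a small example'' is not an argument). Since $d_r(\C_2)=n-k+r-2$ for $3\le r<r_1$, the first block ends at $d_{r_1-1}(\C_2)=n-k+r_1-3$. The middle block begins at $d_{r_1}(\C_2)=n-k+r_1-1$, not at $n-k+r_1$. So the non-weight produced at the first threshold is $n-k+r_1-2$, in both subcases $r_1<r_0$ and $r_1=r_0$; your own computation for $r_1=r_0$ already shows this. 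Equating it with $n+1-d_2(\C_2^\perp)$ gives $r_1=k+3-d_2(\C_2^\perp)$ directly, with no strict-inequality convention involved. The value $n-k+r_1-1$ you propose is in fact the weight $d_{r_1}(\C_2)$ whenever $r_1<r_0$. In the paper's example with $n=13$, $k=9$, one has $r_1=4$ and $r_0=5$; the non-weights above $d_3(\C_2)=5$ are $6$ and $8$, while $7=d_4(\C_2)$.

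The degenerate case is also not handled correctly. The fact you need is a count. The interval $(d_3(\C_2),n]$ contains $n-d_3(\C_2)=k-3+\epsilon_3$ integers, of which exactly $k-3$ are the weights $d_4,\dots,d_k$. So exactly $\epsilon_3$ non-weights lie above $d_3$, and these exceed every non-weight coming from $d_1,d_2$.
\begin{itemize}
\item If $\epsilon_3=2$ (i.e.\ $r_1\ge4$), the two largest non-weights are $n-k+r_0-1$ and $n-k+r_1-2$, and both identifications are exact.
\item If $\epsilon_3\le1$, at most one non-weight exceeds $n-k+1$. Hence $n+1-d_2(\C_2^\perp)\le n-k+1$, i.e.\ $d_2(\C_2^\perp)\ge k$, so the first range is empty and the second range, read inside $3\le r\le k$, starts at $3=r_1$.
\item If $\epsilon_3=0$, no non-weight exceeds $n-k+2$. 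Hence $d_1(\C_2^\perp)\ge k-1$, and the third range covers all $r\ge3$.
\end{itemize}
Your proposed check $d_2(\C_2^\perp)\ge k+\epsilon_3$ is false. For $\epsilon_3=2$ one has $d_2(\C_2^\perp)=k+3-r_1\le k-1$ (in the example above, $8<11$). For $\epsilon_3=1$ with $d_2(\C_2)\le n-k$, the integer $n-k+1$ is the second largest non-weight, giving $d_2(\C_2^\perp)=k<k+1$. The paper's proof also passes over these boundary cases silently: it asserts that $n-k+r_2-1$ and $n-k+r_1-2$ are non-weights even when $r_2=3$ or $r_1=3$, where $d_2(\C_2)$ may occupy them. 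The counting argument above is what closes this.
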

\begin{proof}
By Theorem~\ref{thm:range}, for $3\le r\le k-1$, we have
\[
d_r(\C_2)\in\{n-k+r-2,\ n-k+r-1,\ n-k+r\},
\]
and $d_k(\C_2)=n$. Let
\[
r_1=\min\left\{r\,:\,3\le r\le k,\ d_r(\C_2)\ge n-k+r-1\right\},
\]
and
\[
r_2=\min\left\{r\,:\,3\le r\le k,\ d_r(\C_2)=n-k+r\right\}.
\]
By Lemma \ref{d1<d2<dots<dr},  if $d_r(\C_2)\ge n-k+r-1$, then
\[
d_{r+1}(\C_2)\ge d_r(\C_2)+1\ge n-k+r,
\]
and hence $d_{r+1}(\C_2)\ne n-k+(r+1)-2$. Similarly, once $d_r(\C_2)=n-k+r$, we have
\[
d_{r+1}(\C_2)\ge n-k+r+1,
\]
and by Lemma \ref{rMDS}, we have $d_{r+1}(\C_2)=n-k+r+1$. Therefore, for $r\geq 3$, the  generalized Hamming weights necessarily have the form
\[
d_r(\C_2)=\begin{cases}
n-k+r-2, & 3\le r<r_1,\\
n-k+r-1, & r_1\le r<r_2,\\
n-k+r,   & r_2\le r\le k.
\end{cases}
\]

We now determine the values of $r_1$ and $r_2$. First, we  consider the value of  $r_2$. Since $r_2$ is the first index for which $d_{r_2}(\C_2)=n-k+r_2$, and the integer $n-k+r_2-1$ is not a generalized Hamming weight of $\C_2$. On the other hand,
\[
d_{r_2}(\C_2)=n-k+r_2,\ d_{r_2+1}(\C_2)=n-k+r_2+1,\dots,\ d_k(\C_2)=n.
\]
Hence every integer greater than $n-k+r_2-1$ is a generalized Hamming weight of $\C_2$. Thus $n-k+r_2-1$ is the largest integer which is not a generalized Hamming weight of $\C_2$. By Lemma \ref{dual GHW of C}, the largest integer which is not a generalized Hamming weight of $\C_2$ is $n+1-d_1(\C_2^\perp)$. Therefore,
\[
n-k+r_2-1=n+1-d_1(\C_2^\perp),
\]
which gives $r_2=k+2-d_1(\C_2^\perp)$. Similarly, we have  $n-k+r_1-2$ is the second largest integer which is not a generalized Hamming
weight of $\C_2$. Again by Lemma \ref{dual GHW of C}, the second largest integer which is not a generalized Hamming weight of $\C_2$ is $n+1-d_2(\C_2^\perp)$. Therefore,
\[
n-k+r_1-2=n+1-d_2(\C_2^\perp),
\]
and hence $r_1=k+3-d_2(\C_2^\perp)$. This completes the proof.
\end{proof}

\begin{example}
Let $\C_2$ be a TRS code defined in (\ref{def:C(a,1)2}). Let $q=17$, $n=13$, $k=9$, $t_1=2$, $t_2=4$, $h_1=2$, $h_2=4$, and $\eta_1=\eta_2=-1$. Take $\bm{\alpha}=(1,2,4,8,9,13,15,16,0,3,5,6,7)$. With the help of SageMath, we have $d_1(\C_2^\perp)=6$ and $d_2(\C_2^\perp)=8$. Moreover, its generalized Hamming weight can be derived as follows:
\[
\bigl(d_1(\C_2),d_2(\C_2),d_3(\C_2),d_4(\C_2),d_5(\C_2),d_6(\C_2),d_7(\C_2),d_8(\C_2),d_9(\C_2)\bigr)=(3,4,5,7,9,10,11,12,13).
\]
This result is consistent with Theorem \ref{thm:higher}.
\end{example}

We next give explicit  conditions for determining the first two generalized Hamming weights of $\C_2$ and $\C_2^{\perp}$.

\begin{proposition}\label{prop:firsttwo d1 d2}
Let $\C_2$ be a TRS code defined in (\ref{def:C(a,1)2}). For any $I\subseteq[n]$ with $|I|=k-r+\mu$, where $0\leq \mu \leq t_2$ if $r=1$, and $0\leq \mu \leq t_1+1$ if $r=2$. Let 
\[
P_I(x)=\prod_{i\in I}(x-\alpha_i)=\sum_{j=0}^{k-r+\mu}c_jx^{k-r+\mu-j},
\]
and let $A_{r,\mu}(I)$ be the matrix
defined as follows:
\begin{equation*}
\left(
\begin{array}{cccccc}
c_1&c_0&\cdots&0\\
c_2&c_1&\cdots&0\\
\vdots&\vdots&&\vdots\\
c_{t_2-t_1-1}&c_{t_2-t_1-2}&\cdots&c_{\mu-r-t_1}\\
c_{t_2-t_1}-\eta_1c_{k-1+t_2-h_1}&c_{t_2-t_1-1}-\eta_1c_{k-2+t_2-h_1}&\cdots&c_{\mu-r-t_1+1}-\eta_1c_{k-r+\mu-h_1}\\
c_{t_2-t_1+1}&c_{t_2-t_1}&\cdots&c_{\mu-r-t_1+2}\\
\vdots&\vdots&&\vdots\\
c_{t_2-1}&c_{t_2-2}&\cdots&c_{\mu-r}\\
1-\eta_2c_{k-1+t_2-h_2}&-\eta_2c_{k-2+t_2-h_2}&\cdots&
-\eta_2c_{k-r+\mu-h_2}
\end{array}
\right).
\end{equation*}
Then $d_r(\C_2)=n-k+r-\mu$ if and only if the following conditions hold:
\begin{enumerate}
\item [(1)] There exists a $(k-r+\mu)$-subset $I_1\subseteq[n]$ such that ${\rm rank}\, A_{r,\mu}(I_1)\le t_2-\mu$.
\item [(2)] For every $(k-r+\mu+1)$-subset $I_2\subseteq[n]$, ${\rm rank}\, A_{r,\mu+1}(I_2)\ge t_2-\mu$.
\end{enumerate}
\end{proposition}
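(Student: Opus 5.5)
The argument follows the proof of Proposition \ref{d1(C)}. By (\ref{eq:dr-common-zero}), $d_r(\C_2)=n-k+r-\mu$ holds if and only if two things are true. First, some $r$-dimensional subspace $W\le\mathcal S_2$ has a common zero set of size at least $k-r+\mu$. Second, no $r$-dimensional $W$ has a common zero set of size at least $k-r+\mu+1$. Taking a subset $I$ of the common zeros of exactly the prescribed size, it suffices to show the following claim for every $m$-subset $I$ with $m=k-r+\mu$:
\[
\dim\{f\in\mathcal S_2 : P_I\mid f\} \;=\; (t_2-\mu+1)-{\rm rank}\,A_{r,\mu}(I).
\]
Given the claim, an $r$-dimensional $W$ vanishing on $I$ exists if and only if this dimension is at least $r$. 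For $r=1$ this is exactly ${\rm rank}\,A_{1,\mu}(I)\le t_2-\mu$. For $r=2$ the parametrization has to be adjusted so that the number of columns is $t_2-\mu+2$; I would recheck the indexing of the last column of $A_{2,\mu}$ to confirm this. Granting it, the condition becomes ${\rm rank}\,A_{2,\mu}(I)\le t_2-\mu$. Condition (2) is then the negation of condition (1) for $\mu+1$: nonexistence for every larger subset $I_2$ means ${\rm rank}\,A_{r,\mu+1}(I_2)\ge t_2-\mu$, which gives the statement.

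To prove the claim, I would write any $f$ with $P_I\mid f$ and $\deg f\le k-1+t_2$ as $f=P_I(x)q(x)$, with $\deg q\le k-1+t_2-m$ and $q(x)=\sum_i b_i x^{\deg-i}$. The coefficient of $x^j$ in $f$ is $f_j=\sum_i b_i c_{\,k-1+t_2-j-i}$, with $c_s=0$ outside $0\le s\le m$. Membership $f\in\mathcal S_2$ is equivalent to the following conditions on the coefficients of $x^{k},\dots,x^{k-1+t_2}$:
\begin{itemize}
\item $f_j=0$ for $k\le j\le k-1+t_2$ with $j\ne k-1+t_1$ and $j\ne k-1+t_2$;
\item $f_{k-1+t_1}=\eta_1 f_{h_1}$;
\item $f_{k-1+t_2}=\eta_2 f_{h_2}$.
\end{itemize}
Writing each of these in terms of the $b_i$ gives $t_2$ linear equations. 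The top equation is $b_0=\eta_2\sum_i b_ic_{k-1+t_2-h_2-i}$, which is the last row of $A_{r,\mu}(I)$. The equation $f_{k-1+t_1}=\eta_1f_{h_1}$ is the row with entries $c_{t_2-t_1+\cdot}-\eta_1 c_{k-1+t_2-h_1+\cdot}$. The remaining rows are the plain Toeplitz rows $c_1,c_2,\dots$. I would order the equations by descending degree, skipping $j=k-1+t_2$ and placing it last, and so match the display row by row. Since $q\mapsto P_Iq$ is injective, the solution space of $A_{r,\mu}(I)\mathbf b=\mathbf 0$ is isomorphic to $\{f\in\mathcal S_2: P_I\mid f\}$, and rank–nullity gives the claim.

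The main obstacle is the bookkeeping that matches the rows to the displayed matrix, in particular placing the mixed $\eta_1$-row at position $t_2-t_1$ and getting the column ranges right. This is especially delicate for $r=2$, where the number of free coefficients and the last column index $\mu-r$ must be checked so that the rank threshold comes out as $t_2-\mu$. I would verify the small cases $\mu=0$ and $\mu=t_1+1$ explicitly to confirm the conventions. If the indexing still does not give the threshold $t_2-\mu$ for $r=2$, the fallback is a dimension count: $W\cap\{f_{h_2}=0\}$ reduces the question to the one-twist space, as in Theorem \ref{thm:range}.
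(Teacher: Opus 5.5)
Your proposal is correct and follows the paper's proof essentially step for step. You reduce to the common-zero set, write $f=P_I(x)q(x)$, encode membership in $\mathcal{S}_2$ as $t_2$ linear conditions on the coefficients of $q$ (which gives $A_{r,\mu}(I)$), and apply rank--nullity, with condition (2) being the negation of condition (1) at $\mu+1$.

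Your stated claim $\dim F_I=(t_2-\mu+1)-\mathrm{rank}\,A_{r,\mu}(I)$ is correct only for $r=1$. However, your own parametrization $\deg q\le k-1+t_2-|I|$ gives $k+t_2-|I|=t_2+r-\mu$ unknowns for both values of $r$. The displayed matrix has exactly that many columns: its rows end in $c_{\mu-r}$ and $-\eta_2c_{k-r+\mu-h_2}$. So the claim should read $\dim F_I=t_2+r-\mu-\mathrm{rank}\,A_{r,\mu}(I)$, the threshold $t_2-\mu$ comes out uniformly, and no fallback is needed.
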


\begin{proof}
For an $r$-dimensional subcode $\mathcal{D}$ of $\C_2$, let
\[
N(\mathcal{D})=\{i\in[n]:c_i=0\text{ for every }c\in \mathcal{D}\}
\]
be its common zero set. By the definition of the generalized Hamming weight, we have
\[
d_r(\C_2)=n-\max_{\substack{\mathcal{D}\le\C_2\\ \dim\, \mathcal{D}=r}}|N(\mathcal{D})|.
\]
Hence $d_r(\C_2)=n-k+r-\mu$ if and only if $\max_{\mathcal{D}\le\C_2}|N(\mathcal{D})|=k-r+\mu$, i.e., there exist  an $r$-dimensional subcode $\mathcal{D}$ having at least $k-r+\mu$ common zero coordinates. Let  $I\subseteq[n]$ be a $(k-r+\mu)$-subset, and 
\[
P_I(x)=\prod_{i\in I}(x-\alpha_i)=\sum_{j=0}^{k-r+\mu}c_jx^{k-r+\mu-j}.
\]
An $r$-dimensional subcode $\mathcal{D}$ whose codewords vanish on all coordinates indexed by $I$ exists if and only if the space
\[
F_I=\{f\in\mathcal S_2:P_I(x)\mid f(x)\}
\]
has dimension at least $r$. Let $f(x)\in F_I$. Since $P_I(x)\mid f(x)$, there exists a polynomial $q(x)$ such that  $f(x)=P_I(x)q(x)$. Since every polynomial in $\mathcal S_2$ has degree at most $k-1+t_2$, we have
$\deg\, q(x)\le t_2+r-\mu-1$. We can assume that 
\[
q(x)=\sum_{j=0}^{t_2+r-\mu-1}b_jx^{t_2+r-\mu-1-j}.
\]
Then we have the coefficient of $x^{k-1+t_2-m}$ in $f(x)$ is
\begin{equation}\label{eq:coefficient of f}
f_{k-1+t_2-m}=\sum_{j=0}^{t_2+r-\mu-1}c_{m-j}b_j,
\end{equation}
where $0\leq m\leq k-1+t_2$. For any polynomial $f(x)\in \mathcal{S}_2$, we have 
\[
f_{k-1+u}=0\ \text{for } u\in\{1,\ldots,t_2-1\}\setminus\{t_1\},\ \text{and}\ f_{k-1+t_1}=\eta_1f_{h_1},\ f_{k-1+t_2}=\eta_2f_{h_2}.
\]
From equation (\ref{eq:coefficient of f}), the above equations can be simplified as 
\[
\sum_{j=0}^{t_2+r-\mu-1}c_{t_2-u-j}b_j=0,\ \sum_{j=0}^{t_2+r-\mu-1}\left(c_{t_2-t_1-j}-\eta_1c_{k-1+t_2-h_1-j}
\right)b_j=0.
\]
and 
\[
\left(1-\eta_2c_{k-1+t_2-h_2}\right)b_0-\eta_2c_{k-2+t_2-h_2}b_1-\cdots-\eta_2c_{k-r+\mu-h_2}b_{t_2+r-\mu-1}=0.
\]
Then for polynomial $f(x)\in \mathcal{S}_2$, we have 
\begin{equation}\label{eq:coefficient matrix}
\resizebox{0.95\textwidth}{!}{$
\left(
\begin{array}{cccccc}
c_1&c_0&\cdots&0\\
c_2&c_1&\cdots&0\\
\vdots&\vdots&&\vdots\\
c_{t_2-t_1-1}&c_{t_2-t_1-2}&\cdots&c_{\mu-r-t_1}\\
c_{t_2-t_1}-\eta_1c_{k-1+t_2-h_1}&c_{t_2-t_1-1}-\eta_1c_{k-2+t_2-h_1}&\cdots&c_{\mu-r-t_1+1}-\eta_1c_{k-r+\mu-h_1}\\
c_{t_2-t_1+1}&c_{t_2-t_1}&\cdots&c_{\mu-r-t_1+2}\\
\vdots&\vdots&&\vdots\\
c_{t_2-1}&c_{t_2-2}&\cdots&c_{\mu-r}\\
1-\eta_2c_{k-1+t_2-h_2}&-\eta_2c_{k-2+t_2-h_2}&\cdots&
-\eta_2c_{k-r+\mu-h_2}
\end{array}
\right)
\begin{pmatrix}
b_0\\
b_1\\
b_2\\
\vdots\\
b_{t_2+r-\mu-2}\\
b_{t_2+r-\mu-1}
\end{pmatrix}
=0.
$}
\end{equation}
The coefficient matrix in (\ref{eq:coefficient matrix}) is precisely $A_{r,\mu}(I)$. Hence
\[
\dim\, F_I=t_2+r-\mu-{\rm rank}\, A_{r,\mu}(I).
\]
Therefore, $\dim\, F_I\ge r$ if and only if
\[
t_2+r-\mu-{\rm rank}\, A_{r,\mu}(I)\ge r,
\]
that is, ${\rm rank}\, A_{r,\mu}(I)\le t_2-\mu$. Thus there exists an $r$-dimensional subcode of $\C_2$ having at least $k-r+\mu$ common zero coordinates if and only if there exists a $(k-r+\mu)$-subset $I_1\subseteq[n]$ such that ${\rm rank}\, A_{r,\mu}(I_1)\le t_2-\mu$. Moreover, there does not exist an $r$-dimensional subcode of $\C_2$ having at least $k-r+\mu+1$ common zero coordinates if and only if for any $(k-r+\mu+1)$-subset $I_2$, have
\[
{\rm rank}\, A_{r,\mu+1}(I_2)\ge t_2-\mu.
\]
This completes the proof.
\end{proof}

\begin{corollary}\label{cor:two-twist-special}
Let $\C_2$ be the TRS code defined in~\eqref{def:C(a,1)2} with $(t_1,t_2)=(1,2)$, $(h_1,h_2)=(k-2,k-1)$. For each $I\subseteq[n]$, define $P_I(x)=\prod\limits_{i\in I}(x-\alpha_i)=\sum\limits_{j=0}^{|I|}c_j(I)x^{|I|-j}$, and
\[
u(I)=c_1(I)-\eta_1c_3(I),\ v(I)=1-\eta_1c_2(I),\ w(I)=1-\eta_2c_2(I),\ z(I)=-\eta_2c_1(I),
\]
Furthermore, define
\[
M_1(I)=
\begin{pmatrix}
u(I)\\
w(I)
\end{pmatrix},\ M_2(I)=
\begin{pmatrix}
u(I) & v(I)\\
w(I) & z(I)
\end{pmatrix},\ M_3(I)
=\begin{pmatrix}
u(I) & v(I) & -\eta_1c_1(I)\\
w(I) & z(I) & -\eta_2
\end{pmatrix}.
\]
Then the following statements hold.
\begin{enumerate}
    \item[(1)] The minimum distance of $\C_2$ satisfies $d_1(\C_2)\in\{n-k-1,n-k,n-k+1\}$. More precisely, we have
    \begin{enumerate}
        \item[(i)] $d_1(\C_2)=n-k+1$ if and only if ${\rm rank}\, M_2(I)=2$ for every $k$-subset $I\subseteq[n]$.

        \item[(ii)] $d_1(\C_2)=n-k$ if and only if there exists a
        $k$-subset $I\subseteq[n]$ such that ${\rm rank}\, M_2(I)\le 1$,
        and ${\rm rank}\, M_1(J)=1$ for every $(k+1)$-subset $J\subseteq[n]$.

        \item[(iii)] $d_1(\C_2)=n-k-1$ if and only if there exists a
        $(k+1)$-subset $I\subseteq[n]$ such that ${\rm rank}\, M_1(I)=0$.
    \end{enumerate}

    \item[(2)] The second generalized Hamming weight of $\C_2$ satisfies $d_2(\C_2)\in\{n-k,n-k+1,n-k+2\}$. More precisely, we have
    \begin{enumerate}
        \item[(i)] $d_2(\C_2)=n-k+2$ if and only if ${\rm rank}\, M_3(I)=2$ for every $(k-1)$-subset $I\subseteq[n]$.

        \item[(ii)] $d_2(\C_2)=n-k+1$ if and only if there exists a
        $(k-1)$-subset $I\subseteq[n]$ such that ${\rm rank}\, M_3(I)\le 1$, and ${\rm rank}\, M_2(J)\ge 1$ for every $k$-subset $J\subseteq[n]$.

        \item[(iii)] $d_2(\C_2)=n-k$ if and only if there exists a
        $k$-subset $I\subseteq[n]$ such that ${\rm rank}\, M_2(I)=0$.
    \end{enumerate}
\end{enumerate}
\end{corollary}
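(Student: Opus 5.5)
Specialize Proposition \ref{prop:firsttwo d1 d2} to $(t_1,t_2)=(1,2)$ and $(h_1,h_2)=(k-2,k-1)$, and identify the matrices $A_{r,\mu}(I)$ with $M_1,M_2,M_3$. Because $t_2-t_1=1$, the generic rows $c_1,\dots,c_{t_2-t_1-1}$ and $c_{t_2-t_1+1},\dots,c_{t_2-1}$ are absent. So $A_{r,\mu}(I)$ has exactly two rows: the $\eta_1$-row (from $f_{k}=\eta_1 f_{k-2}$) and the $\eta_2$-row (from $f_{k+1}=\eta_2 f_{k-1}$). Its number of columns is $t_2+r-\mu=2+r-\mu$.

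For the entries, I would recompute the two coefficient conditions directly rather than trust the displayed pattern. Write $f=P_Iq$ with $\deg f\le k+1$ and $q=\sum_j b_jx^{\deg q-j}$. Taking coefficients of $x^{k+1},x^k,x^{k-1},x^{k-2}$ gives the two linear relations. With $|I|=k-r+\mu$, the first row is $(c_1-\eta_1c_3,\ c_0-\eta_1c_2,\ -\eta_1c_1,\ -\eta_1c_0,\dots)$ and the second is $(c_0-\eta_2c_2,\ -\eta_2c_1,\ -\eta_2c_0,\dots)$, truncated to $2+r-\mu$ columns, with $c_0=1$ and $c_i=0$ for $i>|I|$. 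Thus:
\begin{itemize}
\item $r=1,\mu=1$ ($|I|=k$) and $r=2,\mu=2$ ($|I|=k$) give the $2\times 2$ matrix $M_2(I)$;
\item $r=1,\mu=2$ ($|I|=k+1$) gives the single column $M_1(I)$;
\item $r=2,\mu=1$ ($|I|=k-1$) gives the $2\times 3$ matrix $M_3(I)$, whose third column is $(-\eta_1c_1,-\eta_2)^T$;
\item $r=1,\mu=0$ ($|I|=k-1$) gives $M_3(I)$ as well.
\end{itemize}

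\textbf{The ranges.} By Theorem \ref{thm:range}, $n-k-1\le d_1\le n-k+1$ and $n-k\le d_2\le n-k+2$, since $t_2=2$ and $t_1=1$. So $\mu\in\{0,1,2\}$ in both cases.

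\textbf{Part (1).} Apply Proposition \ref{prop:firsttwo d1 d2} with $r=1$, or argue directly with $\dim F_I=\#\text{columns}-\mathrm{rank}$.
\begin{itemize}
\item (iii): $d_1=n-k-1$ iff some $(k+1)$-set $I$ has $\dim F_I\ge1$. Since $M_1(I)$ is a $2\times1$ column, this means $\mathrm{rank}\,M_1(I)=0$.
\item (ii): some $k$-set has $\mathrm{rank}\,M_2\le1$, and no $(k+1)$-set works, i.e.\ $\mathrm{rank}\,M_1(J)=1$ for all $J$.
\item (i): $d_1=n-k+1$ iff no nonzero $f$ has $k$ zeros, i.e.\ $\mathrm{rank}\,M_2(I)=2$ for all $k$-sets $I$. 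This condition alone suffices, because a polynomial with $k+1$ zeros also vanishes on a $k$-subset of them.
\end{itemize}

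\textbf{Part (2).} Use $r=2$, where the requirement is $\dim F_I\ge2$.
\begin{itemize}
\item (iii): $|I|=k$, two columns, so we need $\mathrm{rank}\,M_2(I)=0$.
\item (ii): $|I|=k-1$, three columns, so we need $\mathrm{rank}\,M_3(I)\le1$, while $\mathrm{rank}\,M_2(J)\ge1$ for every $k$-set $J$.
\item (i): $\mathrm{rank}\,M_3(I)=2$ for all $(k-1)$-sets $I$. This implies the $k$-set condition by the same monotonicity: if $\dim F_J\ge 2$ for a $k$-set $J$, then $\dim F_I\ge 2$ for any $(k-1)$-subset $I\subset J$, since $F_J\subseteq F_I$.
\end{itemize}

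\textbf{Main obstacle.} The delicate step is the bookkeeping of coefficient indices, i.e.\ checking that the truncated rows really are $u,v,w,z$ and the stated third column. In particular, $v=1-\eta_1c_2$ arises from $c_0=1$. One must also handle small $k$, where $c_3$ may vanish because $|I|<3$ or $h_1=k-2<0$ is impossible; so assume $k\ge2$. I would verify these entries by expanding $P_I(x)(b_0x^{m}+b_1x^{m-1}+b_2x^{m-2})$ explicitly for each of the three sizes $|I|\in\{k-1,k,k+1\}$.
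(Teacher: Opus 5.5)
Your proposal is correct and follows the intended route. The corollary is Proposition~\ref{prop:firsttwo d1 d2} specialized to $t_2-t_1=1$: only the $\eta_1$- and $\eta_2$-rows survive, giving $A_{1,2}=M_1$, $A_{1,1}=A_{2,2}=M_2$ and $A_{1,0}=A_{2,1}=M_3$, and the existence condition is automatic at $\mu=0$. Your direct recomputation of the entries from $f_k=\eta_1f_{k-2}$ and $f_{k+1}=\eta_2f_{k-1}$ agrees with the paper's displayed pattern, so the index bookkeeping you flagged as the main obstacle goes through.
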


\begin{example}
Let $\C_2$ be a TRS code defined in (\ref{def:C(a,1)2}). Let $q=13$, $n=10$, $k=6$, $t_1=1$, $t_2=3$, $h_1=0$,  $h_2=3$, and $\eta_1=4$, $\eta_2=7$. Take $\bm{\alpha}=(0,1,2,3,5,6,8,9,10,12)$. For
$r=1$ and $\mu_1=1$. In this case, take $I_1=\{1,2,4,5,6,9\}$. The matrix
$A_{1,1}(I_1)$ is
\[
A_{1,1}(I_1)
=
\begin{pmatrix}
1&1&0\\
6&1&1\\
9&9&0
\end{pmatrix}.
\]
Therefore, ${\rm rank}\,A_{1,1}(I_1)=2\le t_2-\mu_1$. Moreover, for  all subsets
$I_2\subseteq[n]$ satisfying
\[
|I_2|=k-r+\mu_1+1=6-1+1+1=7.
\]
With the help of SageMath, we have ${\rm rank}\, A_{1,2}(I_2)\ge 2$ and  $d_1(\C_2)=4$. For $r=2$ and $\mu_2=0$. In this case, take $I_1=\{1,2,3,4\}$. The matrix
$A_{2,0}(I_1)$ is
\[
A_{2,0}(I_1)
=
\begin{pmatrix}
7&1&0&0&0\\
11&7&1&0&0\\
1&0&3&1&3
\end{pmatrix}.
\]
Thus ${\rm rank}\, A_{2,0}(I_1)=3\le t_2-\mu_2$. Moreover, for all subsets $I_2\subseteq[n]$ satisfying
\[
|I_2|=k-r+\mu_2+1=6-2+0+1=5.
\]
With the help of SageMath, we have ${\rm rank}\, A_{2,1}(I_2)\ge3$ and $d_2(\C_2)=6$. These results are consistent with Proposition \ref{prop:firsttwo d1 d2}.
\end{example}

\begin{proposition}\label{prop:dualweights d1d2}
Let $\C_2$ be a TRS code defined in (\ref{def:C(a,1)2}), and let $A_{j,p-k+j}(I)$ be the coefficient matrix defined above for a $p$-subset $I\subseteq[n]$. For $j\in\{1,2\}$ and $j\le \nu \le k+j$, we have $d_j(\C_2^\perp)=\nu $ if and only if the following conditions hold:
\begin{enumerate}
\item[(1)] for every $(\nu-1)$-subset $J_1\subseteq[n]$, ${\rm rank}\, A_{j,\nu-k+j-1}(J_1)\ge t_2-j+1$;
\item[(2)] there exists an $\nu$-subset $J_2\subseteq[n]$ such that ${\rm rank}\, A_{j,\nu-k+j}(J_2)=t_2-j$.
\end{enumerate}
\end{proposition}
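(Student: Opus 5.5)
The plan is to use Lemma \ref{rank-characterization} applied to the code $\C_2^\perp$, whose parity-check matrix is a generator matrix $G$ of $\C_2$. Then $d_j(\C_2^\perp)=\nu$ holds if and only if two things are true. First, every $\nu-1$ columns of $G$ have rank at least $\nu-1-j$. Second, some $\nu$ columns of $G$ have rank exactly $\nu-j$. The statement is therefore reduced to computing ${\rm rank}\,G_J$ for an arbitrary $p$-subset $J\subseteq[n]$ in terms of the matrices $A_{r,\mu}(J)$ from Proposition \ref{prop:firsttwo d1 d2}.

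The key identity is ${\rm rank}\,G_J=k-\dim F_J$. Here $F_J=\{f\in\mathcal S_2: P_J\mid f\}$ is the kernel of the map $\mathcal S_2\to\F_q^{J}$, $f\mapsto (f(\alpha_i))_{i\in J}$. This holds because the row space of $G_J$ is the image of the restriction of $ev$ to the coordinates in $J$. In the proof of Proposition \ref{prop:firsttwo d1 d2} we already showed that, for a set $J$ of size $p=k-r+\mu$, the relation $\dim F_J=t_2+r-\mu-{\rm rank}\,A_{r,\mu}(J)$ holds. The length of $q(x)$ there is $t_2+r-\mu=k-1+t_2-p+1$, which depends only on $p$. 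So for any admissible pair $(r,\mu)$ with $k-r+\mu=p$ the formula reads $\dim F_J=k+t_2-p-{\rm rank}\,A_{r,\mu}(J)$, and the matrix $A_{r,\mu}(J)$ itself depends only on $p$. Taking $r=j$ and $\mu=p-k+j$ gives
\[
{\rm rank}\,G_J=p-t_2+{\rm rank}\,A_{j,p-k+j}(J).
\]
When $p>k-1+t_2$ we need $F_J=0$; I will treat the indices by the convention $c_i=0$ out of range, and check that the formula remains consistent there, or restrict $\nu$ to the stated range $j\le\nu\le k+j$.

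Substituting into Lemma \ref{rank-characterization}, condition (1) with $p=\nu-1$ becomes $\nu-1-t_2+{\rm rank}\,A_{j,\nu-k+j-1}(J_1)\ge \nu-1-j$, that is, ${\rm rank}\,A_{j,\nu-k+j-1}(J_1)\ge t_2-j+1$. Condition (2) with $p=\nu$ becomes $\nu-t_2+{\rm rank}\,A_{j,\nu-k+j}(J_2)=\nu-j$, that is, ${\rm rank}\,A_{j,\nu-k+j}(J_2)=t_2-j$. These are exactly the two stated conditions.

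The main obstacle is the bookkeeping that justifies using the matrix $A_{r,\mu}$ outside the parameter ranges stated in Proposition \ref{prop:firsttwo d1 d2}. There $\mu\le t_2$ or $\mu\le t_1+1$, whereas here $\mu=\nu-k+j$ may be negative or exceed those bounds. I would redo the coefficient comparison for a general $p$-subset, checking that the constraint rows (the vanishing of the coefficients $f_{k-1+u}$ and the two twist relations) and the columns ($b_0,\dots,b_{k-1+t_2-p}$) are exactly the rows and columns of $A_{j,p-k+j}(J)$. I would also confirm that the case of nonpositive degree, $p\ge k+t_2$, is handled by the convention $c_i=0$ for indices out of range.
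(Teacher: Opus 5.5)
Your route is the paper's: apply Lemma \ref{rank-characterization} with $H=G$, use ${\rm rank}\,G_J=k-\dim F_J$, and use the count $\dim F_J=k+t_2-p-{\rm rank}\,A_{j,p-k+j}(J)$ from Proposition \ref{prop:firsttwo d1 d2}. Your worry about the range of $\mu$ is well placed, and the paper passes over it. The count is simply $k+t_2-p$ unknowns $b_0,\dots,b_{k-1+t_2-p}$ against the $t_2$ linear constraints defining $\mathcal S_2$, using the convention $c_i=0$ outside $0\le i\le p$. It is valid for every $p\le k+t_2$, and this covers all $p\le\nu\le k+2\le k+t_2$. At $p=k+t_2$ the matrix is empty and $F_J=0$.

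There is one slip to fix. Lemma \ref{rank-characterization} requires every $\nu-1$ columns of $G$ to have rank at least $\nu-j$, not $\nu-1-j$. The reason is that the subcode of $\C_2^\perp$ supported on a $(\nu-1)$-set $S$ has dimension $\nu-1-{\rm rank}\,G_S$, and this must be at most $j-1$. Also, your inequality $\nu-1-t_2+{\rm rank}\,A\ge\nu-1-j$ actually gives ${\rm rank}\,A\ge t_2-j$, not $t_2-j+1$. The two errors cancel each other. With the correct threshold, $\nu-1-t_2+{\rm rank}\,A\ge\nu-j$ yields exactly ${\rm rank}\,A_{j,\nu-k+j-1}(J_1)\ge t_2-j+1$. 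Condition (2) is handled correctly as written.
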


\begin{proof}
Let $G$ be a generator matrix of $\C_2$. Since $G$ is  a parity check matrix of $\C_2^\perp$, by Lemma \ref{rank-characterization}, $d_j(\C_2^\perp)=\nu$ if and only if
\begin{enumerate}
\item[(1)] any $\nu-1$ columns of $G$ have rank at least $\nu-j$;
\item[(2)] there exist $\nu$ columns of $G$ having rank $\nu-j$.
\end{enumerate}

For a subset $I\subseteq[n]$ with $|I|=p$, let
\[
F_I=\{f(x)\in \mathcal{S}_2\,:\,P_I(x)\mid f(x)\},\ P_I(x)=\prod_{i\in I}(x-\alpha_i).
\]
Hence, $F_I$ consists exactly of those polynomials in $\mathcal{S}_2$ whose evaluations vanish at all coordinates indexed by $I$. Let $g_0(x),g_1(x),\ldots,g_{k-1}(x)$ be the basis of $\mathcal{S}_2$. Then every polynomial $f(x)\in \mathcal{S}_2$ can be written uniquely as
\[
f(x)=a_0g_0(x)+a_1g_1(x)+\cdots+a_{k-1}g_{k-1}(x).
\]
Let $G_I$ denote the submatrix of $G$ consisting of the columns indexed by $I$. The evaluations of $f(x)$ at the points indexed by
$I$ are given by
\[
(f(\alpha_i))_{i\in I}=(a_0,a_1,\ldots,a_{k-1})G_I.
\]
Therefore, $f(x)\in F_I$ if and only if $(a_0,a_1,\ldots,a_{k-1})G_I=0$. Hence $F_I$ is naturally identified with the solution space of this
homogeneous linear system. Therefore,
\[
\dim\, F_I=k-{\rm rank}\, G_I.
\]
On the other hand, in Proposition \ref{prop:firsttwo d1 d2}, we have proven that 
\[
\dim\, F_I=k+t_2-p-{\rm rank}\, A_{j,p-k+j}(I).
\]
Hence, we have
\[
{\rm rank}\, G_I=p-t_2+{\rm rank}\, A_{j,p-k+j}(I).
\]
Therefore, for an $(\nu-1)$-subset $J_1\subseteq[n]$, ${\rm rank}\, G_{J_1}\ge \nu-j$ if and only if ${\rm rank}\, A_{j,\nu-k+j-1}(J_1)\ge t_2-j+1$. Similarly, for an $\nu$-subset $J_2\subseteq[n]$, ${\rm rank}\, G_{J_2}=\nu-j$ if and only if ${\rm rank}\, A_{j,\nu-k+j}(J_2)=t_2-j$.  This completes the proof.
\end{proof}

\begin{corollary}
Let $\C_2$ be the TRS code defined in~\eqref{def:C(a,1)2} with $(t_1,t_2)=(1,2)$, $(h_1,h_2)=(k-2,k-1)$, where $\eta_1,\eta_2\in\mathbb{F}_q^*$.
For each $I\subseteq[n]$, let $P_I(x)=
\prod\limits_{i\in I}(x-\alpha_i)=
\sum\limits_{j=0}^{|I|}c_j(I)x^{|I|-j}$, and 
\[
u(I)=c_1(I)-\eta_1c_3(I),\ v(I)=1-\eta_1c_2(I), \ w(I)=1-\eta_2c_2(I),
\ z(I)=-\eta_2c_1(I).
\]
Set
\[
M_1(I)=
\begin{pmatrix}
u(I)\\
w(I)
\end{pmatrix},\ M_2(I)=
\begin{pmatrix}
u(I) & v(I)\\
w(I) & z(I)
\end{pmatrix},\ M_3(I)=
\begin{pmatrix}
u(I) & v(I) & -\eta_1c_1(I)\\
w(I) & z(I) & -\eta_2
\end{pmatrix}.
\]
Then the following statements hold.

\begin{enumerate}
    \item[(1)] The minimum distance of $\C_2^\perp$ satisfies $d_1(\C_2^\perp)\in\{k-1,k,k+1\}$. More precisely, we have
    \begin{enumerate}
        \item[(i)] $d_1(\C_2^\perp)=k-1$ if and only if there exists a
        $(k-1)$-subset $I\subseteq[n]$ such that ${\rm rank}\, M_3(I)=1$.

        \item[(ii)] $d_1(\C_2^\perp)=k$ if and only if ${\rm rank}\, M_3(I)=2$ for every $(k-1)$-subset $I\subseteq[n]$, and there exists an$k$-subset $J\subseteq[n]$ such that ${\rm rank}\, M_2(J)=1$.

        \item[(iii)] $d_1(\C_2^\perp)=k+1$ if and only if ${\rm rank}\, M_2(I)=2$  for every $k$-subset $I\subseteq[n]$, and there exists a $(k+1)$-subset $J\subseteq[n]$ such that ${\rm rank}\, M_1(J)=1$.
    \end{enumerate}

    \item[(2)] The second generalized Hamming weight of $\C_2^\perp$ satisfies $d_2(\C_2^\perp)\in\{k,k+1,k+2\}$. More precisely, we have
    \begin{enumerate}
        \item[(i)] $d_2(\C_2^\perp)=k$ if and only if there exists a
        $k$-subset $I\subseteq[n]$ such that  $M_2(I)=0$.
    
        \item[(ii)] $d_2(\C_2^\perp)=k+1$ if and only if $M_2(I)\neq0$ for every $k$-subset $I\subseteq[n]$, and  there exists a $(k+1)$-subset $J\subseteq[n]$ such that
        $M_1(J)=0$.

        \item[(iii)] $d_2(\C_2^\perp)=k+2$ if and only if $M_1(I)\neq0$
        for every $(k+1)$-subset $I\subseteq[n]$.
    \end{enumerate}
\end{enumerate}
\end{corollary}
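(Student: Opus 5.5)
We specialize Proposition \ref{prop:dualweights d1d2} to $(t_1,t_2)=(1,2)$, $(h_1,h_2)=(k-2,k-1)$. The plan is to identify the relevant coefficient matrices $A_{j,p-k+j}(I)$ with $M_1,M_2,M_3$, and then read off the rank conditions. The bridge is the identity
\[
{\rm rank}\, G_I=p-t_2+{\rm rank}\, A_{j,p-k+j}(I)=p-2+{\rm rank}\, A(I),
\]
valid for every $p$-subset $I$ (it depends only on $I$, through $\dim F_I$). Here $A(I)$ is the $2\times(k+2-p)$ matrix whose columns correspond to $b_0,\dots,b_{k+1-p}$ in $q(x)$. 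The two rows come from the two twist conditions.

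First compute these rows from the proof of Proposition \ref{prop:firsttwo d1 d2}. With $t_2-t_1=1$, the first row is $(c_{1-j}-\eta_1c_{3-j})_j$, i.e.\ $(u,\,1-\eta_1c_2,\,-\eta_1c_1,\,-\eta_1c_0,\dots)$, since $c_0=1$ and $c_{-1}=0$. The second row is $(1-\eta_2c_2,\,-\eta_2c_1,\,-\eta_2c_0,\dots)$.

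Truncating to the number of columns gives the three cases:
\begin{itemize}
\item for $p=k+1$ there is one column, giving $M_1$;
\item for $p=k$ there are two columns, giving $M_2$;
\item for $p=k-1$ there are three columns, giving $M_3$, whose third column is $(-\eta_1c_1,-\eta_2)^T$.
\end{itemize}
Since $-\eta_2\ne0$, we always have ${\rm rank}\,M_3\ge1$, so ${\rm rank}\, G_I\ge k-2$ for $|I|=k-1$. For $p\le k-2$ the matrix contains the columns for $b_2,b_3$, namely $(-\eta_1c_1,-\eta_2)^T$ and $(-\eta_1,0)^T$. Hence its rank is $2$, and ${\rm rank}\,G_I=p$: any $k-2$ columns are independent.

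For part (1) ($j=1$), $d_1(\C_2^\perp)$ is the smallest $\nu$ with some $\nu$ columns of rank $\nu-1$, i.e.\ ${\rm rank}\,A(J)=1$. By the previous paragraph this cannot happen for $\nu\le k-2$, so $\nu\ge k-1$. At $\nu=k-1$ the condition is ${\rm rank}\,M_3=1$; if it fails, then ${\rm rank}\,M_3=2$ for all $(k-1)$-subsets and we pass to $M_2$ at $\nu=k$, and then to $M_1$ at $\nu=k+1$. Note that at $\nu=k+1$, ${\rm rank}\,M_1=1$ just means $M_1\ne0$. Any $k+2$ columns of $G$ have rank at most $k$, so if no earlier case occurs the value is at most $k+1$, consistent with $d_1\le k+1$. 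This gives (i)--(iii), with each ``for every'' clause coming from condition (1) of Proposition \ref{prop:dualweights d1d2}: independence of all $\nu-1$ columns.

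For part (2) ($j=2$), the condition is that $\nu$ columns have rank $\nu-2$, i.e.\ ${\rm rank}\,A(J)=0$. This is impossible when $|J|\le k-1$, because the column $(\cdot,-\eta_2)^T$ is present. So $\nu\ge k$.
\begin{itemize}
\item $\nu=k$ requires $M_2(J)=0$.
\item Otherwise $\nu=k+1$ requires $M_1(J)=0$.
\item Otherwise, since any $k+2$ columns have rank at most $k$, we get $\nu=k+2$.
\end{itemize}
The condition on $(\nu-1)$-subsets (rank at least $\nu-2$) is automatic in the cases where it reduces to a statement already implied by minimality. This is the point that needs care. For $\nu=k+1$, condition (1) says every $k$-subset has ${\rm rank}\,M_2\ge1$, i.e.\ $M_2\ne0$, matching (ii). For $\nu=k$, it says every $(k-1)$-subset has ${\rm rank}\,M_3\ge1$, which always holds.

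The main obstacle is bookkeeping rather than ideas. Two points need checking. First, the indexing of the rows of $A_{j,\mu}$ for these small parameters: there are no middle rows when $t_2-t_1=1$, and the convention $c_i=0$ for negative $i$ must be applied. Second, the claim that the rank identity holds independently of $j$, which I will justify directly as $\dim F_I=k+2-p-{\rm rank}\,A(I)=k-{\rm rank}\,G_I$.
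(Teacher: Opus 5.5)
Your proposal is correct and is essentially the route the paper intends: the paper gives no separate proof, treating the corollary as the specialization of Proposition \ref{prop:dualweights d1d2}, which is exactly what you do via ${\rm rank}\,G_I=p-2+{\rm rank}\,A(I)$. In that identity, $A(I)$ truncates to $M_1,M_2,M_3$ for $p=k+1,k,k-1$ and has full rank $2$ for $p\le k-2$, and the cases follow by Wei's rank criterion. One small slip in part (1): the bound $d_1(\C_2^\perp)\le k+1$ comes from any $k+1$ columns of the $k$-row matrix $G$ being dependent, not from $k+2$ columns, though the conclusion is unaffected.
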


\begin{example}
Let $\C_2$ be a TRS code defined in (\ref{def:C(a,1)2}). Let $q=11$, $n=9$, $k=5$, $t_1=2$, $t_2=4$, $h_1=1$, $h_2=3$, and $\eta_1=3$, $\eta_2=8$. Take $\bm{\alpha}=(0,1,2,4,5,6,7,9,10)$. With the help of SageMath, we have $d_1(\C_2^{\perp})=5$ and $d_2(\C_2^{\perp})=7$. Take $I_1=\{1,2,4,5,9\}$, and $I_2=\{1,2,3,4,5,6,7\}$, we have
\[
A_{1,1}(I_1)=\begin{pmatrix}
2&1&0&0\\
8&2&1&5\\
9&8&2&1\\
1&6&5&2
\end{pmatrix},\
A_{2,4}(I_2)=\begin{pmatrix}
8&1\\
5&6\\
2&5\\
9&8
\end{pmatrix},
\]
and ${\rm rank}\, A_{1,1}(I_1)=3=t_2-1$ and ${\rm rank}\, A_{2,4}(I_2)=2=t_2-2$. Moreover, for any $4$-subset $J_1$ and $6$-subset $J_2$, have ${\rm rank}\, A_{1,0}(J_1)\ge t_2-1+1=4$ and ${\rm rank}\, A_{2,3}(J_2)\ge t_2-2+1=3$.
\end{example}

With the minimum distances of $\mathcal{C}_2$ and its dual characterized above, we have a complete description of the generalized Hamming weight hierarchy of $\mathcal{C}_2$.

\begin{theorem}\label{thm:complete}
Let $\C_2$ be a TRS code defined in (\ref{def:C(a,1)2}). If $\C_2$ is a MDS code, then
\[
d_r(\C_2)=n-k+r,\qquad 1\le r\le k.
\]
Suppose that $\C_2$ is non-MDS. Then the generalized Hamming weight hierarchy of $\C_2$ is given by
\[
d_r(\C_2)=
\begin{cases}
n-k+1-\mu_1,
& r=1,\\[1mm]
n-k+2-\mu_2,
& r=2,\\[1mm]
n-k+r-2,
& 3\le r<k+3-\nu_2,\\[1mm]
n-k+r-1,
& k+3-\nu_2\le r<k+2-\nu_1,\\[1mm]
n-k+r,
& k+2-\nu_1\le r\le k,
\end{cases}
\]
where $\mu_1$, $\mu_2$ and $\nu_1$, $\nu_2$ are determined by Proposition \ref{prop:firsttwo d1 d2} and \ref{prop:dualweights d1d2}.
\end{theorem}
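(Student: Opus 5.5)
The argument assembles results already proved for $\C_2$.

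\emph{MDS case.} Suppose $\C_2$ is MDS, so $d_1(\C_2)=n-k+1$. The remark after Lemma \ref{d1<d2<dots<dr} shows that a $1$-MDS code is $r$-MDS for every $r$. Hence $d_r(\C_2)=n-k+r$ for all $1\le r\le k$. This needs only Lemma \ref{rMDS} and Lemma \ref{d1<d2<dots<dr}.

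\emph{Non-MDS case: the first two weights.} By Theorem \ref{thm:range}, $d_1(\C_2)\in[n-k-t_2+1,\,n-k+1]$. So we may write $d_1(\C_2)=n-k+1-\mu_1$ with $0\le\mu_1\le t_2$, and Proposition \ref{prop:firsttwo d1 d2} with $r=1$ characterizes $\mu_1$ exactly. Likewise $d_2(\C_2)\in[n-k-t_1+1,\,n-k+2]$, so $d_2(\C_2)=n-k+2-\mu_2$ with $0\le\mu_2\le t_1+1$, and Proposition \ref{prop:firsttwo d1 d2} with $r=2$ determines $\mu_2$. Non-MDS only means $\mu_1\ge1$; $\mu_2$ may still be $0$.

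\emph{Non-MDS case: higher weights.} For $r\ge3$ we invoke Theorem \ref{thm:higher}, which expresses $d_r(\C_2)$ through $d_1(\C_2^\perp)$ and $d_2(\C_2^\perp)$. Set $\nu_1=d_1(\C_2^\perp)$ and $\nu_2=d_2(\C_2^\perp)$. Proposition \ref{prop:dualweights d1d2} with $j=1,2$ gives the rank conditions characterizing $\nu_1$ and $\nu_2$. Substituting these into Theorem \ref{thm:higher} yields the last three lines of the displayed formula.

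\emph{The main obstacle: hypotheses of Theorem \ref{thm:higher}.}
\begin{itemize}
\item That theorem assumes $k\ge3$. When $k\le2$ the range $3\le r\le k$ is empty, so only the lines $r=1,2$ remain and those are already settled.
\item The intervals $3\le r<k+3-\nu_2$ and $k+3-\nu_2\le r<k+2-\nu_1$ may be empty or may start below $3$. Such cases are read with the convention that out-of-range indices are ignored. This agrees with the definition of $r_1,r_2$ inside the proof of Theorem \ref{thm:higher}: there the minima are taken over $3\le r\le k$, and $r_1,r_2$ are identified with $k+3-\nu_2$ and $k+2-\nu_1$ through Lemma \ref{dual GHW of C}.
\item The step needing care is that Lemma \ref{dual GHW of C} identifies $n+1-\nu_1$ and $n+1-\nu_2$ as the two largest non-weights of $\C_2$. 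The identification with $r_1,r_2$ then requires that these gaps come from indices $r\ge3$ and not from $d_1$ or $d_2$.
\end{itemize}
I would check this last point by noting the following. The gaps created by the drops at $r=1,2$ lie below $d_3(\C_2)$, while any gap created at $r\ge3$ lies above $d_3(\C_2)-1$. So the ordering of non-weights is consistent, and when fewer than two gaps occur at $r\ge3$ the corresponding interval is simply empty. With this verified, the theorem follows.
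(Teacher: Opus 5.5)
Your proposal is correct and follows the paper's route: Theorem \ref{thm:complete} is the assembly of Lemmas \ref{rMDS} and \ref{d1<d2<dots<dr} for the MDS case, Theorem \ref{thm:range} with Proposition \ref{prop:firsttwo d1 d2} for $r=1,2$, and Theorem \ref{thm:higher} with Proposition \ref{prop:dualweights d1d2} for $r\ge 3$. Your extra check on where the two largest non-weights sit is worthwhile: the paper's proof of Theorem \ref{thm:higher} identifies $r_1,r_2$ with $k+3-d_2(\C_2^\perp)$ and $k+2-d_1(\C_2^\perp)$ without that check, and those identities can fail literally when $d_3(\C_2)\ge n-k+2$, so it is your reading of the intervals as restricted to $3\le r\le k$ that makes the displayed formula hold in every case.
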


\begin{corollary}
With notation in Proposition \ref{prop:dualweights d1d2}. Then we have
\[
\nu_1\ge \max\{2,h_1+1,k-h_2\},\ \nu_2\ge \max\{3,h_1+2,k-h_2+1\}.
\]
Consequently, $d_r(\C_2)=n-k+r$ for every $3\le r\le k$ satisfying $r\ge k+2-\max\{2,h_1+1,k-h_2\}$.
\end{corollary}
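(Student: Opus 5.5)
We prove the two lower bounds on $\nu_1=d_1(\C_2^\perp)$ and $\nu_2=d_2(\C_2^\perp)$ by working directly with the generator matrix $G$ of $\C_2$, which is a parity-check matrix of $\C_2^\perp$. We then read off the consequence from Theorem \ref{thm:higher}.

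\textbf{Bound on $\nu_1$.} Here $d_1(\C_2^\perp)$ is the least number of linearly dependent columns of $G$. For a $p$-subset $I$, write ${\bm V}_r=(\alpha_i^r)_{i\in I}$.

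First, if $p\le h_1$, the untwisted rows ${\bm V}_0,\dots,{\bm V}_{p-1}$ of $G_I$ form a nonsingular Vandermonde block. Hence ${\rm rank}\,G_I=p$, which gives $\nu_1\ge h_1+1$. Trivially $\nu_1\ge 2$, since the columns of $G$ are distinct and nonzero: $k\ge 3$ forces an untwisted row among $1,x,\dots,x^{k-1}$, and the rows ${\bm V}_0$ or ${\bm V}_1$ separate the points.

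Second, for $k-h_2$, I would use the duality count $\dim F_I=k-{\rm rank}\,G_I$ from the proof of Proposition \ref{prop:dualweights d1d2}. Dependence of the columns of $G_I$ means $\dim F_I>k-p$, i.e. some subspace of $\mathcal S_2$ divisible by $P_I$ exceeds the "expected" dimension. Consider $f\in\mathcal S_2$ with $P_I\mid f$. If $f_{h_2}=0$, then $f$ involves no $x^{k-1+t_2}$ term. I would argue, in the manner of the proof of Proposition \ref{dC perp}, that when $p<k-h_2$ the monomials $x^{p},\dots,x^{k-1}$ together with the twisted rows reduce modulo ${\rm span}\{{\bm V}_0,\dots,{\bm V}_{p-1}\}$. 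The leading coefficient condition $c_0=1$ then prevents a dependency. Concretely, the dependency condition contains the equation $c_{p-h_2}=\cdots=0$ with index $\le 0$, hence $c_0=0$, a contradiction. This is exactly the mechanism of Corollary \ref{cor:t2-dual} and of the previous single-twist corollary, which gave $\nu\ge k-h$.

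\textbf{Bound on $\nu_2$.} For $\nu_2$ I would use the strict monotonicity of Lemma \ref{d1<d2<dots<dr}, namely $\nu_2\ge\nu_1+1$. This immediately yields $\nu_2\ge\max\{3,h_1+2,k-h_2+1\}$.

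\textbf{Consequence.} Finally, Theorem \ref{thm:higher} gives $d_r(\C_2)=n-k+r$ exactly when $r\ge k+2-\nu_1$. Since $\nu_1\ge\max\{2,h_1+1,k-h_2\}$, every $r\ge k+2-\max\{2,h_1+1,k-h_2\}$ satisfies $r\ge k+2-\nu_1$, and the claim follows.

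\textbf{Main obstacle.} The step I expect to be hardest is the bound $\nu_1\ge k-h_2$ in the two-twist setting, because both twisted rows interact. I would handle it by Gaussian elimination on the two twisted rows. Take a combination killing the ${\bm V}_{h_2}$ component. If $p<k-h_2$, the untwisted rows ${\bm V}_{p},\dots,{\bm V}_{k-1}$ must all lie in the span of the other untwisted rows. Using the expansion of Lemma \ref{linear representation pf V_r}, this forces $c_{p-h}$-type vanishing conditions whose first instance is $c_0=0$, which is impossible.
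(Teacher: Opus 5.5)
\textbf{Verdict.} There is a genuine gap in the step $\nu_1\ge k-h_2$. Your handling of $\nu_1\ge h_1+1$ (Vandermonde), of $\nu_2\ge\nu_1+1$ (Lemma \ref{d1<d2<dots<dr}) and of the final consequence (Theorem \ref{thm:higher}) matches the paper.

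\textbf{Where it fails.} The gap is exactly the step you flag as the main obstacle, and the mechanism you propose for it does not exist.
\begin{itemize}
\item The index $p-h_2$ need not be $\le 0$. When $h_2<p<k-h_2$ (e.g.\ $h_1=0$, $h_2=1$, $k=10$, $p=5$), the first index $p-h_2$ is positive. So no chain of vanishing conditions $c_{p-h_2}=c_{p-h_2+1}=\cdots=0$ ever reaches $c_0$.
\item A dependency does not force ${\bm V}_p,\dots,{\bm V}_{k-1}$ into the span of the other untwisted rows. If $h_1,h_2<p$, the untwisted rows among ${\bm V}_0,\dots,{\bm V}_{p-1}$ span only a $(p-2)$-dimensional space. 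A dependency only gives ${\rm rank}\,G_I\le p-1$, so the reduction to the single-twist computation of Proposition \ref{dC perp} breaks down.
\item The single-twist precedent is misread. There, $\nu\ge k-h$ does not come from $c_0=0$; it comes from the twisted row. Once $\nu<k-h$, the first condition of Proposition \ref{dC perp} kills every $c_j$ entering the ${\bm V}_h$-coefficient of ${\bm V}_h+\eta{\bm V}_{k-1+t}$. That coefficient is then $1\neq0$; equivalently, the sum in the second condition is empty.
\item Corollary \ref{cor:t2-dual} is only the degenerate case $h=k-1$, where $\nu<k$ already means $\nu\le h$.
\end{itemize}

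\textbf{A correct argument.} It is short.
\begin{itemize}
\item Let $p\le k-1-h_2$. The rows $x^{h_2+1},\dots,x^{h_2+p}$ of $G$ are untwisted, since their exponents exceed $h_2>h_1$ and are at most $k-1$.
\item Suppose $\lambda\in\F_q^I$ satisfies $G_I\lambda=0$. Then $\sum_{i\in I}(\lambda_i\alpha_i^{h_2+1})\alpha_i^m=0$ for $0\le m\le p-1$. By Vandermonde, $\lambda_i\alpha_i^{h_2+1}=0$ for all $i\in I$.
\item Hence $\lambda$ is supported on the at most one index with $\alpha_i=0$. The degree-$0$ row of $G$ (namely $1$, or $1+\eta_1x^{k-1+t_1}$ if $h_1=0$) takes the value $1$ there, which forces $\lambda=0$.
\item So ${\rm rank}\,G_I=p$, and $\nu_1\ge k-h_2$.
\end{itemize}
The same degree-$0$ row is also what really makes every column of $G$ nonzero, i.e.\ what gives $\nu_1\ge 2$. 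Your ``untwisted row'' justification does not cover this at $\alpha_i=0$, since $x^j$ with $j\ge1$ vanishes there.

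\textbf{The paper's route.} The paper argues through ${\rm rank}\,G_I=p-t_2+{\rm rank}\,A_{1,p-k+1}(I)$. For $p\le k-1-h_2$, every twist entry in the first $t_2$ columns of $A_{1,p-k+1}(I)$ involves some $c_j$ with $j>p$, hence is zero. So the last row begins with $1$ followed by $t_2-1$ zeros. The other $t_2-1$ rows, restricted to columns $2,\dots,t_2$, form a lower unitriangular block with diagonal entries $c_0=1$. That $t_2\times t_2$ minor is $\pm1$, so ${\rm rank}\,A_{1,p-k+1}(I)=t_2$. The contradiction to a dependency is that this minor would have to vanish, not that $c_0=0$.
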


\begin{proof}
Let $G$ be a generator matrix of $\C_2$. For any $p$-subset $I\subseteq [n]$, if  $p\le h_1$, the rows of $G_I$ corresponding to $1,x,\ldots,x^{p-1}$ form a nonsingular Vandermonde matrix. Hence ${\rm rank}\, G_I=p$, and therefore $\nu_1\ge h_1+1$.

On the other hand, if $p\le k-h_2-1$, then by Proposition \ref{prop:dualweights d1d2}, we have ${\rm rank}\, A_{1,p-k+1}(I)=t_2$. Hence,
\[
{\rm rank}\, G_I=p-t_2+{\rm rank}\, A_{1,p-k+1}(I)=p.
\]
Thus $\nu_1\ge k-h_2$. Moreover, $G$ has no zero column, so $\nu_1\ge2$. Therefore,
\[
\nu_1\ge\max\{2,h_1+1,k-h_2\}.
\]
Since the generalized Hamming weights are strictly increasing, then $\nu_2\ge \nu_1+1$.

Finally, Theorem~\ref{thm:complete} gives $d_r(\C_2)=n-k+r$ whenever $r\ge k+2-\nu_1$. Then $d_r(\C_2)=n-k+r$ for every $3\le r\le k$ satisfying $r\ge k+2-\max\{2,h_1+1,k-h_2\}$.
\end{proof}

\begin{table*}[htbp]
\centering
\caption{Generalized Hamming weight hierarchies of two-twist TRS codes}
\label{tab:two-twist-examples}
\resizebox{\textwidth}{!}{
\begin{tabular}{ccccccccccccccc}
\toprule
$q$ & $n$ & $k$
& $t_1$ & $t_2$
& $h_1$ & $h_2$
& $\eta_1$ & $\eta_2$
& $\mu_1$ & $\mu_2$
& $\nu_1$ & $\nu_2$
& $\boldsymbol{\alpha}$
& $(d_1,d_2,\dots,d_k)$
\\
\midrule
7 & 5 & 3 & 1 & 2 & 1 & 2 & 1 & 2
& 1 & 0 & 3 & 5
& $(0,1,2,4,5)$
& $(2,4,5)$ \\

7 & 5 & 3 & 1 & 2 & 1 & 2 & 1 & 2
& 1 & 1 & 2 & 5
& $(0,1,2,4,6)$
& $(2,3,5)$ \\

7 & 5 & 3 & 1 & 2 & 1 & 2 & 1 & 2
& 2 & 1 & 2 & 4
& $(0,1,2,3,4)$
& $(1,3,5)$ \\

7 & 6 & 4 & 1 & 2 & 1 & 2 & 3 & 2
& 1 & 0 & 4 & 6
& $(0,1,2,3,4,5)$
& $(2,4,5,6)$ \\

7 & 7 & 4 & 1 & 3 & 1 & 2 & 3 & 2
& 1 & 0 & 4 & 6
& $(0,1,2,3,4,5,6)$
& $(3,5,6,7)$ \\

11 & 8 & 4 & 2 & 4 & 1 & 2 & 1 & 5
& 1 & 0 & 4 & 6
& $(0,1,2,3,4,5,6,7)$
& $(4,6,7,8)$ \\

11 & 8 & 4 & 2 & 4 & 1 & 2 & 1 & 5
& 2 & 0 & 4 & 5
& $(0,1,2,3,4,5,6,8)$
& $(3,6,7,8)$ \\

$11$ & $9$ & $5$
& $1$ & $2$
& $1$ & $3$
& $3$ & $8$
& $1$ & $0$
& $5$ & $7$
& $(0,1,2,3,4,5,6,8,9)$
& $(4,6,7,8,9)$
\\

$11$ & $9$ & $5$
& $1$ & $2$
& $1$ & $3$
& $3$ & $8$
& $1$ & $1$
& $4$ & $7$
& $(0,1,2,3,4,6,7,9,10)$
& $(4,5,7,8,9)$
\\

$11$ & $9$ & $5$
& $1$ & $2$
& $1$ & $3$
& $3$ & $8$
& $2$ & $0$
& $5$ & $6$
& $(0,1,2,3,4,6,7,8,10)$
& $(3,6,7,8,9)$
\\

$11$ & $9$ & $5$
& $1$ & $2$
& $1$ & $3$
& $3$ & $8$
& $2$ & $1$
& $4$ & $6$
& $(0,1,2,3,4,5,6,7,8)$
& $(3,5,7,8,9)$
\\
$13$ & $10$ & $5$
& $1$ & $3$
& $1$ & $2$
& $4$ & $7$
& $1$ & $0$
& $5$ & $7$
& $(0,1,2,3,4,5,6,7,8,12)$
& $(5,7,8,9,10)$
\\

$13$ & $10$ & $5$
& $1$ & $3$
& $1$ & $2$
& $4$ & $7$
& $1$ & $1$
& $4$ & $7$
& $(0,1,2,3,4,5,6,9,10,11)$
& $(5,6,8,9,10)$
\\

$13$ & $10$ & $5$
& $1$ & $3$
& $1$ & $2$
& $4$ & $7$
& $2$ & $0$
& $5$ & $6$
& $(0,1,2,3,4,5,6,7,8,10)$
& $(4,7,8,9,10)$
\\

$13$ & $10$ & $5$
& $1$ & $3$
& $1$ & $2$
& $4$ & $7$
& $2$ & $1$
& $4$ & $6$
& $(0,1,2,3,4,5,6,7,8,9)$
& $(4,6,8,9,10)$
\\

13 & 11 & 6 & 2 & 4 & 0 & 2 & 3 & 4
& 1 & 0 & 6 & 8
& $(0,1,2,3,4,5,6,8,9,10,11)$
& $(5,7,8,9,10,11)$ \\

13 & 11 & 6 & 2 & 4 & 0 & 2 & 3 & 4
& 1 & 1 & 5 & 8
& $(0,1,2,3,4,5,6,7,8,9,10)$
& $(5,6,8,9,10,11)$ \\

13 & 11 & 6 & 2 & 4 & 0 & 2 & 3 & 4
& 2 & 1 & 5 & 7
& $(0,1,2,3,4,5,6,7,8,9,11)$
& $(4,6,8,9,10,11)$ \\

17 & 15 & 6 & 3 & 5 & 1 & 5 & 1 & 2
& 2 & 1 & 5 & 7
& $(0,1,2,3,4,5,6,7,8,9,10,11,12,13,14)$
& $(8,10,12,13,14,15)$ \\
\bottomrule
\end{tabular}
}
\end{table*}

\section{Subcode Support Weight Distribution of TRS Codes}
In this section, we determine the subcode support weight distributions of the  TRS code $\mathcal{C}_1$. More precisely, for each $1\le r\le k$, we determine the number of $r$-dimensional subcodes of $\mathcal{C}_1$ having
a prescribed support size. Since $\mathcal{C}_1$ is obtained by evaluating the polynomial space $\mathcal{S}_1$ defined in (\ref{S}) at the evaluation point set $\{\alpha_1,\alpha_2,\ldots,\alpha_n\}$, this problem can be reduced to counting $r$-dimensional subspaces of $\mathcal{S}_1$ whose polynomials vanish  on prescribed subsets of the evaluation point set.

For this purpose, for each subset $I\subseteq[n]$, we consider the subspace of $\mathcal{S}_1$ consisting of all polynomials divisible by $P_I(x)=\prod\limits_{i\in I}(x-\alpha_i)$. The following lemma expresses the subcode support weight distribution in terms of the dimensions of these subspaces.
\begin{lemma}\label{sswd of C}
	Let $\mathcal{C}_1$ be a TRS code defined in (\ref{def:C(a,1)}). For $1\leq r\leq k$, define 
	\[
	A_i^r=|\{\mathcal{D}\,:\,\mathcal{D}\in [\mathcal{C}_1,r],\ |{\rm supp}(\mathcal{D})|=i\}|.
	\]
	For any subset $I\subseteq [n]$, let $P_I(x)=\prod\limits_{i\in I}(x-\alpha_i)$ and $F_I=\{f(x)\in \mathcal{S}_1\,:\, P_I(x)|f(x)\}$. Then for $0\leq s\leq n$, we have
	\[
	A_{n-s}^r=\sum_{j=s}^n(-1)^{j-s}\binom{j}{s} \sum_{|I|=j}\left[\begin{matrix} {\rm dim}\,F_I  \\ r 
	 \end{matrix}\right]_q
	\]
\end{lemma}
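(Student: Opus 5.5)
The plan is a standard Möbius inversion on the Boolean lattice of coordinate subsets. Through the linear isomorphism $ev_{\bm\alpha}:\mathcal S_1\to\mathcal C_1$, $r$-dimensional subcodes $\mathcal D$ correspond bijectively to $r$-dimensional subspaces $W\le\mathcal S_1$. For such a $W$ let $N(W)=\{i\in[n]: f(\alpha_i)=0 \text{ for all } f\in W\}$ be its common zero set, so that $|\supp(\mathcal D)|=n-|N(W)|$. Hence $A^r_{n-s}$ counts the $r$-dimensional $W\le\mathcal S_1$ with $|N(W)|=s$.

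\textbf{Step 1.} For $I\subseteq[n]$ note that $I\subseteq N(W)$ if and only if every $f\in W$ vanishes at $\alpha_i$ for all $i\in I$. Since the $\alpha_i$ are distinct, this holds if and only if $P_I(x)\mid f(x)$ for every $f\in W$, that is, $W\le F_I$. Here $F_I$ is a linear subspace of $\mathcal S_1$. Define
\[
g(I)=\bigl|\{W\le\mathcal S_1:\dim W=r,\ I\subseteq N(W)\}\bigr|, \qquad e(J)=\bigl|\{W:\dim W=r,\ N(W)=J\}\bigr|.
\]
Counting $r$-dimensional subspaces of $F_I$ gives $g(I)=\left[\begin{smallmatrix}\dim F_I\\ r\end{smallmatrix}\right]_q$, which is zero when $\dim F_I<r$ by the paper's convention. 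Since each $W$ has exactly one common zero set, we also have $g(I)=\sum_{J\supseteq I}e(J)$.

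\textbf{Step 2.} Apply inclusion--exclusion (Möbius inversion on the subset lattice) to obtain
\[
e(J)=\sum_{I\supseteq J}(-1)^{|I|-|J|}g(I).
\]
Summing over all $J$ with $|J|=s$ gives
\[
A^r_{n-s}=\sum_{|J|=s}\ \sum_{I\supseteq J}(-1)^{|I|-|J|}g(I).
\]
Now interchange the order of summation. A fixed $I$ with $|I|=j\ge s$ contains exactly $\binom{j}{s}$ subsets $J$ of size $s$. This yields
\[
A^r_{n-s}=\sum_{j=s}^n(-1)^{j-s}\binom{j}{s}\sum_{|I|=j}\left[\begin{smallmatrix}\dim F_I\\ r\end{smallmatrix}\right]_q,
\]
which is the claimed formula.

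\textbf{Main obstacle.} The only point needing care is the inversion step. One either cites Möbius inversion on the Boolean lattice or verifies it directly by substituting $g(I)=\sum_{J'\supseteq I}e(J')$. The resulting coefficient of $e(J')$ is $\sum_{J\subseteq I\subseteq J'}(-1)^{|I|-|J|}=(1-1)^{|J'|-|J|}$, which equals $1$ when $J'=J$ and $0$ otherwise.
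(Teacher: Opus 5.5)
Your proposal is correct and follows essentially the same route as the paper. Both identify $I\subseteq N(\mathcal{D})$ with $W\le F_I$, count such subspaces by the Gaussian binomial $\left[\begin{smallmatrix}\dim F_I\\ r\end{smallmatrix}\right]_q$, and invert $S_r(I)=\sum_{J\supseteq I}E_r(J)$ on the Boolean lattice before interchanging the sums to pick up the factor $\binom{j}{s}$. Your explicit check of the inversion via $(1-1)^{|J'|-|J|}$ spells out the inclusion--exclusion step that the paper only cites.
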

\begin{proof}
	For an $r$-dimensional subcode $\mathcal{D}$, define
	\[N(\mathcal{D})= \{i\,:\, 1\leq i\leq n,\ c_i=0\ {\rm for}\ \forall \ {\bm d}=(d_1,d_2,\dots,d_n)\in \mathcal{D} \}.\]
	Then we have ${\rm supp}(\mathcal{D})=[n]\setminus N(\mathcal{D})$. Therefore, an $r$-dimensional subcode has support weight $n-s$ if and only if $|N(\mathcal{D})|=s$, where $0\leq s\leq n$. Hence, we have
	\[
	A_{n-s}^r=|\{\mathcal{D}\,:\,\mathcal{D}\in [\mathcal{C}_1,r],\ |N(\mathcal{D})|=s\}|.
	\]
	
	Next, we consider the size of set $A_{n-s}^r$. For a fixed set $I\subseteq [n]$, define
	\[
	S_r(I)=|\{\mathcal{D}\,:\,\mathcal{D}\in [\mathcal{C}_1,r],\ I\subseteq N(\mathcal{D})\}|.
	\]
	Since every subcode $\mathcal{D}$ corresponds a subspace $W$ of $\mathcal{S}_1$, and $I\subseteq N(\mathcal{D})$ means for any $f\in W$, $i\in I$, have $f(\alpha_i)=0$. Therefore, we have $P_I(x)\mid f(x)$, i.e., $I\subseteq N(\mathcal{D})$ if and only if $W\subseteq F_I$. It follows that $S_r(I)$ is exactly the number of $r$-dimensional subspaces of the $F_I$, then we have 
	\begin{equation}\label{eq:number of rdim subspace}
		S_r(I)=\left[\begin{matrix} {\rm dim}\,F_I  \\ r 
	 \end{matrix}\right]_q.
	\end{equation}
	For a fixed set $J\subseteq [n]$, define
	\[ E_r(J)=|\{\mathcal{D}'\,:\,\mathcal{D}'\in [\mathcal{C}_1,r],\ J= N(\mathcal{D}')\}|.\]
	Then  every subcode $\mathcal{D}'$ with $N(\mathcal{D}')=J\supseteq I$ is counted in $S_r(I)$. Hence, we have
	\[
	S_r(I)=\sum_{I\subseteq J}E_r(J).
	\]
	By the inclusion-exclusion principle, we have 
	\begin{equation}\label{eq:Er}
		E_r(I)=\sum_{I\subseteq J}(-1)^{|J|-|I|}S_r(J).
	\end{equation}
	It is known that $A_{n-s}^r$ denote the number of $r$-dimensional subcodes of $\mathcal{C}_1$ with common zero set has cardinality exactly $s$, from equation (\ref{eq:number of rdim subspace}) and (\ref{eq:Er}), we have
	\begin{equation}\label{eq:An-s}
		 A_{n-s}^r=\sum_{|I|=s}E_r(I)=\sum_{|I|=s}\sum_{I\subseteq J}(-1)^{|J|-|I|}\left[\begin{matrix} {\rm dim}\,F_J  \\ r 
	 \end{matrix}\right]_q
	\end{equation}
   Since a fixed $j$-subset $J$ contains $\binom{j}{s}$ $s$-subset, equation (\ref{eq:An-s}) can be simplified as
   \[
   A_{n-s}^r=\sum_{j=s}^n(-1)^{j-s}\binom{j}{s}\sum_{|J|=j}\left[\begin{matrix} {\rm dim}\,F_J  \\ r 
	 \end{matrix}\right]_q,
   \]
   This completes the proof.
\end{proof}

It follows from Lemma \ref{sswd of C} that the subcode support weight distribution is completely determined once the dimensions of the spaces
$F_I$ are known for all subsets $I\subseteq[n]$. We now
determine the dimension of  $F_I$ explicitly.

\begin{proposition}\label{dimFI}
Let $\mathcal{C}_1$ be a TRS code defined in (\ref{def:C(a,1)}). Let $I\subseteq[n]$ be a $p$-subset with $0\leq p\leq k-1+t$, and $P_I(x)=\prod_{i\in I}(x-\alpha_i)=\sum_{j=0}^{p}c_jx^{p-j}$. Let
\[
F_I
=
\{f(x)\in\mathcal S\,:\,P_I(x)\mid f(x)\}.
\]
For $m\geq1$, define
\[
q_m=\sum_{\ell=1}^{m}(-1)^\ell\sum_{\substack{i_1+\cdots+i_\ell=m}}
c_{i_1}c_{i_2}\cdots c_{i_\ell},\ q_0=1.
\]
Then
\[
\dim\, F_I
=
\begin{cases}
d_p+1,
&
\begin{array}{l}
\text{if }
c_{p-h}=c_{p-h+1}=\cdots=c_{k-h-1}=0,\\[1mm]
\displaystyle
\eta\sum_{j=0}^{\min\{p-k+h,t-1\}}
c_{k-h+j}q_{t-1-j}=1,\\[3mm]
\displaystyle
\eta\sum_{m=0}^{k-1+t-p}
q_m c_{k-1+t-b-m}=0,
\quad k\leq b\leq p-1,
\end{array}
\\[7mm]
d_p,
&
\text{otherwise},
\end{cases}
\]
where $d_p=\max\{k-p,0\}$.
\end{proposition}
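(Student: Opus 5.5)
The plan is to first show that $\dim F_I$ can take only the two values $d_p$ and $d_p+1$, and then to find exactly when the larger value occurs. This reuses the proofs of Theorem~\ref{rangeofd(C)} and Proposition~\ref{dC perp}, together with Lemma~\ref{linear representation pf V_r} applied to the $p$ points of $I$.

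\emph{Step 1 (two possible values).} Let $U=\mathrm{span}\{x^i:0\le i\le k-1,\ i\ne h\}$, so $\mathcal S_1=U\oplus\langle g\rangle$ with $g=x^h+\eta x^{k-1+t}$. The subspace $F_I\cap U$ is the kernel of the functional $f\mapsto f_h$ on $F_I$, so it has codimension at most $1$ in $F_I$. Writing $V_I$ for the multiples of $P_I$ of degree $<k$, as in Theorem~\ref{rangeofd(C)}, we have
\[
F_I\cap U\subseteq V_I,\qquad \dim V_I=\max\{k-p,0\}=d_p .
\]
Hence $\dim F_I\le d_p+1$. For the lower bound, note that $\dim F_I=k-\mathrm{rank}\,G_I$ and $\mathrm{rank}\,G_I\le \min\{p,k\}$. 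This gives $\dim F_I\ge d_p$, so $\dim F_I\in\{d_p,d_p+1\}$.

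\emph{Step 2 (the case $p\le k$).} Here $\dim F_I=d_p+1$ means exactly $\mathrm{rank}\,G_I=p-1$, i.e.\ the $p$ columns of $G$ indexed by $I$ are linearly dependent. The proof of Proposition~\ref{dC perp} characterizes this situation precisely by the first two displayed conditions, with $e_m$ in place of $q_m$. Two small checks remain.
\begin{itemize}
\item[(a)] The sequences agree, $q_m=e_m$. Both are the coefficients of the formal inverse of $1+\sum_{j\ge1}c_jz^j$: the recursion $e_i=-\sum_{j<i}e_jc_{i-j}$ is the inverse-series recursion, and expanding $\bigl(1+\sum_{j\ge1} c_jz^j\bigr)^{-1}$ as a geometric series yields exactly the formula defining $q_m$.
\item[(b)] The third condition is vacuous, since the range $k\le b\le p-1$ is empty when $p\le k$.
\end{itemize}
For $p\le h$ the first condition cannot hold, because it would force $c_0=0$. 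This matches the Vandermonde argument in Proposition~\ref{dC perp}.

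\emph{Step 3 (the case $k<p\le k-1+t$).} Now $d_p=0$, so $\dim F_I\le1$, and $F_I\ne0$ exactly when some $f=u+\lambda g$ with $u\in U$ is divisible by $P_I$.
\begin{itemize}
\item[(a)] Since $\deg u<k<p$, we cannot have $\lambda=0$. So we may take $\lambda=1$ and ask whether $x^h+\eta x^{k-1+t}\equiv -u\pmod{P_I}$ for some such $u$.
\item[(b)] Reduce $x^{k-1+t}$ modulo $P_I$ using Lemma~\ref{linear representation pf V_r} with $n$ replaced by $p$ and $t$ replaced by $k+t-p$. This writes the remainder as $\sum_{s<p}a_sx^s$ with $a_s=-\sum_j c_{p+j-s}q_{k+t-p-1-j}$.
\item[(c)] Such a $u$ exists if and only if two things hold. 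First, the coefficients of $x^b$ for $k\le b\le p-1$ in $\eta\sum_s a_s x^s$ vanish; re-indexing $m=k+t-p-1-j$ gives the third condition $\eta\sum_m q_mc_{k-1+t-b-m}=0$. Second, the coefficient of $x^h$ satisfies $1+\eta a_h=0$, which is the second condition.
\item[(d)] The first condition is vacuous here, since $p-h>k-h-1$.
\end{itemize}

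\emph{Main obstacle.} The only delicate part is the index bookkeeping in Step~3(b)--(c): matching the truncation $\min\{p-k+h,t-1\}$ and the summation range $0\le m\le k-1+t-p$ with the ranges coming from Lemma~\ref{linear representation pf V_r}, using the convention $c_i=0$ for $i\notin\{0,\dots,p\}$. I would carry this out by working with the reversed-series identity $x^{N}\equiv\sum_s a_s x^s \pmod{P_I}$ directly, rather than with the matrix form, and check it against the case $t=1$.
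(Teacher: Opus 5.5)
Your Steps 1 and 2 are sound, and they streamline the paper's argument. The lower bound $\dim F_I=k-\mathrm{rank}\,G_I\ge d_p$ removes the need for the paper's explicit construction in its case $V_I\not\subseteq U$. For $p\le k$, the column-dependence computation in the proof of Proposition~\ref{dC perp}, together with $q_m=e_m$, gives exactly the first two conditions.

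The gap is in Step 3(c). For $p>k$, the equation $1+\eta a_h=0$ is \emph{not} the second displayed condition. Substitute $j=j'+(p-k)$ into $a_h=-\sum_{j'=0}^{\min\{h,k+t-p-1\}}c_{p+j'-h}q_{k+t-p-1-j'}$. The equation becomes
\[
\eta\sum_{j=p-k}^{\min\{p-k+h,\,t-1\}}c_{k-h+j}\,q_{t-1-j}=1,
\]
whereas the statement sums from $j=0$. The extra terms $\sum_{j=0}^{p-k-1}c_{k-h+j}q_{t-1-j}$ involve the genuine coefficients $c_{k-h},\dots,c_{p-h-1}$ of $P_I$, and they do not vanish in general. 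For example, take $k=2$, $h=0$, $t=3$, $p=3$. Your condition is $1+\eta c_1c_3=0$, while the stated one is $\eta\bigl(c_2(c_1^2-c_2)-c_1c_3\bigr)=1$. Your proposed check against $t=1$ cannot detect this: $p\le k-1+t$ forces $t\ge 2$ whenever $p>k$. (The paper's own simplification of the $x^h$-coefficient condition to the truncated sum is also valid only for $p\le k$, and it skips the same point.)

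What is missing is that the third condition kills exactly these extra terms. Write $x^{k-1+t}=P_I(x)Q(x)+R(x)$ with $Q(x)=\sum_{m=0}^{k-1+t-p}q_mx^{k-1+t-p-m}$. The third condition says precisely that $\deg R<k$. Reverse the polynomials by setting $\tilde P(z)=\sum_{j}c_jz^j$ and $\tilde Q(z)=\sum_{m=0}^{k-1+t-p}q_mz^m$. Then
\[
\tilde P(z)\tilde Q(z)=1-z^{k-1+t}R(1/z)\equiv 1 \pmod{z^{t}}.
\]
Hence $\tilde Q\equiv\tilde P^{-1}\pmod{z^t}$, that is, $q_m=0$ for $k+t-p\le m\le t-1$; the converse also holds. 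These are exactly the $q_{t-1-j}$ with $0\le j\le p-k-1$. So under the third condition, your $x^h$-equation and the stated second condition coincide.

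With this lemma added, Step 3 proves the proposition as stated. Without it, your argument establishes a correct characterization of $\dim F_I$ for $p>k$, but not the one displayed in the statement.
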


\begin{proof}
Let 
\[
U=\operatorname{span}_{\mathbb F_q}\{x^i:0\leq i\leq k-1,\ i\neq h\},\ g(x)=x^h+\eta x^{k-1+t}.
\]
Then $\mathcal S=U\oplus\langle g(x)\rangle$. It is easy to verify that 
\begin{equation}\label{eq:dimofFI}
	{\rm dim}\,F_I={\rm dim}\,(F_I\cap U)+{\rm dim}\, (F_I/ (F_I\cap U)),\ {\rm and }\ {\rm dim}\, (F_I/ (F_I\cap U))\in \{0,1\}.
\end{equation}
Let
\[
V_I=\{v(x)\in\mathbb F_q[x]_{<k}\,:\,P_I(x)\mid v(x)\}.
\]
If $p<k$, We have 
\[{\rm dim}\,V_I={\rm dim}\,{\mathbb{F}_q}[x]_{<k-p}=k-p.\]
If $p\geq k$, then $V_I=\{0\}$. Therefore,
\[
\dim V_I=d_p=\max\{k-p,0\}.
\]
We first determine the dimension of $F_I\cap U$. $F_I\cap U$ denotes the set of polynomial which is divisible by $P_I(x)$ in $U$, then $F_I\cap U=V_I\cap U$. Then we have the following cases on the dimension of ${\rm dim}\,F_I$:

\noindent\textbf{Case 1.} If $V_I\subseteq U$. In this case,
\[
\dim\,(F_I\cap U)=\dim\,(V_I\cap U)=\dim\,V_I=d_p.
\]
For $p<k$ and any $v(x)\in V_I$, we have
\[v(x)=q(x)P_I(x)=\sum_{\ell=0}^{k-p-1}b_\ell x^{k-p-1-\ell}\sum_{j=0}^{p}c_jx^{p-j}=\sum_{s=0}^{k-1}\sum_{\ell=0}^{k-1-p}b_{\ell}c_{s-\ell}x^{k-1-s}.\]
If $v(x)\in U$, then 
\begin{equation}\label{eq:c}
v_h=\sum_{\ell=0}^{k-p-1}b_\ell c_{k-1-h-\ell}=0,
\end{equation}
where $v_h$ denotes the coefficient of $x^h$ in polynomial $v(x)$.  Hence $V_I\subseteq U$ if and only if
the dimension of system equation (\ref{eq:c}) is $d_p=k-p$, i.e., 
\begin{equation}\label{eq:coefficients c}
	c_{p-h}=c_{p-h+1}=\cdots=c_{k-h-1}=0.
\end{equation}
For $p\geq k$, $V_I=\{0\}$, so this condition is vacuous.

We next determine the dimension of $F_I/(F_I\cap U)$, the space $F_I$ contains an element outside $U$ if and only if there exists a $u(x)\in U$ such that
\begin{equation}\label{eq:divisible}
	P_I(x)\mid g(x)+u(x).
\end{equation}
Indeed, if $f(x)\in F_I\setminus U$, then $f(x)=u(x)+a g(x)$ for some $u(x)\in U$ and $a\in\mathbb F_q^*$. Multiplying by $a^{-1}$, we may assume that $g(x)+u(x)\in F_I$ for some $u(x)\in U$. Divide $g(x)$ directly by $P_I(x)$ and write
\[
g(x)=P_I(x)Q(x)+r(x),\ \deg\, r(x)<p, 
\]
then $P_I(x)\mid g(x)-r(x)$. From (\ref{eq:divisible}), this is equivalent to $P_I(x)\mid u(x)+r(x)$.

If $p<k$, then $\deg\,(u(x)+r(x))<k$, and hence
$u(x)+r(x)\in V_I$. Since $V_I\subseteq U$ and $u(x)\in U$, it follows that $r(x)\in U$.

If $p\geq k$, then
\[
\deg(u(x)+r(x))<p=\deg P_I(x).
\]
Therefore, we have $u(x)+r(x)=0$, and hence again $r(x)\in U$.

Conversely, if $r(x)\in U$, then by taking $u(x)=-r(x)\in U$, we obtain
\[
g(x)+u(x)=g(x)-r(x)=P_I(x)q(x)\in F_I.
\]
Moreover, $g(x)-r(x)\notin U$, since $r(x)\in U$ has degree less than $k$, whereas $g(x)$ contains the
nonzero term $\eta x^{k-1+t}$. Hence $g(x)-r(x)\in F_I\setminus U$.

Thus, $F_I$ contains an element outside $U$ if and only if $r(x)\in U$.   It suffices in the following to consider the equivalent condition for $r(x)\in U$. From ${\rm deg}\,g(x)=k-1+t$ and ${\rm deg}\,P_I(x)=p$, we have ${\rm deg}\,q(x)=k-1+t-p$, and we assume that $Q(x)=\eta\sum\limits_{m=0}^{k-1+t-p}q_mx^{k-1+t-p-m}$, where $q_0=1$, then 
\begin{equation}\label{eq:pq}
	P_I(x)Q(x)=\eta \sum_{a=0}^{p}c_ax^{p-a} \sum_{j=0}^{k-1+t-p}q_jx^{k-1+t-p-j}=\eta \sum_{a=0}^{p}\sum_{j=0}^{k-1+t-p}c_aq_jx^{k-1+t-a-j}.
\end{equation}
It is known that ${\rm deg}\, r(x)<p$, and comparing the coefficients of the term $x^{k-1+t-s}$, $1\leq s\leq k-1+t-p$ in equation (\ref{eq:pq}) and polynomial $g(x)$. Then we have
\begin{equation}\label{eq:cq}
	c_0q_s+c_1q_{s-1}+\cdots+c_{{\rm min}\{s,p\}}q_{s-{\rm min}\{s,p\}}=0,
\end{equation}
and equation (\ref{eq:cq}) has the solution
\[
q_s=\sum_{\ell=0}^s(-1)^{\ell}\sum_{i_1+\cdots+i_{\ell}=s}c_{i_1}c_{i_2}\cdots c_{i_{\ell}}.
\]
Therefore, 
\begin{equation}\label{eq:r(x)}
	r(x)=g(x)-P_I(x)q(x)=\sum_{b=0}^{p-1}\left(\delta_{b,h}-\eta \sum_{j=0}^{k-1+t-p}q_jc_{k-1+t-b-j}\right)x^b,
\end{equation}
where $\delta_{b,h}=1$ if $b=h$ and $\delta_{b,h}=0$ if $b\neq h$. Let $r_h$ denote the coefficient of term $x^h$  in $r(x)$, then $r(x)\in U$ if and only if 
\begin{equation}\label{eq:r(x) in U}
	\eta \sum_{j=0}^{k-1+t-p}q_jc_{k-1+t-h-j}=\eta \sum_{j=0}^{k-1+t-p}c_{k-1+t-h-j} \sum_{\ell=1}^j(-1)^{\ell}\sum_{i_1+\cdots+i_{\ell}=j}c_{i_1}c_{i_2}\cdots c_{i_{\ell}}=1
\end{equation}
Since $c_{k-1+t-h-j}=0$ for $j=0,1,\dots,k-1-p$, the equation (\ref{eq:r(x) in U}) can be simplified as 
\begin{equation}\label{eq:simplify r(x) in U}
	\eta \sum_{j=0}^{m}c_{k-h+j} \sum_{\ell=1}^{t-1-j}(-1)^{\ell}\sum_{i_1+\cdots+i_{\ell}=t-1-j}c_{i_1}c_{i_2}\cdots c_{i_{\ell}}=1,
\end{equation}
where $m={\rm min}\,\{p-k+h,t-1\}$. 

Moreover, if $p\geq k$, then $r(x)$ may contain terms of degrees
$k,k+1,\ldots,p-1$. Since every polynomial in $U$ has degree less than
$k$, all these coefficients must vanish. Thus, for every
\[
k\leq b\leq p-1,
\]
we require
\begin{equation}\label{eq:p geq k}
	\eta\sum_{m=0}^{k-1+t-p}q_m c_{k-1+t-b-m}=0.
\end{equation}
When $p<k$, there are no such $b$, so equation (\ref{eq:p geq k}) is vacuous.

Therefore,
\[
\dim\, F_I={\rm dim}\,(V_I\cap U)+{\rm dim}\,(F_I/ (F_I\cap U) )=d_p+1
\]
whenever (\ref{eq:coefficients c}), (\ref{eq:simplify r(x) in U})  and (\ref{eq:p geq k})  hold, we have $\dim\, F_I=d_p$ otherwise.

\noindent\textbf{Case 2.} If $V_I\not\subseteq U$. Since $V_I=\{0\}$ for $p\ge k$, this case can occur only when $p<k$. Hence, we have $d_p=k-p$.  For $p<k$, every polynomial $v(x)\in V_I$ can be uniquely written as
$v(x)=P_I(x)q(x)$,
where
\[
q(x)=\sum_{\ell=0}^{k-p-1}
b_\ell x^{k-p-1-\ell}.
\]
Then the coefficient $v_h$ of $x^h$ in $v(x)$ is
\[
v_h=\sum_{\ell=0}^{k-p-1}b_\ell c_{k-1-h-\ell}.
\]
Hence $v(x)\in V_I\cap U$ if and only if
\begin{equation}\label{eq:solution}
	\sum_{\ell=0}^{k-p-1}
b_\ell c_{k-1-h-\ell}=0.
\end{equation}
Since $V_I\not\subseteq U$, the coefficients $c_{k-1-h},c_{k-2-h},\ldots,c_{p-h}$ are not all zero. Therefore, the above equation (\ref{eq:solution}) is a nontrivial homogeneous linear equation in the $k-p$ variables $b_0,b_1,\ldots,b_{k-p-1}$. Its solution space has dimension $k-p-1$. Then we obtain
\[
\dim(V_I\cap U)=k-p-1.
\]
It remains to show that the space $F_I$ contains an element outside $U$. Since  the coefficients $c_{k-1-h},c_{k-2-h},\ldots,c_{p-h}$ are not all zero. Hence there exists some
$\ell_0\in\{0,1,\ldots,k-p-1\}$ such that $c_{k-1-h-\ell_0}\neq 0$. Let $g(x)=P_I(x)Q(x)+r(x)$,
and 
\[
b_{\ell_0}=\frac{r_h}
{c_{k-1-h-\ell_0}},\  b_\ell=0\ \text{for } \ell\neq \ell_0,
\]
where $r_h$ denotes the coefficient of $x^h$ in polynomial $r(x)$. Then for  $v(x)\in V_I$, we have 
\[
v_h=\sum_{\ell=0}^{k-p-1}b_\ell c_{k-1-h-\ell}=r_h.
\]
Now set $u(x)=v(x)-r(x)$. Since $p<k$, we have
$\deg\, v(x)<k\ \text{and}\ \deg\, r(x)<p<k,$
and hence $\deg \, u(x)<k$.
Moreover,
\[u_h=
v_h-r_h=0,
\]
Thus $u(x)\in U.$ Then we can  obtain
\[
\begin{aligned}
g(x)+u(x)
&=P_I(x)Q(x)+r(x)+v(x)-r(x)\\
&=P_I(x)Q(x)+v(x).
\end{aligned}
\]
Since $v(x)\in V_I$, we have $P_I(x)\mid v(x)$. Therefore,
$P_I(x)\mid g(x)+u(x)$, and hence
\[
g(x)+u(x)\in F_I,\ {\rm and}\ g(x)+u(x)\notin U.
\]
Therefore, $F_I$ contains an element outside $U$, it follows that
\[
\dim\, F_I={\rm dim}\,(V_I\cap U)+{\rm dim}\,(F_I/(F_I\cap U) )=(k-p-1)+1=k-p=d_p.
\]
This completes the proof.
\end{proof}

\begin{example}
Let $\mathcal{C}_1$ be a TRS code defined in (\ref{def:C(a,1)}). Let $q=17$, $n=11$, $k=6$, $t=3$, $h=2$, $\eta=5$,  take $\bm{\alpha}=(0,1,2,4,5,7,8,10,12,14,16)$. Let $I=\{3,4,5,8,11\}$, then
\[
P_I(x)=\prod_{i\in I}(x-\alpha_i)=x^5+14x^4+8x^3+14x+9.
\]
Moreover, $p-h=5-2=3$, $k-h-1=6-2-1=3$, we have $c_3=0$.
Therefore, the first condition  is satisfied. And
\[
\eta\sum_{j=0}^{1}c_{k-h+j}q_{t-1-j}=5\left(c_4q_2+c_5q_1\right)=5\left(14\cdot1+9\cdot3\right)=5\cdot41=1.
\]
Therefore, the second condition is also satisfied. At last, since $p=5<k=6$, there is no integer $b$ satisfying $k\le b\le p-1$. Hence the third condition in Proposition \ref{dimFI} is vacuous. With the help of SageMath, we have $\dim\, F_I=2$, it is consistent with Proposition \ref{dimFI}.
\end{example}

We can therefore obtain the complete subcode support weight distribution of $\mathcal{C}_1$, as stated in the following theorem.

\begin{theorem}\label{thm:sswd}
Let $\mathcal{C}_1$ be a TRS code defined in (\ref{def:C(a,1)}). For $1\leq r\leq k$, let
$A_w^{(r)}(\mathcal{C}(\boldsymbol{\alpha},\mathbf{1}))$ denote the $r$-th subcode support weight distribution of $\C_1$. Define
\[
M_p=|\left\{I\subseteq[n]\,:\,|I|=p,\,\dim\, F_I=d_p+1\right\}|,
\]
where $d_p={\rm max}\{k-p,0\}$. Then, for every $0\leq s\leq n$,
\[
A_{n-s}^{(r)}(\C_1)=\sum_{p=s}^{k-1+t}(-1)^{p-s}\binom{p}{s}
\left(\left(\binom{n}{p}-M_p\right)
\begin{bmatrix}
d_p\\ r
\end{bmatrix}_q
+M_p\begin{bmatrix}
d_p+1\\ r
\end{bmatrix}_q
\right),
\]
\end{theorem}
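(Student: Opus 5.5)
The plan is to combine Lemma~\ref{sswd of C} with Proposition~\ref{dimFI}. Lemma~\ref{sswd of C} gives, for $0\le s\le n$,
\[
A_{n-s}^{(r)}(\C_1)=\sum_{p=s}^{n}(-1)^{p-s}\binom{p}{s}\sum_{|I|=p}\begin{bmatrix}\dim F_I\\ r\end{bmatrix}_q .
\]
So it suffices to evaluate the inner sum $\sum_{|I|=p}\left[\begin{smallmatrix}\dim F_I\\ r\end{smallmatrix}\right]_q$ for each $p$. We then show that the terms with $p>k-1+t$ vanish, which truncates the outer sum at $k-1+t$ as in the statement.

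First, for $0\le p\le k-1+t$, Proposition~\ref{dimFI} says that $\dim F_I\in\{d_p,d_p+1\}$ for every $p$-subset $I$. The value $d_p+1$ occurs exactly for the subsets satisfying the three listed coefficient conditions, and by definition there are $M_p$ of these. The remaining $\binom{n}{p}-M_p$ subsets have $\dim F_I=d_p$. Splitting the inner sum by these two classes gives
\[
\sum_{|I|=p}\begin{bmatrix}\dim F_I\\ r\end{bmatrix}_q=\Bigl(\binom{n}{p}-M_p\Bigr)\begin{bmatrix}d_p\\ r\end{bmatrix}_q+M_p\begin{bmatrix}d_p+1\\ r\end{bmatrix}_q .
\]
This is exactly the summand in the statement.

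Second, the range $p>k-1+t$ needs to be handled, since Proposition~\ref{dimFI} does not cover it. Every nonzero $f\in\mathcal{S}_1$ has degree at most $k-1+t$. If $|I|=p\ge k+t$, then $P_I(x)$ has degree $p>\deg f$, so $P_I(x)\mid f$ forces $f=0$. Hence $F_I=\{0\}$, and $\left[\begin{smallmatrix}0\\ r\end{smallmatrix}\right]_q=0$ for $r\ge1$. These terms therefore contribute nothing, and the outer sum can be cut at $p=k-1+t$. If $s>k-1+t$, the resulting sum is empty, giving $A_{n-s}^{(r)}=0$. This is consistent with the fact that no $r$-dimensional subcode with $r\ge 1$ has more than $k-1+t$ common zeros.

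Third, the convention $\left[\begin{smallmatrix}a\\ r\end{smallmatrix}\right]_q=0$ for $r>a$ takes care of all small cases automatically. For example, when $p\ge k$ we have $d_p=0$, so only the $M_p$ term can survive, and it does so only when $r=1$. No separate case analysis is needed.

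I do not expect a genuine obstacle. The only point requiring care is the boundary behaviour: confirming that Proposition~\ref{dimFI} really applies throughout $0\le p\le k-1+t$, including $p=0$, where $F_\emptyset=\mathcal{S}_1$ has dimension $k=d_0$, and the endpoint $p=k-1+t$. The substance of the theorem then reduces to Lemma~\ref{sswd of C} and Proposition~\ref{dimFI}.
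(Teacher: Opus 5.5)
Your proposal is correct and follows essentially the same route as the paper's proof: substitute into Lemma~\ref{sswd of C}, split the inner sum over $p$-subsets using the dichotomy $\dim F_I\in\{d_p,d_p+1\}$ from Proposition~\ref{dimFI}, and drop the terms with $p>k-1+t$ because the degree bound forces $F_I=\{0\}$ there. Your remarks on the boundary cases ($p=0$, $s>k-1+t$, and the vanishing Gaussian binomials when $d_p=0$) are accurate and go slightly beyond what the paper checks, but they do not change the argument.
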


\begin{proof}
By Lemma~\ref{sswd of C}, we have
\[
A_{n-s}^{(r)}(\C_1)=\sum_{p=s}^{n}(-1)^{p-s}\binom{p}{s}
\sum_{\substack{|I|=p}}
\begin{bmatrix}
\dim\, F_I\\ r
\end{bmatrix}_q.
\]
By Proposition~\ref{dimFI}, for every $p$-subset $I$, we have $\dim\, F_I\in\{d_p,d_p+1\}$, and exactly $M_p$ such subsets satisfy $\dim\, F_I=d_p+1$. Hence
\[
\sum_{\substack{|I|=p}}
\begin{bmatrix}
\dim\mathcal F_I\\ r
\end{bmatrix}_q
=\left(\binom{n}{p}-M_p\right)
\begin{bmatrix}
d_p\\ r
\end{bmatrix}_q
+
M_p
\begin{bmatrix}
d_p+1\\ r
\end{bmatrix}_q.
\]
Moreover, every nonzero polynomial in $\mathcal S$ has degree at most
$k-1+t$. Thus $F_I=\{0\}$ whenever $|I|>k-1+t$, and hence
all terms with $p>k-1+t$ vanish. Substituting the above identity into Lemma~\ref{sswd of C} gives the desired result.
\end{proof}

\begin{example}
Let $\mathcal{C}_1$ be a TRS code defined in (\ref{def:C(a,1)}). Let $q=7$, $n=6$, $k=3$, $t=2$, $h=1$, $\eta=3$, and take $\bm{\alpha}=(0,1,2,3,4,6)$.  For a $p$-subset $I\subseteq[n]$, let
\[
F_I=\{f(x)\in S:P_I(x)\mid f(x)\},
\qquad
d_p=\max\{k-p,0\},
\]
and define $M_p=|\{I\subseteq[n]:|I|=p,\ \dim F_I=d_p+1\}|$. With the help of SageMath, we have $M_0=M_1=M_2=M_4=0,\ M_3=2.$
\begin{table}[htbp]
\centering
\caption{The dimensions of $F_I$ and the values of $M_p$}
\label{tab:Mp}
\begin{tabular}{cccc}
\toprule
$p$ & $d_p$ & Distribution of $\dim F_I$ & $M_p$ \\
\midrule
$0$ & $3$ & $\dim F_I=3$ for $1$ subset & $0$ \\
$1$ & $2$ & $\dim F_I=2$ for $6$ subsets & $0$ \\
$2$ & $1$ & $\dim F_I=1$ for $15$ subsets & $0$ \\
$3$ & $0$ &
\begin{tabular}{c}
$\dim F_I=0$ for $18$ subsets,\\
$\dim F_I=1$ for $2$ subsets
\end{tabular}
& $2$ \\
$4$ & $0$ & $\dim F_I=0$ for $15$ subsets & $0$ \\
\bottomrule
\end{tabular}
\end{table}
For $r=1$, Substituting the above values of $M_p$ into Theorem \ref{thm:sswd} gives exactly distribution:
\[
A_3^{(1)}(\C_1)=2,\ A_4^{(1)}(\C_1)=9, A_5^{(1)}(\C_1)=24, A_6^{(1)}(\C_1)=22.
\]
For $r=2$, we have
\[
A_5^{(2)}(\C_1)=6,\ A_6^{(2)}(\C_1)=51.
\]
Finally, for $r=3$, we have
\[
A_6^{(3)}(\C_1)=1.
\]
These results are consistent with the formula in Theorem \ref{thm:sswd}.
\end{example}

\section{Concluding Remarks}

In this paper, we investigated the generalized Hamming weights of twisted Reed--Solomon codes with one and two twists. For TRS codes with one twist, we determined all possible generalized Hamming weights and obtained a complete characterization of the weight hierarchy. In particular, the higher generalized Hamming weights are determined by the minimum distance of the dual code, while explicit criteria were given for the minimum distances of both the code and its dual. These results also recover several known MDS, NMDS \cite{Huang2022},  and $s$-gap AMDS cases \cite{Yang2026}. We further extended the analysis to TRS codes with two twists and determined their complete generalized Hamming weight hierarchy. In this case, the higher generalized Hamming weights are determined by the first two generalized Hamming weights of the dual code, and explicit rank conditions were obtained for the relevant generalized Hamming weights
of both the code and its dual. In addition, for TRS codes with one twist, we derived an explicit formula for the complete subcode support weight distribution.

It would be interesting to extend these results to TRS codes with more
general twist structures and to determine their generalized Hamming
weight hierarchies and subcode support weight distributions. In
particular, obtaining explicit results for codes with an arbitrary
number of twists may provide a more general understanding of the
support structure of twisted Reed--Solomon codes.

\end{document}